\providecommand{\U}[1]{\protect\rule{.1in}{.1in}}
\newtheorem{theorem}{Theorem}
\theoremstyle{plain}
\newtheorem{acknowledgement}{Acknowledgement}
\newtheorem{conjecture}{Conjecture}
\newtheorem{definition}{Definition}
\newtheorem{example}{Example}
\newtheorem{proposition}{Proposition}
\newtheorem{remark}{Remark}
\numberwithin{equation}{section}
\begin{document}
\title[Quantum Knots]{Quantum Knots and Mosaics}
\author{Samuel J. Lomonaco}
\address{University of Maryland Baltimore County (UMBC)\\
Baltimore, MD \ 21250 \ \ USA}
\email{lomonaco@umbc.edu}
\urladdr{http://www.csee.umbc.edu/\symbol{126}lomonaco}
\author{Louis H. Kauffman}
\address{University of Illinois at Chicago\\
Chicago, IL \ 60607-7045 \ \ USA}
\email{kauffman@uic.edu}
\urladdr{http://www.math.uic.edu/\symbol{126}kauffman}
\date{February 24, 2008}
\subjclass[2000]{Primary 81P68, 57M25, 81P15, 57M27; Secondary 20C35}
\keywords{Quantum Knots, Knots, Knot Theory, Quantum Computation, Quantum Algorithms}

\begin{abstract}
In this paper, we give a precise and workable definition of a \textbf{quantum
knot system}, the states of which are called \textbf{quantum knots}. This
definition can be viewed as a blueprint for the construction of an actual
physical quantum system.

Moreover, this definition of a quantum knot system is intended to represent
the \textquotedblleft quantum embodiment" of a closed knotted physical piece
of rope. A quantum knot, as a state of this system, represents the state of
such a knotted closed piece of rope, i.e., the particular spatial
configuration of the knot tied in the rope. Associated with a quantum knot
system is a group of unitary transformations, called the \textbf{ambient
group}, which represents all possible ways of moving the rope around (without
cutting the rope, and without letting the rope pass through itself.)

Of course, unlike a classical closed piece of rope, a quantum knot can exhibit
non-classical behavior, such as quantum superposition and quantum
entanglement. This raises some interesting and puzzling questions about the
relation between topological and quantum entanglement.

The \textbf{knot type} of a quantum knot is simply the orbit of the quantum
knot under the action of the ambient group. We investigate quantum observables
which are invariants of quantum knot type. We also study the Hamiltonians
associated with the generators of the ambient group, and briefly look at the
quantum tunneling of overcrossings into undercrossings.

A basic building block in this paper is a \textbf{mosaic system} which is a
formal (rewriting) system of symbol strings. We conjecture that this formal
system fully captures in an axiomatic way all of the properties of tame knot theory.

\end{abstract}
\maketitle
\tableofcontents

\section{Introduction}

\bigskip

The objective of this paper is to set the foundation for a research program on
quantum knots\footnote{A PowerPoint presentation of this paper can be found at
\par
\qquad http://www.csee.umbc.edu/\symbol{126}lomonaco/Lextures.html}.

\bigskip

\textit{For simplicity of exposition, we will throughout this paper frequently
use the term "knot" to mean either a knot or a link.}

\bigskip

In part 1 of this paper, we create a formal system $\left(  \mathbb{K}%
,\mathbb{A}\right)  $ consisting of\medskip

\begin{itemize}
\item[\textbf{1)}] A graded set $\mathbb{K}$ of symbol strings, called
\textbf{knot mosaics}, and

\item[\textbf{2)}] A graded subgroup $\mathbb{A}$, called the \textbf{knot
mosaic ambient group}, of the group of all permutations of the set of knot
mosaics $\mathbb{K}$.\bigskip
\end{itemize}

We conjecture that the formal system $\left(  \mathbb{K},\mathbb{A}\right)
$\ fully captures the entire structure of tame knot theory.

\bigskip

Three examples of knot mosaics are given below:\bigskip

\hspace{-0.7in}$%

\ \ \
\]

\bigskip

In part 2, the formal system $\left(  \mathbb{K},\mathbb{A}\right)  $ is used
to define a \textbf{quantum knot system} $Q\left(  \mathcal{K},\mathbb{A}%
\right)  $, which is a nested sequence of quantum systems consisting of

\begin{itemize}
\item[\textbf{1)}] A graded Hilbert space $\mathcal{K}$, called the
\textbf{quantum knot state space}, defined by an orthonormal basis labelled by
and in one-to-one correspondence with the set of knot mosaics $\mathbb{K}$, and

\item[\textbf{2)}] An\textbf{ }associated graded control group, also called
the \textbf{ambient group}, and also denoted by $\mathbb{A}$ . The ambient
group $\mathbb{A}$ is a discrete subgroup of the group $U\left(
\mathcal{K}\right)  $ of all unitary transformations on $\mathcal{K}$.
\end{itemize}

\bigskip

A \textbf{quantum knot} is simply a state of the quantum knot system, i.e., an
element of the quantum knot state space $\mathcal{K}$. \ \textbf{Quantum knot
type} is defined as the orbit of the quantum knot under the action of the
ambient group $\mathbb{A}$.

\bigskip

Once having defined what is meant by a quantum knot, we then proceed to find
the Hamiltonians associated with the generators of the ambient group
$\mathbb{A}$, and to study the quantum dynamics induced by Schroedinger's
equation. \ We move on to discuss other Hamiltonians, such as for example
those associated with overcrossings quantum tunnelling into undercrossings.
\ \ We also study a class of quantum observables which are quantum knot invariants.

\bigskip

We should mention that, if one selects a fixed upper bound $n$ on knot
complexity (i.e., a fixed upper bound on the edge length $n$ of the knot
$n$-mosaics under consideration), then a quantum knot system $Q\left(
\mathcal{K}^{(n)},\mathbb{A}(n)\right)  $ is a blueprint for the construction
of an actual physical quantum system. \ Quantum knots could possibly be used
to simulate and to predict the behavior of quantum vortices that appear both
in liquid helium II and in the Bose-Einstein condensate \ They might also
possibly be a mathematical model for gaining some insight into the charge
quantification that is manifest in the fractional quantum Hall effect.

\bigskip

In the conclusion, we list a number of open questions and possible future
research directions. \ A complete table of all knot $3$-mosaics is given in
Appendix A. \ Finally, in Appendix B, we briefly outline the theory of and the
construction of \textbf{oriented knot mosaics} and\textbf{ oriented quantum
knots}. \ 

\bigskip

\textbf{The motivating intuition for the above mathematical construct
}$Q\left(  \mathcal{K},\mathbb{A}\right)  $ \textbf{is as follows:} \ A
quantum knot system is intended to represent the "quantum embodiment" of a
closed knotted physical piece of rope. \ A quantum knot is meant to represent
the state of the knotted rope, i.e., the particular spatial configuration of
the knot tied in the rope. The elements of the the ambient unitary group are
intended to represent all possible ways of moving the rope around (without
cutting the rope, and without letting it pass through itself.) \ The quantum
system is necessarily a nested set of quantum systems because one must use
longer and longer pieces of rope to tie knots of greater and greater complexity.

\bigskip

Of course, unlike classical knotted pieces of rope, quantum knots can also
represent the quantum superpositions (and also the quantum entanglements) of a
number of knotted pieces of rope. \ This raises an interesting question about
the relation between topological entanglement and quantum entanglement.\ 

\bigskip

\section{Part 1: Knot Mosaics}

\bigskip

\subsection{Unoriented knot mosaics}

\bigskip

Let $\mathbb{T}^{(u)}$ denote the set of the following 11 symbols \bigskip%

\[%
{\includegraphics[
height=0.3269in,
width=0.3269in
]%
{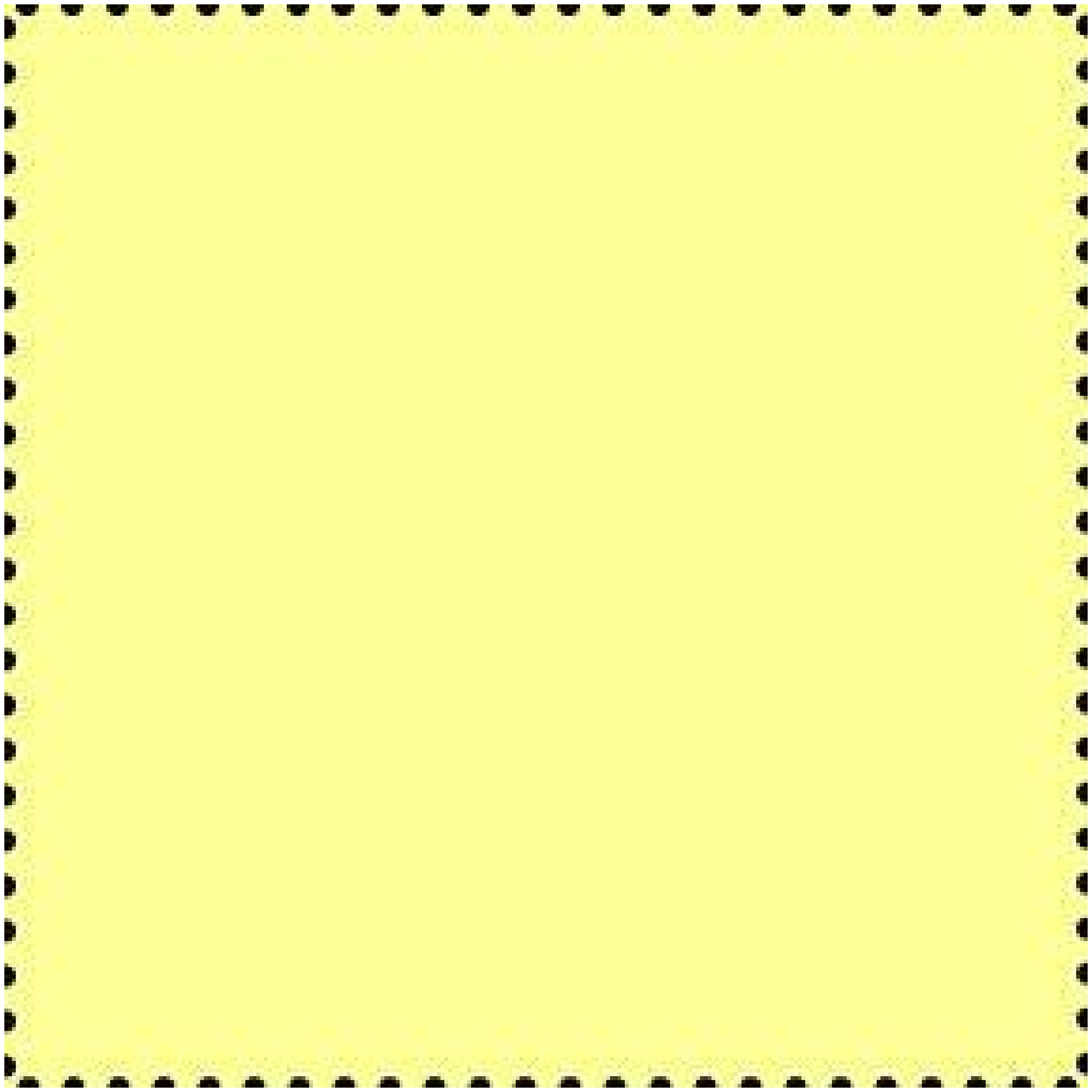}%
}%
\qquad\ \ \ \
{\includegraphics[
height=0.3269in,
width=0.3269in
]%
{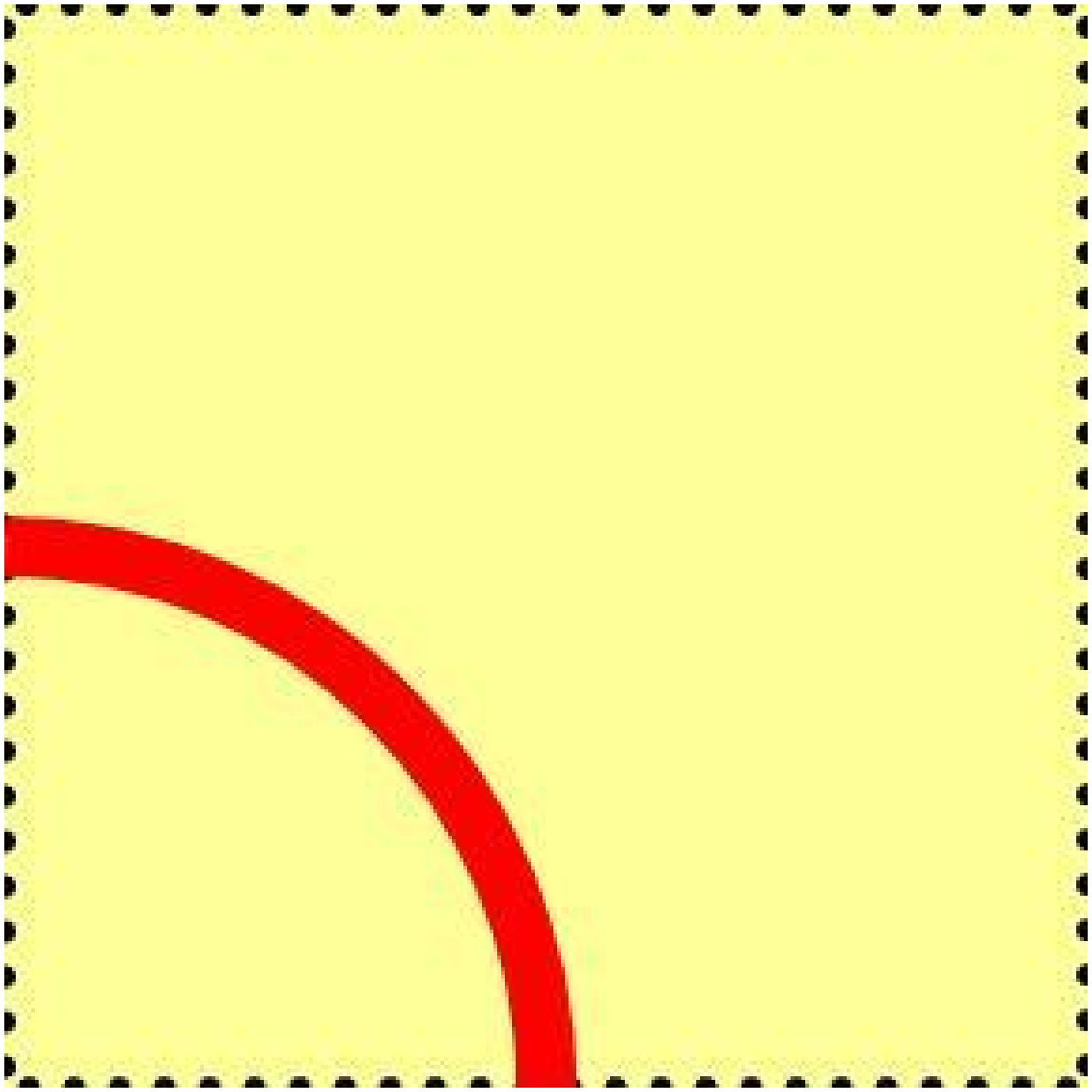}%
}%
\
{\includegraphics[
height=0.3269in,
width=0.3269in
]%
{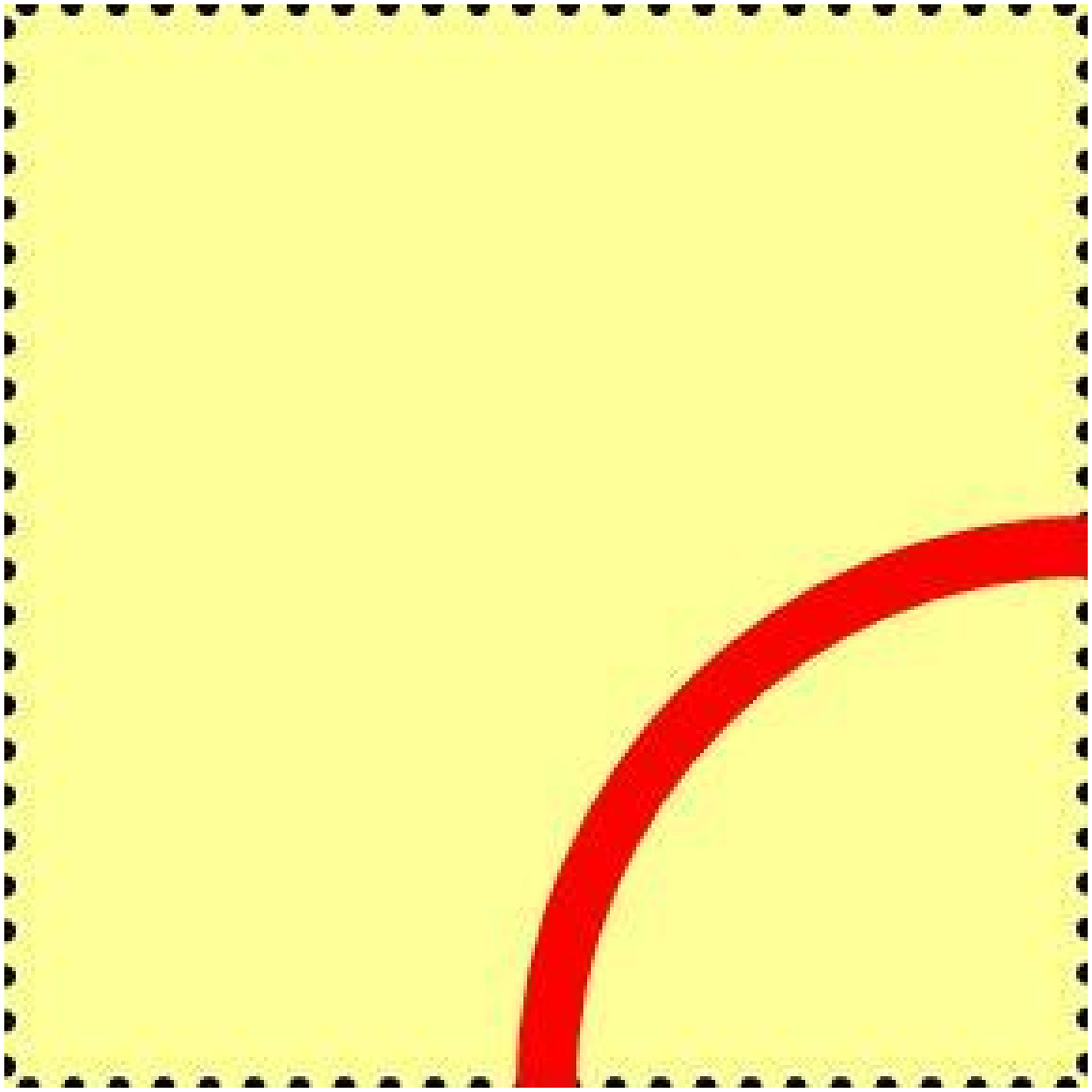}%
}%
\
{\includegraphics[
height=0.3269in,
width=0.3269in
]%
{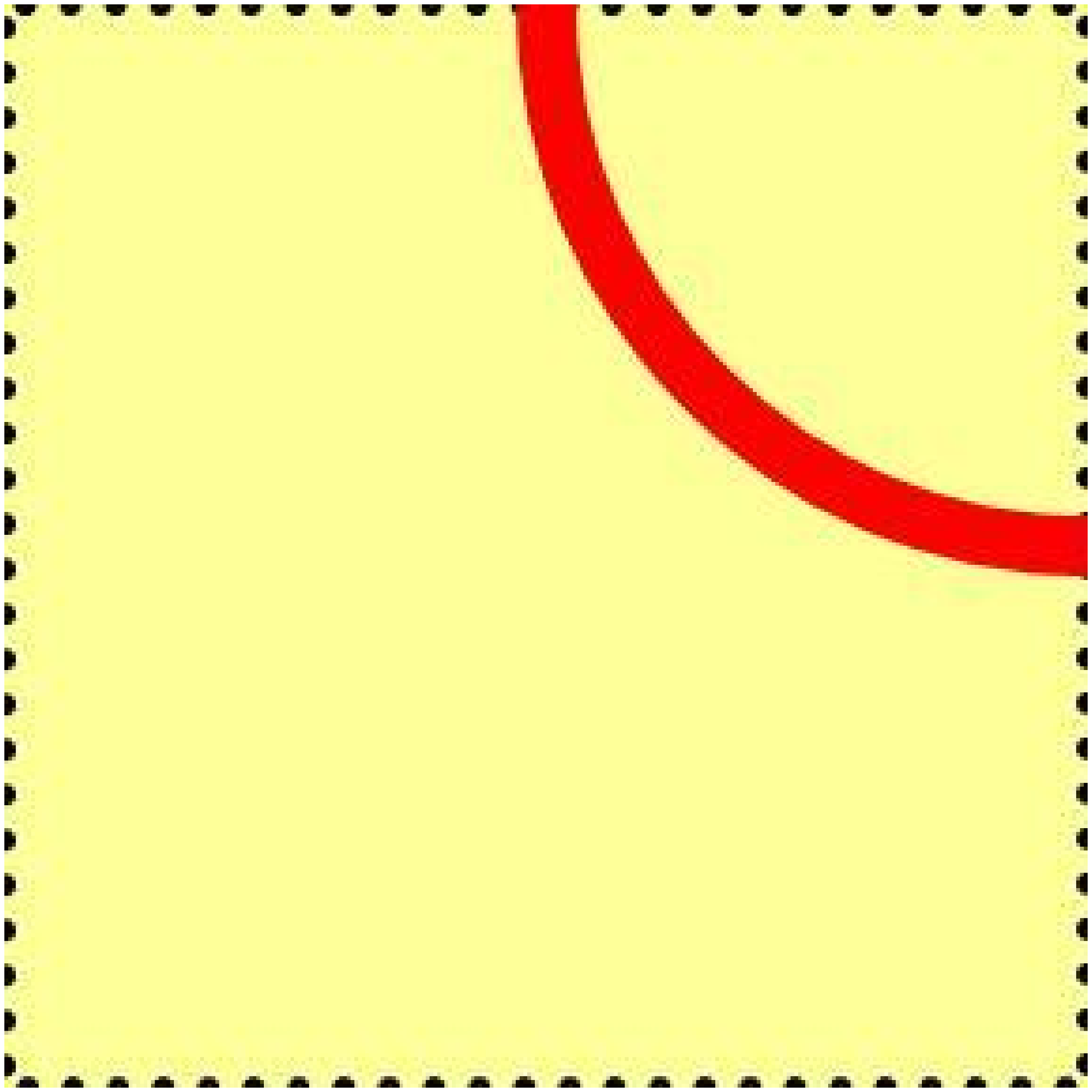}%
}%
\
{\includegraphics[
height=0.3269in,
width=0.3269in
]%
{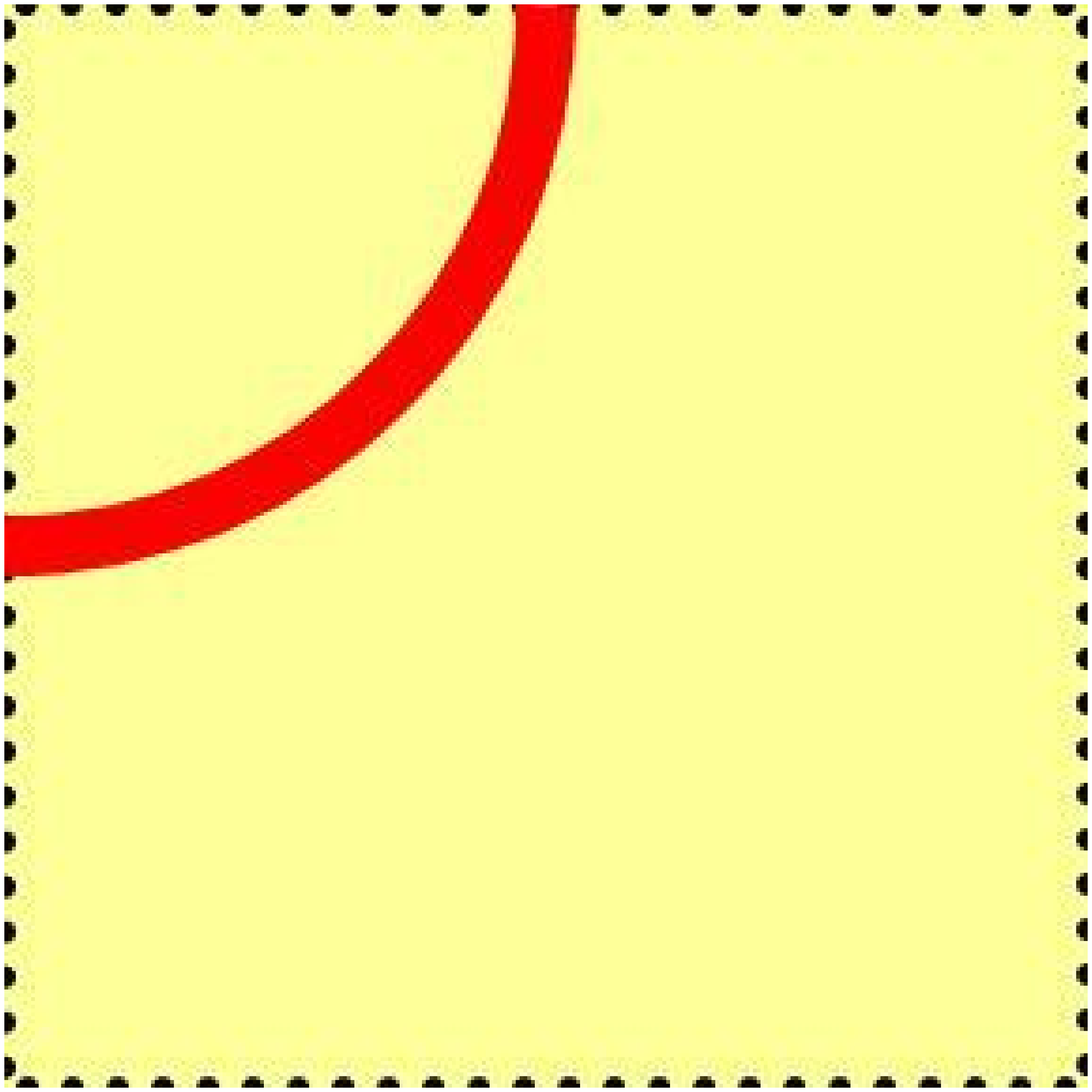}%
}%
\qquad\ \ \ \
{\includegraphics[
height=0.3269in,
width=0.3269in
]%
{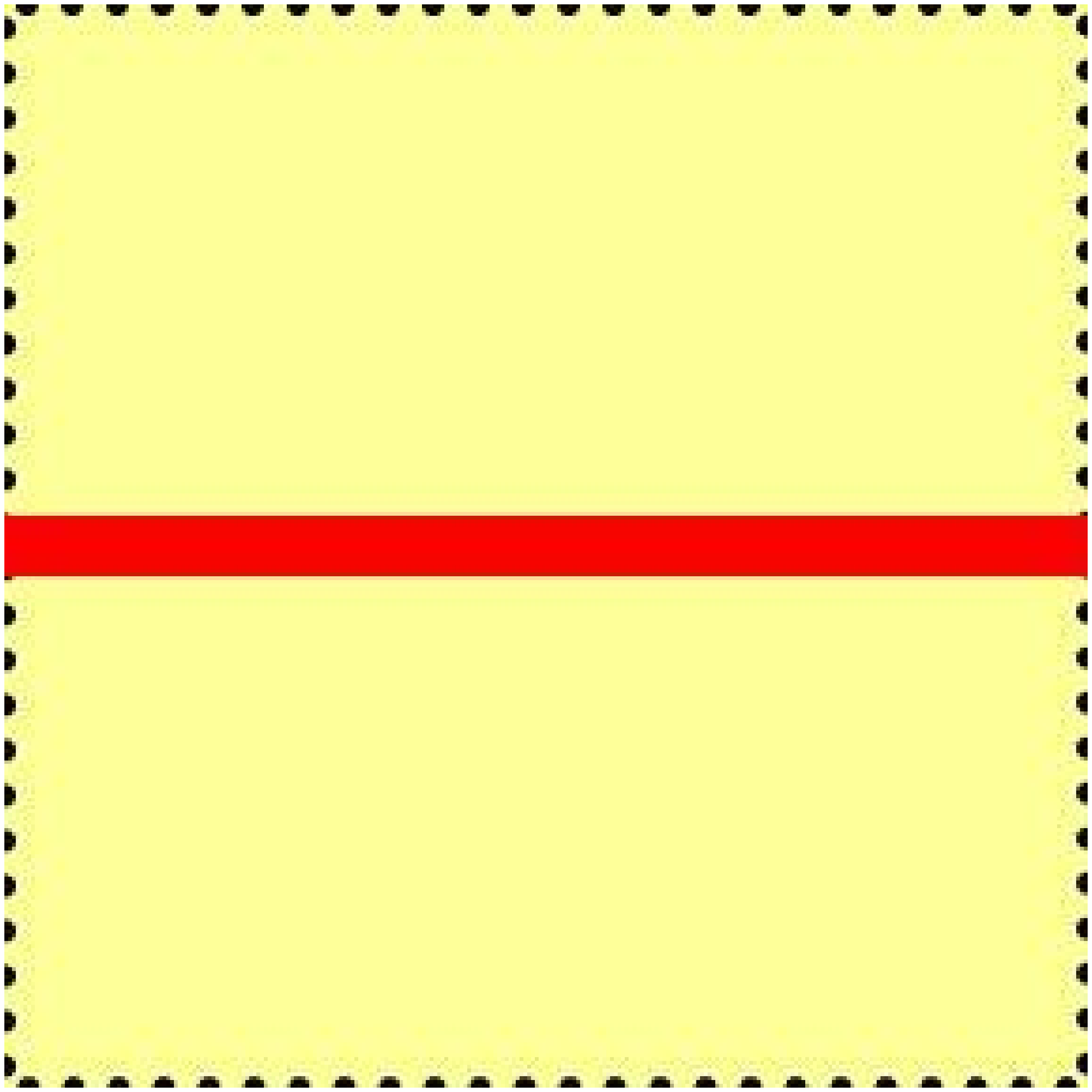}%
}%
\
{\includegraphics[
height=0.3269in,
width=0.3269in
]%
{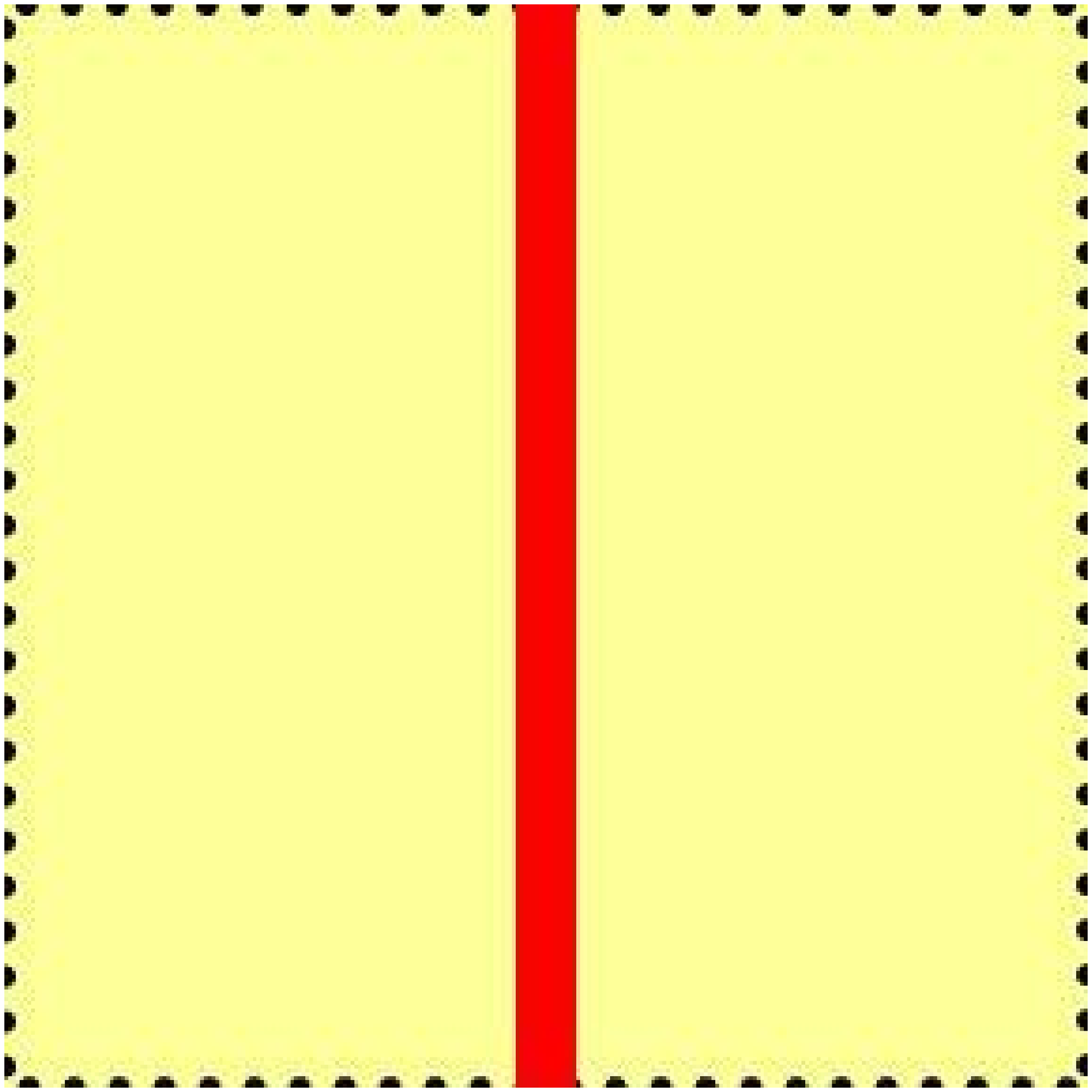}%
}%
\qquad\ \ \ \
{\includegraphics[
height=0.3269in,
width=0.3269in
]%
{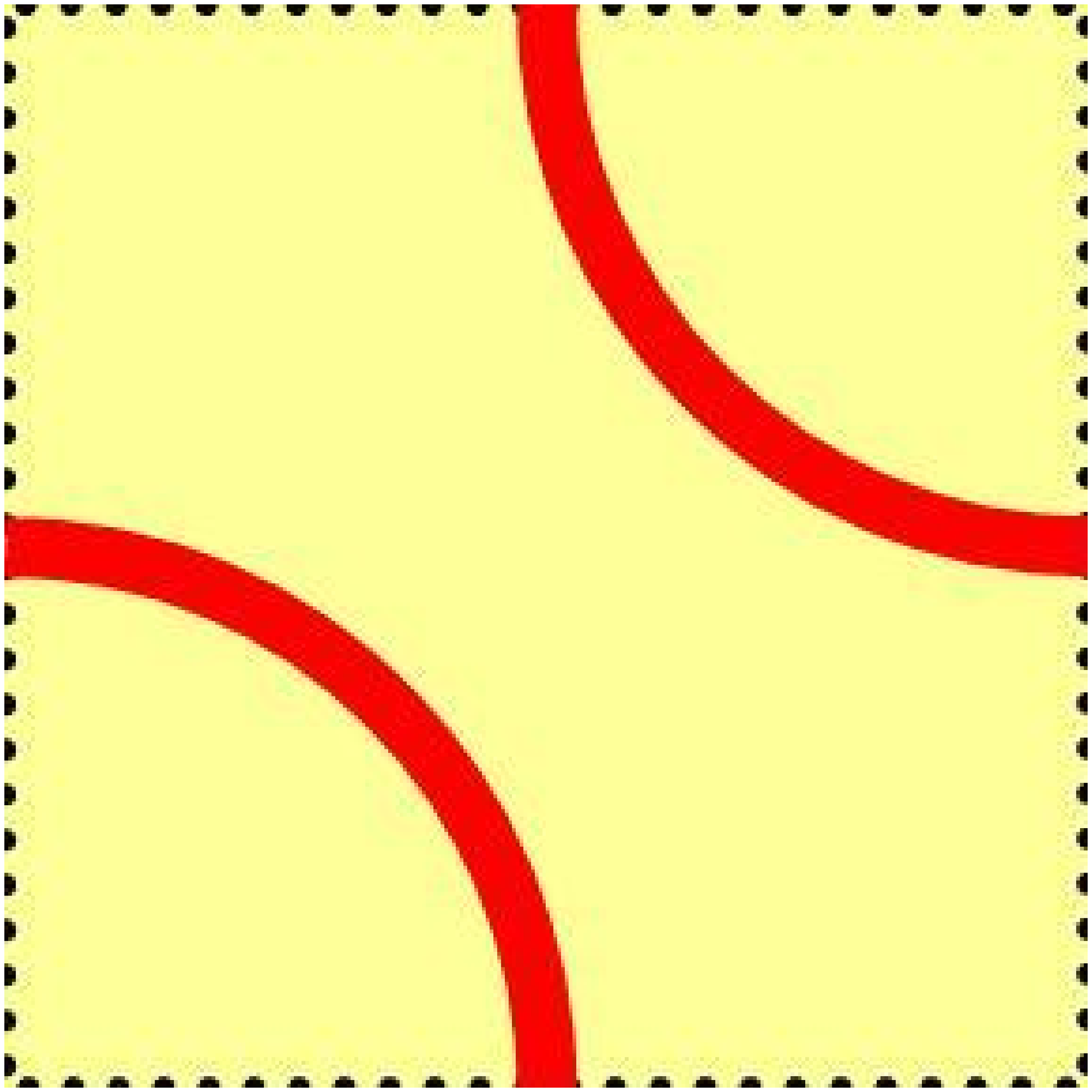}%
}%
\
{\includegraphics[
height=0.3269in,
width=0.3269in
]%
{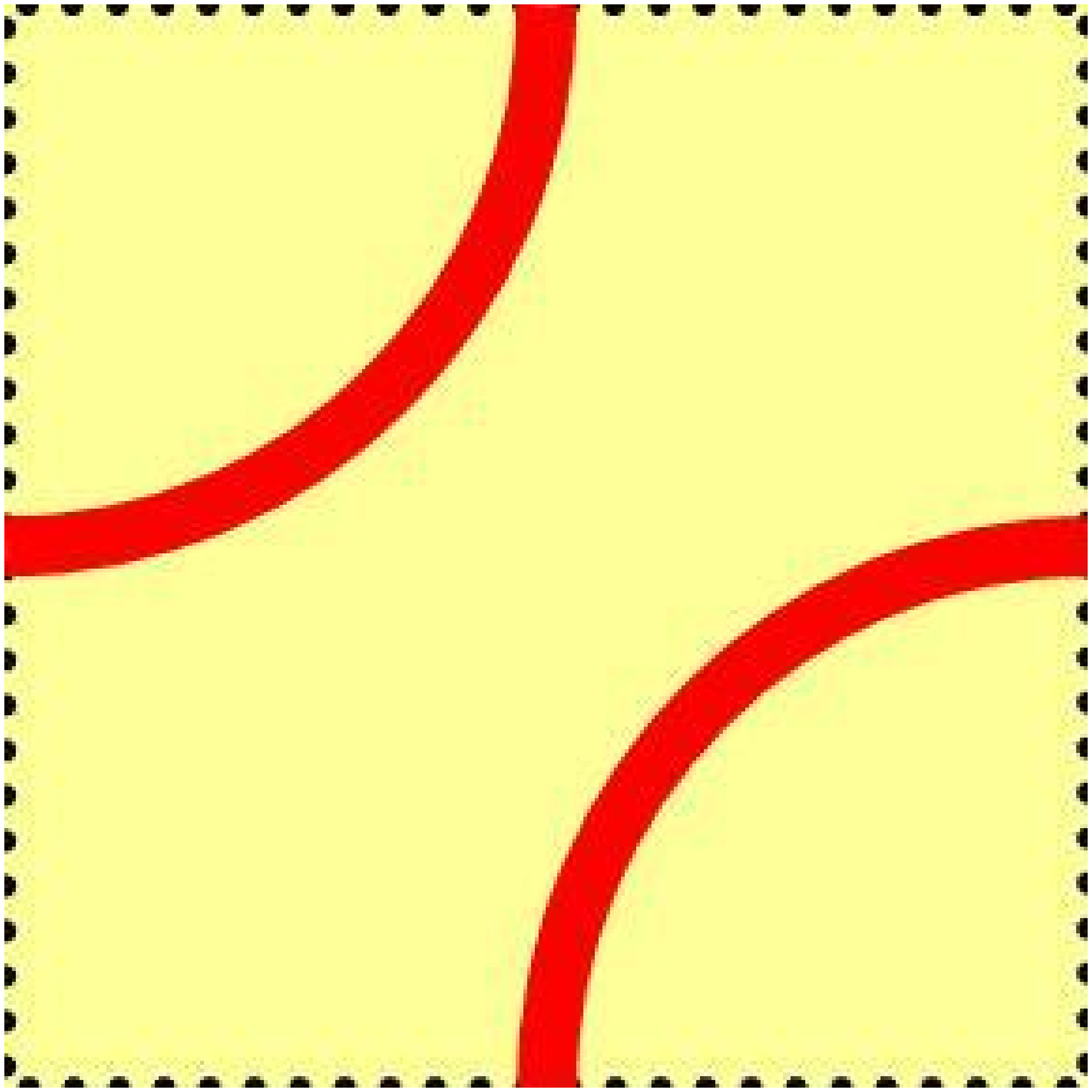}%
}%
\qquad\ \ \ \
{\includegraphics[
height=0.3269in,
width=0.3269in
]%
{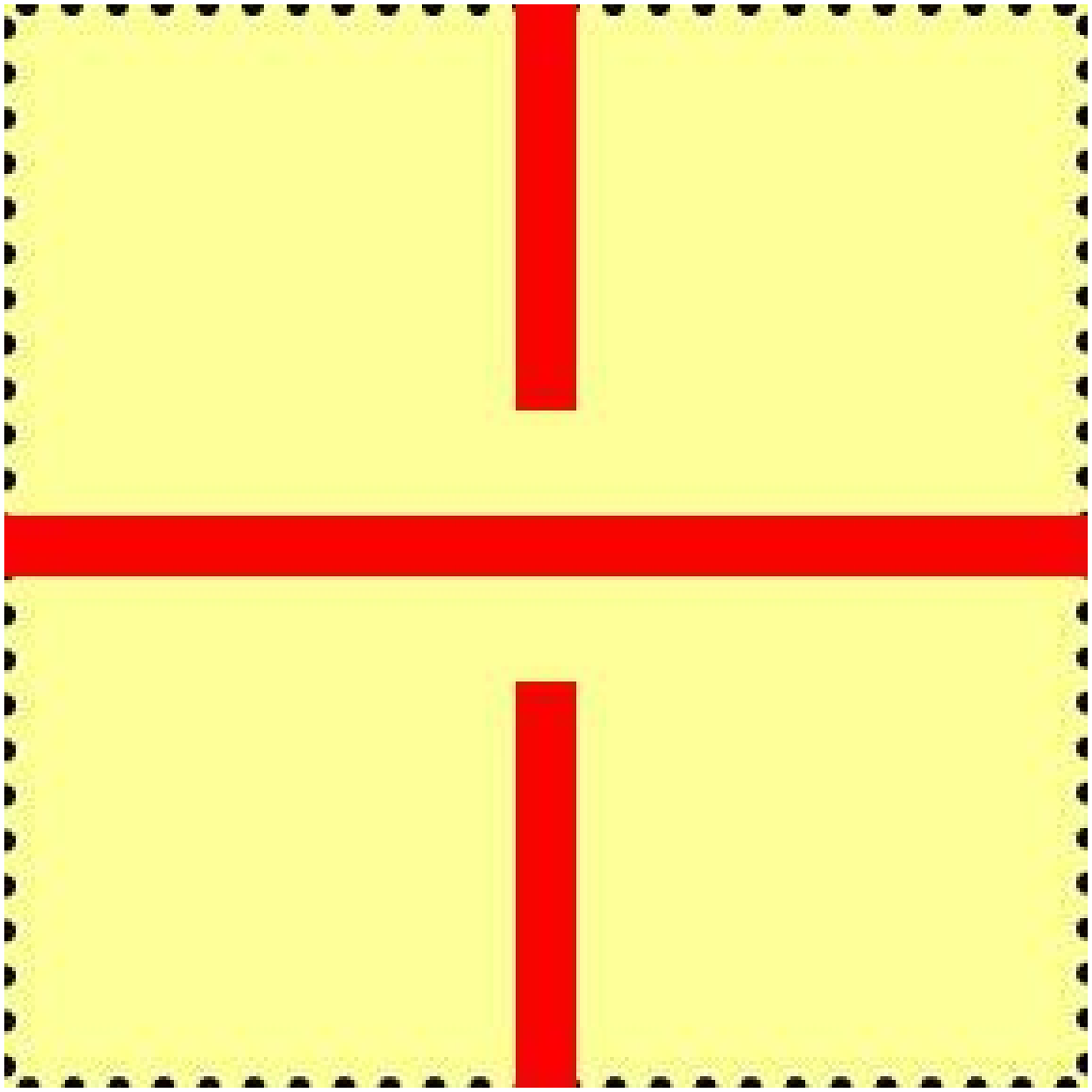}%
}%
\
{\includegraphics[
height=0.3269in,
width=0.3269in
]%
{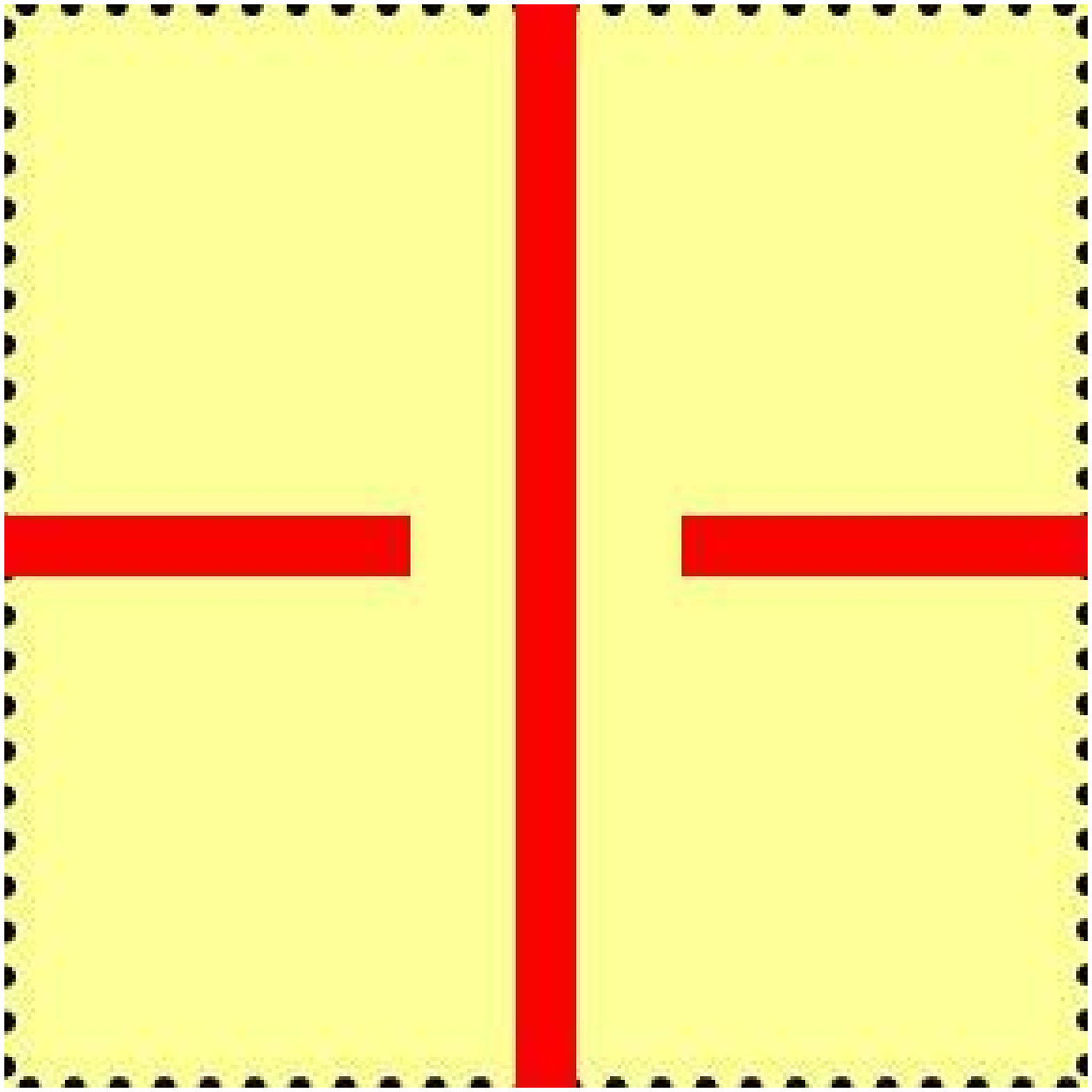}%
}%
\]

\bigskip

\noindent called (\textbf{unoriented}) \textbf{tiles}. \ We often will also
denote these tiles respectively by the following symbols%
\[%
\begin{array}
[c]{ccccccccccccccc}%
T_{0}^{(u)} & \quad & T_{1}^{(u)} & T_{2}^{(u)} & T_{3}^{(u)} & T_{4}^{(u)} &
\quad & T_{5}^{(u)} & T_{6}^{(u)} & \quad & T_{7}^{(u)} & T_{8}^{(u)} & \quad
& T_{9}^{(u)} & T_{10}^{(u)}%
\end{array}
\text{ .}%
\]
Moreover, we will frequently omit the superscript `$(u)$' (standing for
`unoriented') when it can be understood from context.

\bigskip

\begin{remark}
Please note that up to rotation there are exactly 5 distinct unoriented tiles.
\ The above unoriented tiles are grouped according to rotational equivalence.
\end{remark}

\bigskip

\begin{definition}
Let $n$ be a positive integer. We define an \textbf{(unoriented)} $\mathbf{n}%
$\textbf{-mosaic} as an $n\times n$ matrix $M=\left(  M_{ij}\right)  =\left(
T_{k\left(  i,j\right)  }\right)  $ of (unoriented) tiles \textbf{with rows
and columns indexed from} $0$ \textbf{to} $n-1$. \ We denote the \textbf{set
of }$\mathbf{n}$\textbf{-mosaics} by $\mathbb{M}^{(n)}$.\ 
\end{definition}

\bigskip

Two examples of unoriented 4-mosaics are shown below:%
\[%
\begin{array}
[c]{cccc}%
{\includegraphics[
height=0.3269in,
width=0.3269in
]%
{ut00.ps}%
}%
&
{\includegraphics[
height=0.3269in,
width=0.3269in
]%
{ut05.ps}%
}%
&
{\includegraphics[
height=0.3269in,
width=0.3269in
]%
{ut04.ps}%
}%
&
{\includegraphics[
height=0.3269in,
width=0.3269in
]%
{ut00.ps}%
}%
\\%
{\includegraphics[
height=0.3269in,
width=0.3269in
]%
{ut04.ps}%
}%
&
{\includegraphics[
height=0.3269in,
width=0.3269in
]%
{ut09.ps}%
}%
&
{\includegraphics[
height=0.3269in,
width=0.3269in
]%
{ut04.ps}%
}%
&
{\includegraphics[
height=0.3269in,
width=0.3269in
]%
{ut01.ps}%
}%
\\%
{\includegraphics[
height=0.3269in,
width=0.3269in
]%
{ut09.ps}%
}%
&
{\includegraphics[
height=0.3269in,
width=0.3269in
]%
{ut03.ps}%
}%
&
{\includegraphics[
height=0.3269in,
width=0.3269in
]%
{ut08.ps}%
}%
&
{\includegraphics[
height=0.3269in,
width=0.3269in
]%
{ut02.ps}%
}%
\\%
{\includegraphics[
height=0.3269in,
width=0.3269in
]%
{ut03.ps}%
}%
&
{\includegraphics[
height=0.3269in,
width=0.3269in
]%
{ut06.ps}%
}%
&
{\includegraphics[
height=0.3269in,
width=0.3269in
]%
{ut04.ps}%
}%
&
{\includegraphics[
height=0.3269in,
width=0.3269in
]%
{ut00.ps}%
}%
\end{array}
\qquad\qquad\qquad%
\begin{array}
[c]{cccc}%
{\includegraphics[
height=0.3269in,
width=0.3269in
]%
{ut00.ps}%
}%
&
{\includegraphics[
height=0.3269in,
width=0.3269in
]%
{ut02.ps}%
}%
&
{\includegraphics[
height=0.3269in,
width=0.3269in
]%
{ut01.ps}%
}%
&
{\includegraphics[
height=0.3269in,
width=0.3269in
]%
{ut00.ps}%
}%
\\%
{\includegraphics[
height=0.3269in,
width=0.3269in
]%
{ut02.ps}%
}%
&
{\includegraphics[
height=0.3269in,
width=0.3269in
]%
{ut09.ps}%
}%
&
{\includegraphics[
height=0.3269in,
width=0.3269in
]%
{ut10.ps}%
}%
&
{\includegraphics[
height=0.3269in,
width=0.3269in
]%
{ut01.ps}%
}%
\\%
{\includegraphics[
height=0.3269in,
width=0.3269in
]%
{ut06.ps}%
}%
&
{\includegraphics[
height=0.3269in,
width=0.3269in
]%
{ut03.ps}%
}%
&
{\includegraphics[
height=0.3269in,
width=0.3269in
]%
{ut09.ps}%
}%
&
{\includegraphics[
height=0.3269in,
width=0.3269in
]%
{ut04.ps}%
}%
\\%
{\includegraphics[
height=0.3269in,
width=0.3269in
]%
{ut03.ps}%
}%
&
{\includegraphics[
height=0.3269in,
width=0.3269in
]%
{ut05.ps}%
}%
&
{\includegraphics[
height=0.3269in,
width=0.3269in
]%
{ut04.ps}%
}%
&
{\includegraphics[
height=0.3269in,
width=0.3269in
]%
{ut00.ps}%
}%
\end{array}
\]

\bigskip

We now proceed to define what is meant by a knot mosaic:

\bigskip

A \textbf{connection point} of a tile is defined as the midpoint of a tile
edge which is also the endpoint of a curve drawn on the tile. \ 

\bigskip

Examples of tile connection points are illustrated in figure 1 below:%
\begin{center}
\includegraphics[
height=1.2834in,
width=3.3451in
]%
{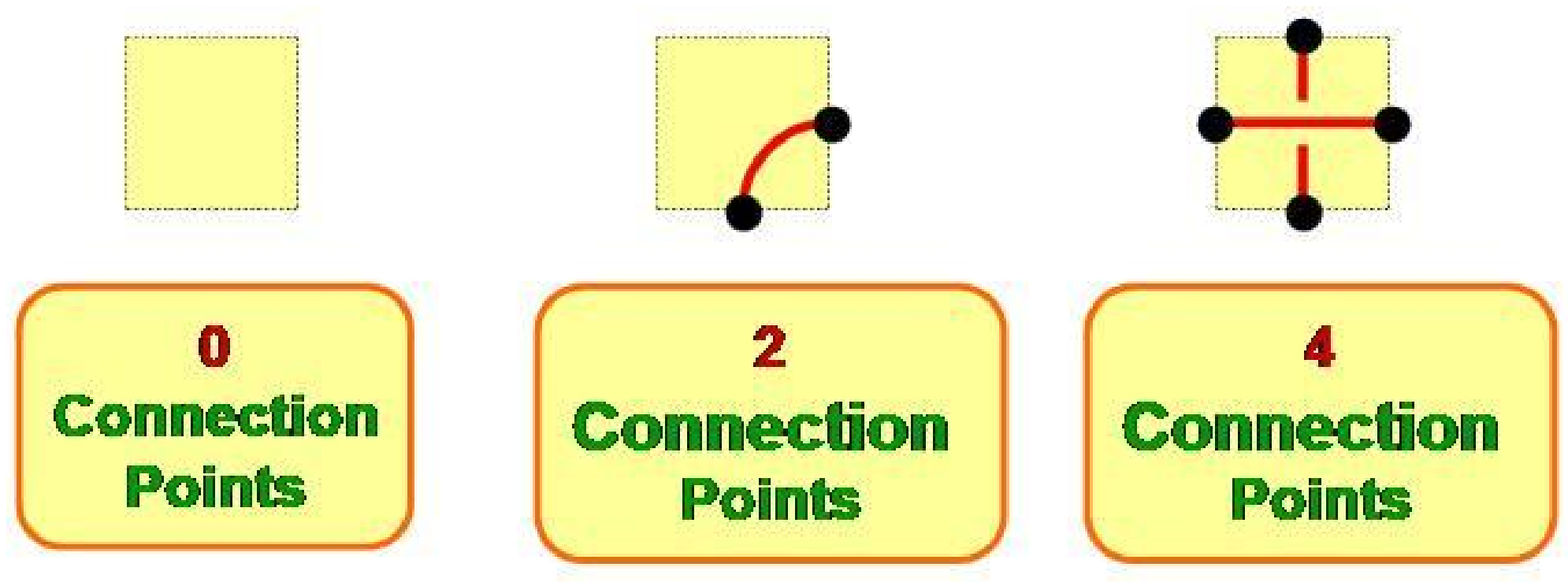}%
\end{center}

\bigskip

We say that two tiles in a mosaic are \textbf{contiguous} if they lie
immediately next to each other in either the same row or the same column. \ An
unoriented tile within a mosaic is said to be \textbf{suitably connected} if
each of its connection points touches a connection point of a contiguous tile.

\bigskip

\begin{definition}
An \textbf{(unoriented)} \textbf{knot} $\mathbf{n}$-\textbf{mosaic} is a
mosaic in which all tiles are suitably connected. \ We let $\mathbb{K}^{(n)}$
denote the subset of $\mathbb{M}^{(n)}$ of all knot $\mathbf{n}$%
-mosaics\footnote{We remind the reader of the following statement made at the
beginning of the introduction of this paper: \textit{For simplicity of
exposition, we will throughout this paper frequently use the term "knot" to
mean either a knot or a link.}}.
\end{definition}

\bigskip

The previous two $4$-mosaics shown above are examples respectively of a
non-knot $4$-mosaic and a knot $4$-mosaic. \ Other examples of knot (or links)
mosaics are the Hopf link $4$-mosaic, the figure eight knot $5$-mosaic, and
the Borromean rings $6$-mosaic, respectively illustrated below:

\bigskip%

\[%
\begin{array}
[c]{cccc}%
{\includegraphics[
height=0.3269in,
width=0.3269in
]%
{ut02.ps}%
}%
&
{\includegraphics[
height=0.3269in,
width=0.3269in
]%
{ut05.ps}%
}%
&
{\includegraphics[
height=0.3269in,
width=0.3269in
]%
{ut01.ps}%
}%
&
{\includegraphics[
height=0.3269in,
width=0.3269in
]%
{ut00.ps}%
}%
\\%
{\includegraphics[
height=0.3269in,
width=0.3269in
]%
{ut06.ps}%
}%
&
{\includegraphics[
height=0.3269in,
width=0.3269in
]%
{ut02.ps}%
}%
&
{\includegraphics[
height=0.3269in,
width=0.3269in
]%
{ut09.ps}%
}%
&
{\includegraphics[
height=0.3269in,
width=0.3269in
]%
{ut01.ps}%
}%
\\%
{\includegraphics[
height=0.3269in,
width=0.3269in
]%
{ut03.ps}%
}%
&
{\includegraphics[
height=0.3269in,
width=0.3269in
]%
{ut09.ps}%
}%
&
{\includegraphics[
height=0.3269in,
width=0.3269in
]%
{ut04.ps}%
}%
&
{\includegraphics[
height=0.3269in,
width=0.3269in
]%
{ut06.ps}%
}%
\\%
{\includegraphics[
height=0.3269in,
width=0.3269in
]%
{ut00.ps}%
}%
&
{\includegraphics[
height=0.3269in,
width=0.3269in
]%
{ut03.ps}%
}%
&
{\includegraphics[
height=0.3269in,
width=0.3269in
]%
{ut05.ps}%
}%
&
{\includegraphics[
height=0.3269in,
width=0.3269in
]%
{ut04.ps}%
}%
\end{array}
\qquad\qquad%
\begin{array}
[c]{ccccc}%
{\includegraphics[
height=0.3269in,
width=0.3269in
]%
{ut02.ps}%
}%
&
{\includegraphics[
height=0.3269in,
width=0.3269in
]%
{ut05.ps}%
}%
&
{\includegraphics[
height=0.3269in,
width=0.3269in
]%
{ut05.ps}%
}%
&
{\includegraphics[
height=0.3269in,
width=0.3269in
]%
{ut01.ps}%
}%
&
{\includegraphics[
height=0.3269in,
width=0.3269in
]%
{ut00.ps}%
}%
\\%
{\includegraphics[
height=0.3269in,
width=0.3269in
]%
{ut06.ps}%
}%
&
{\includegraphics[
height=0.3269in,
width=0.3269in
]%
{ut00.ps}%
}%
&
{\includegraphics[
height=0.3269in,
width=0.3269in
]%
{ut02.ps}%
}%
&
{\includegraphics[
height=0.3269in,
width=0.3269in
]%
{ut09.ps}%
}%
&
{\includegraphics[
height=0.3269in,
width=0.3269in
]%
{ut01.ps}%
}%
\\%
{\includegraphics[
height=0.3269in,
width=0.3269in
]%
{ut06.ps}%
}%
&
{\includegraphics[
height=0.3269in,
width=0.3269in
]%
{ut02.ps}%
}%
&
{\includegraphics[
height=0.3269in,
width=0.3269in
]%
{ut09.ps}%
}%
&
{\includegraphics[
height=0.3269in,
width=0.3269in
]%
{ut04.ps}%
}%
&
{\includegraphics[
height=0.3269in,
width=0.3269in
]%
{ut06.ps}%
}%
\\%
{\includegraphics[
height=0.3269in,
width=0.3269in
]%
{ut03.ps}%
}%
&
{\includegraphics[
height=0.3269in,
width=0.3269in
]%
{ut09.ps}%
}%
&
{\includegraphics[
height=0.3269in,
width=0.3269in
]%
{ut10.ps}%
}%
&
{\includegraphics[
height=0.3269in,
width=0.3269in
]%
{ut05.ps}%
}%
&
{\includegraphics[
height=0.3269in,
width=0.3269in
]%
{ut04.ps}%
}%
\\%
{\includegraphics[
height=0.3269in,
width=0.3269in
]%
{ut00.ps}%
}%
&
{\includegraphics[
height=0.3269in,
width=0.3269in
]%
{ut03.ps}%
}%
&
{\includegraphics[
height=0.3269in,
width=0.3269in
]%
{ut04.ps}%
}%
&
{\includegraphics[
height=0.3269in,
width=0.3269in
]%
{ut00.ps}%
}%
&
{\includegraphics[
height=0.3269in,
width=0.3269in
]%
{ut00.ps}%
}%
\end{array}
\]

\bigskip%

\[%
\begin{array}
[c]{cccccc}%
{\includegraphics[
height=0.3269in,
width=0.3269in
]%
{ut00.ps}%
}%
&
{\includegraphics[
height=0.3269in,
width=0.3269in
]%
{ut00.ps}%
}%
&
{\includegraphics[
height=0.3269in,
width=0.3269in
]%
{ut02.ps}%
}%
&
{\includegraphics[
height=0.3269in,
width=0.3269in
]%
{ut05.ps}%
}%
&
{\includegraphics[
height=0.3269in,
width=0.3269in
]%
{ut05.ps}%
}%
&
{\includegraphics[
height=0.3269in,
width=0.3269in
]%
{ut01.ps}%
}%
\\%
{\includegraphics[
height=0.3269in,
width=0.3269in
]%
{ut02.ps}%
}%
&
{\includegraphics[
height=0.3269in,
width=0.3269in
]%
{ut05.ps}%
}%
&
{\includegraphics[
height=0.3269in,
width=0.3269in
]%
{ut10.ps}%
}%
&
{\includegraphics[
height=0.3269in,
width=0.3269in
]%
{ut01.ps}%
}%
&
{\includegraphics[
height=0.3269in,
width=0.3269in
]%
{ut00.ps}%
}%
&
{\includegraphics[
height=0.3269in,
width=0.3269in
]%
{ut06.ps}%
}%
\\%
{\includegraphics[
height=0.3269in,
width=0.3269in
]%
{ut06.ps}%
}%
&
{\includegraphics[
height=0.3269in,
width=0.3269in
]%
{ut02.ps}%
}%
&
{\includegraphics[
height=0.3269in,
width=0.3269in
]%
{ut09.ps}%
}%
&
{\includegraphics[
height=0.3269in,
width=0.3269in
]%
{ut10.ps}%
}%
&
{\includegraphics[
height=0.3269in,
width=0.3269in
]%
{ut01.ps}%
}%
&
{\includegraphics[
height=0.3269in,
width=0.3269in
]%
{ut06.ps}%
}%
\\%
{\includegraphics[
height=0.3269in,
width=0.3269in
]%
{ut06.ps}%
}%
&
{\includegraphics[
height=0.3269in,
width=0.3269in
]%
{ut06.ps}%
}%
&
{\includegraphics[
height=0.3269in,
width=0.3269in
]%
{ut03.ps}%
}%
&
{\includegraphics[
height=0.3269in,
width=0.3269in
]%
{ut09.ps}%
}%
&
{\includegraphics[
height=0.3269in,
width=0.3269in
]%
{ut10.ps}%
}%
&
{\includegraphics[
height=0.3269in,
width=0.3269in
]%
{ut04.ps}%
}%
\\%
{\includegraphics[
height=0.3269in,
width=0.3269in
]%
{ut03.ps}%
}%
&
{\includegraphics[
height=0.3269in,
width=0.3269in
]%
{ut09.ps}%
}%
&
{\includegraphics[
height=0.3269in,
width=0.3269in
]%
{ut05.ps}%
}%
&
{\includegraphics[
height=0.3269in,
width=0.3269in
]%
{ut04.ps}%
}%
&
{\includegraphics[
height=0.3269in,
width=0.3269in
]%
{ut06.ps}%
}%
&
{\includegraphics[
height=0.3269in,
width=0.3269in
]%
{ut00.ps}%
}%
\\%
{\includegraphics[
height=0.3269in,
width=0.3269in
]%
{ut00.ps}%
}%
&
{\includegraphics[
height=0.3269in,
width=0.3269in
]%
{ut03.ps}%
}%
&
{\includegraphics[
height=0.3269in,
width=0.3269in
]%
{ut05.ps}%
}%
&
{\includegraphics[
height=0.3269in,
width=0.3269in
]%
{ut05.ps}%
}%
&
{\includegraphics[
height=0.3269in,
width=0.3269in
]%
{ut04.ps}%
}%
&
{\includegraphics[
height=0.3269in,
width=0.3269in
]%
{ut00.ps}%
}%
\end{array}
\]

\bigskip

\subsection{Mosaic moves}

\bigskip

We now continue with our program of using mosaics to create a formal model of
(tame) knot theory.

\bigskip

\begin{definition}
Let $k$ and $n$ be positive integers such that $k\leq n$. A $k$-mosaic $N$ is
said to be a $k$\textbf{-submosaic} of an $n$-mosaic $M$ if it is a $k\times
k$ submatrix of $M$. \ The $k$-submosaic $N$ is said to be at
\textbf{location} $\left(  i,j\right)  $ in the $n$-mosaic $M$ if the top left
entry of $N$ lies in row $i$ and column $j$ of $M$. \ Obviously, the set of
possible locations for a $k$-submosaic of an $n$-mosaic is $\left\{  \left(
i,j\right)  :0\leq i,j\leq n-k\right\}  $. Moreover, there are exactly
$\left(  n-k+1\right)  ^{2}$ different locations. \ Let $\mathbf{M}^{\left(
k:i,j\right)  }$ denote the $k$\textbf{-submosaic of }$\mathbf{M}$\textbf{ at
location }$(i,j)$.
\end{definition}

\bigskip

\bigskip

For example, the 3-mosaic
\[%

\text{ \ (Mosaic unchanged)}%
\end{align*}

\bigskip

The following proposition is an almost immediate consequence of the definition
of a $k$-move:

\bigskip

\begin{proposition}
Each $k$-move $N\overset{\left(  i,j\right)  }{\longleftrightarrow}N^{\prime}$
is a permutation of $\mathbb{M}^{(n)}$. \ In fact, it is a permutation which
is the product of disjoint transpositions.
\end{proposition}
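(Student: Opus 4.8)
The plan is to show that the map $f:=\bigl(N\overset{(i,j)}{\longleftrightarrow}N'\bigr)\colon\mathbb{M}^{(n)}\to\mathbb{M}^{(n)}$ is an \emph{involution}, i.e.\ $f\circ f=\mathrm{id}_{\mathbb{M}^{(n)}}$. Since any involution of a set is automatically a bijection and decomposes as a product of disjoint transpositions (one transposition for each two-element orbit, all remaining points being fixed), both assertions of the proposition follow at once. First I would dispose of the degenerate case $N=N'$: here the definition gives $f(M)=M$ for every $M$, so $f=\mathrm{id}$, which is an empty product of transpositions, hence of the desired form. So from now on assume $N\neq N'$. I would also note that $f$ is well defined as a self-map of $\mathbb{M}^{(n)}$: replacing the $k\times k$ submatrix $M^{(k:i,j)}$ of an $n\times n$ matrix of tiles by another $k\times k$ matrix of tiles (namely $N$ or $N'$) again produces an $n\times n$ matrix of tiles, i.e.\ an element of $\mathbb{M}^{(n)}$. (We are working in $\mathbb{M}^{(n)}$, not $\mathbb{K}^{(n)}$, so no suitable-connectedness condition must be preserved.)

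The heart of the argument is the involution property, obtained from the three-way case split in the definition of a $k$-move. Fix $M\in\mathbb{M}^{(n)}$. The key observation is that $M^{(k:i,j)}$ depends only on the entries of $M$ in rows $i,\dots,i+k-1$ and columns $j,\dots,j+k-1$, and that $f$ alters $M$ only within exactly those entries; hence $\bigl(f(M)\bigr)^{(k:i,j)}$ is precisely the $k$-mosaic that $f$ substituted, while $f(M)$ agrees with $M$ everywhere outside that window. Thus if $M^{(k:i,j)}=N$ then $\bigl(f(M)\bigr)^{(k:i,j)}=N'\neq N$, so a second application of $f$ falls into the second case of the definition and restores $M$; symmetrically if $M^{(k:i,j)}=N'$; and if $M^{(k:i,j)}\notin\{N,N'\}$ then $f(M)=M$ and $f(f(M))=M$ trivially. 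In all three cases $f(f(M))=M$, so $f^{2}=\mathrm{id}_{\mathbb{M}^{(n)}}$.

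It remains to read off the transposition structure. The points moved by $f$ are exactly those $M$ with $M^{(k:i,j)}\in\{N,N'\}$, and $f$ interchanges each $M$ with $M^{(k:i,j)}=N$ with the mosaic $M'$ obtained from $M$ by substituting $N'$ at location $(i,j)$. Distinct such pairs $\{M,M'\}$ are disjoint: any element $L$ of such a pair has $L^{(k:i,j)}\in\{N,N'\}$, and since $N\neq N'$ the value $L^{(k:i,j)}$ together with the untouched surrounding entries of $L$ determines both members of its pair uniquely, so two pairs sharing an element coincide. Therefore $f$ is the product, over a set of representatives $M$ with $M^{(k:i,j)}=N$, of the disjoint transpositions $(M\ M')$, with every other $n$-mosaic fixed. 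I do not expect any genuine obstacle here; the only point requiring care is the tacit claim underlying the case analysis — that modifying $M$ inside the window $[i,i+k-1]\times[j,j+k-1]$ does not change which window the operator inspects, so the three cases stay mutually exclusive after applying $f$ once — and this is exactly where the hypothesis $N\neq N'$ enters.
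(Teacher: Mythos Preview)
Your proof is correct and follows exactly the line the paper has in mind: the paper gives no explicit argument, merely remarking that the proposition ``is an almost immediate consequence of the definition of a $k$-move,'' and your involution argument is precisely how one unpacks that remark. Your handling of the degenerate case $N=N'$ and the observation that the map only touches the $k\times k$ window (so the three cases remain mutually exclusive after one application) are the right details to supply.
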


\bigskip

\subsection{Three important notational conventions}

\bigskip

For the purpose of achieving clarity of exposition and of simplifying the
exposition as much as possible, we adopt the following three nondeterministic
notational conventions which will eliminate a great deal of combinatorial clutter:

\bigskip

\noindent\textbf{Notational Convention 1.} \textit{We will use each of the
following tiles}
\[%
\begin{array}
[c]{cccccccccc}%
{\includegraphics[
height=0.3269in,
width=0.3269in
]%
{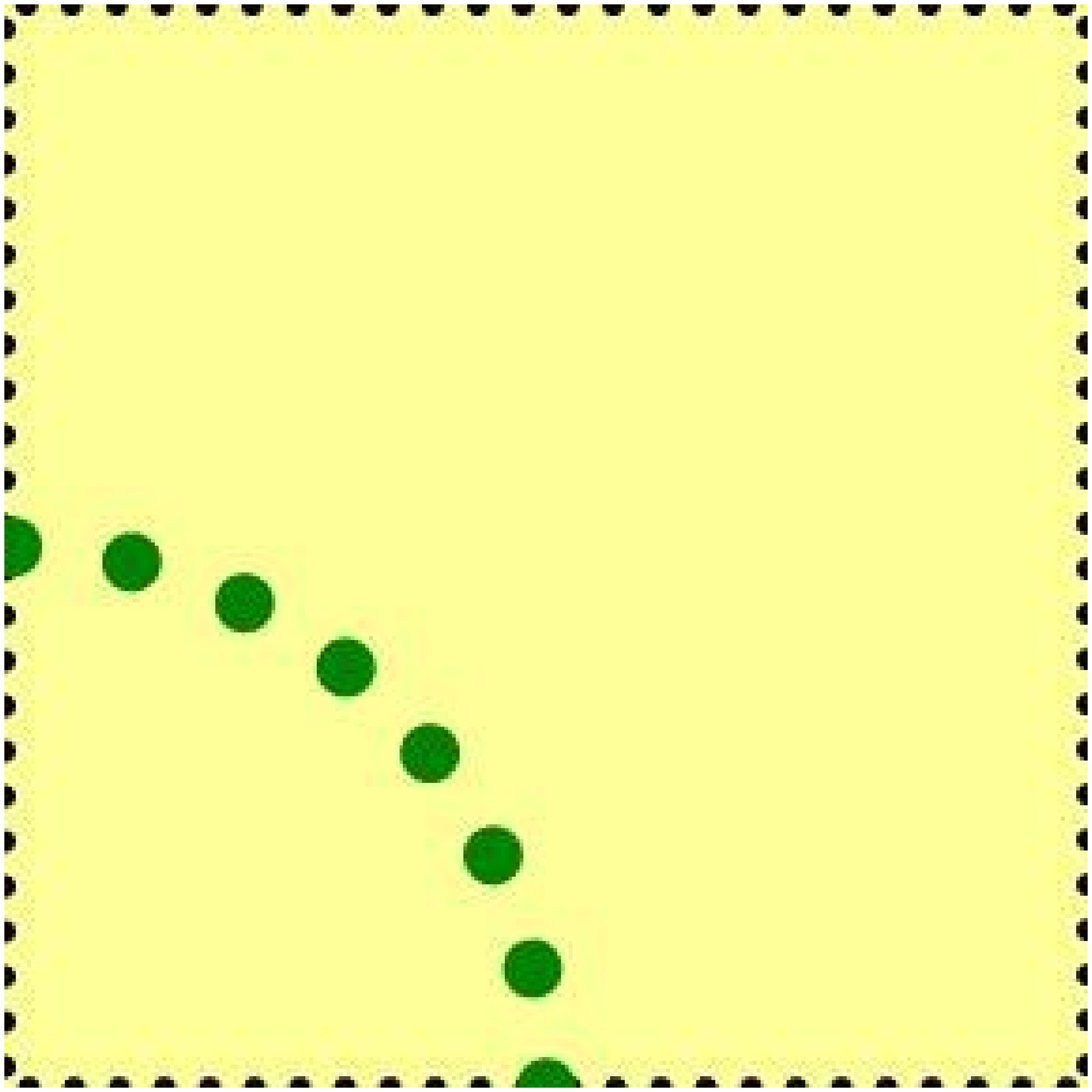}%
}%
, &
{\includegraphics[
height=0.3269in,
width=0.3269in
]%
{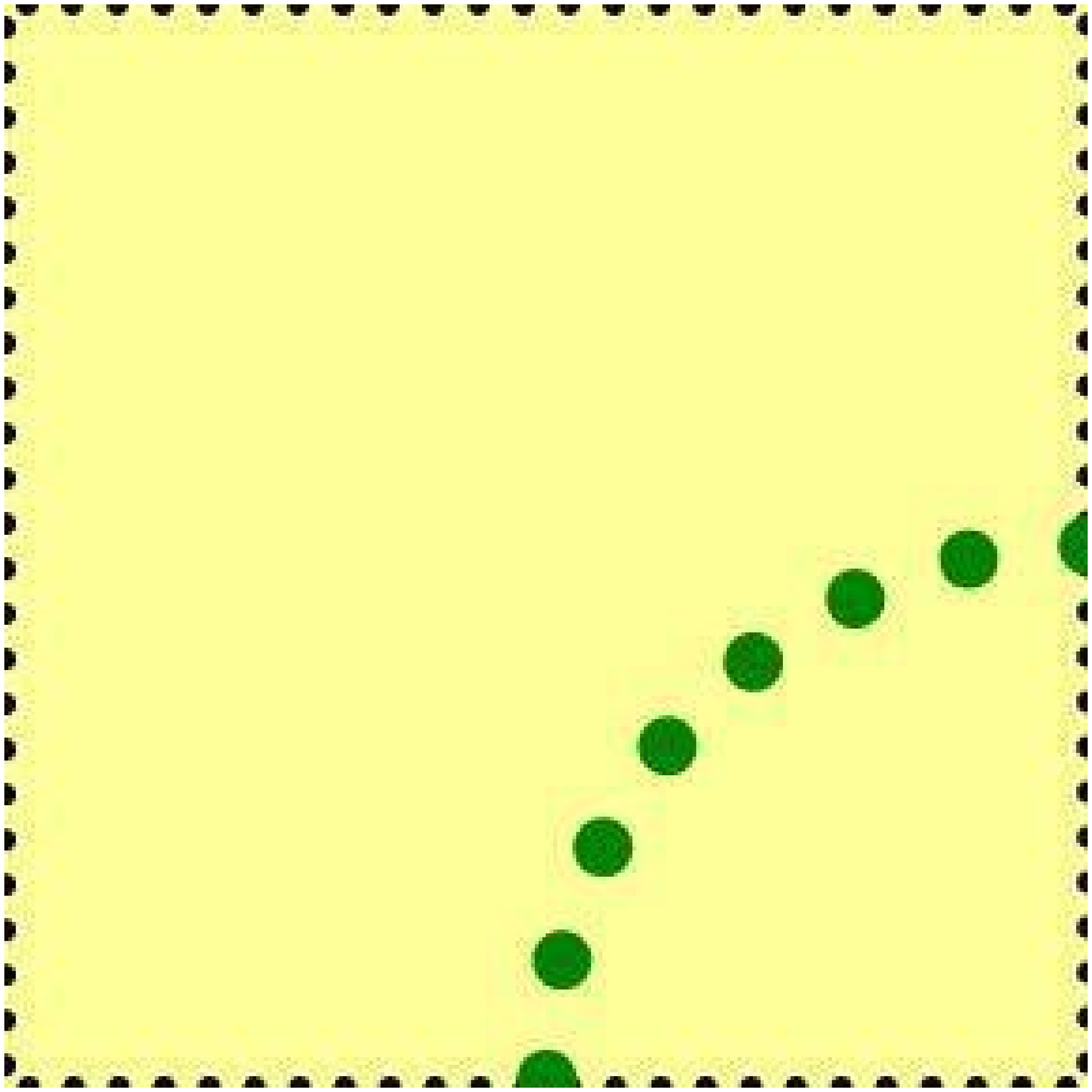}%
}%
, &
{\includegraphics[
height=0.3269in,
width=0.3269in
]%
{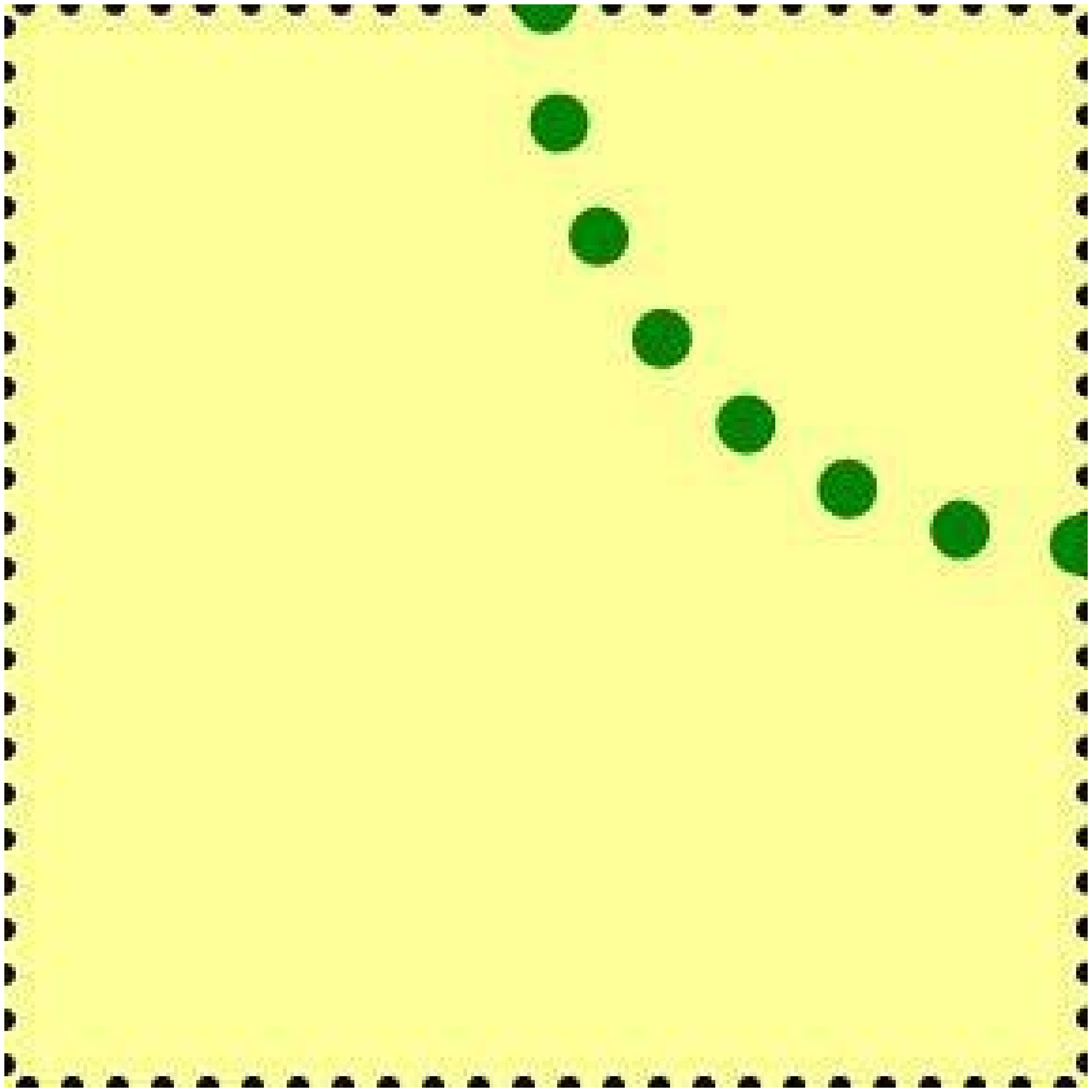}%
}%
, &
{\includegraphics[
height=0.3269in,
width=0.3269in
]%
{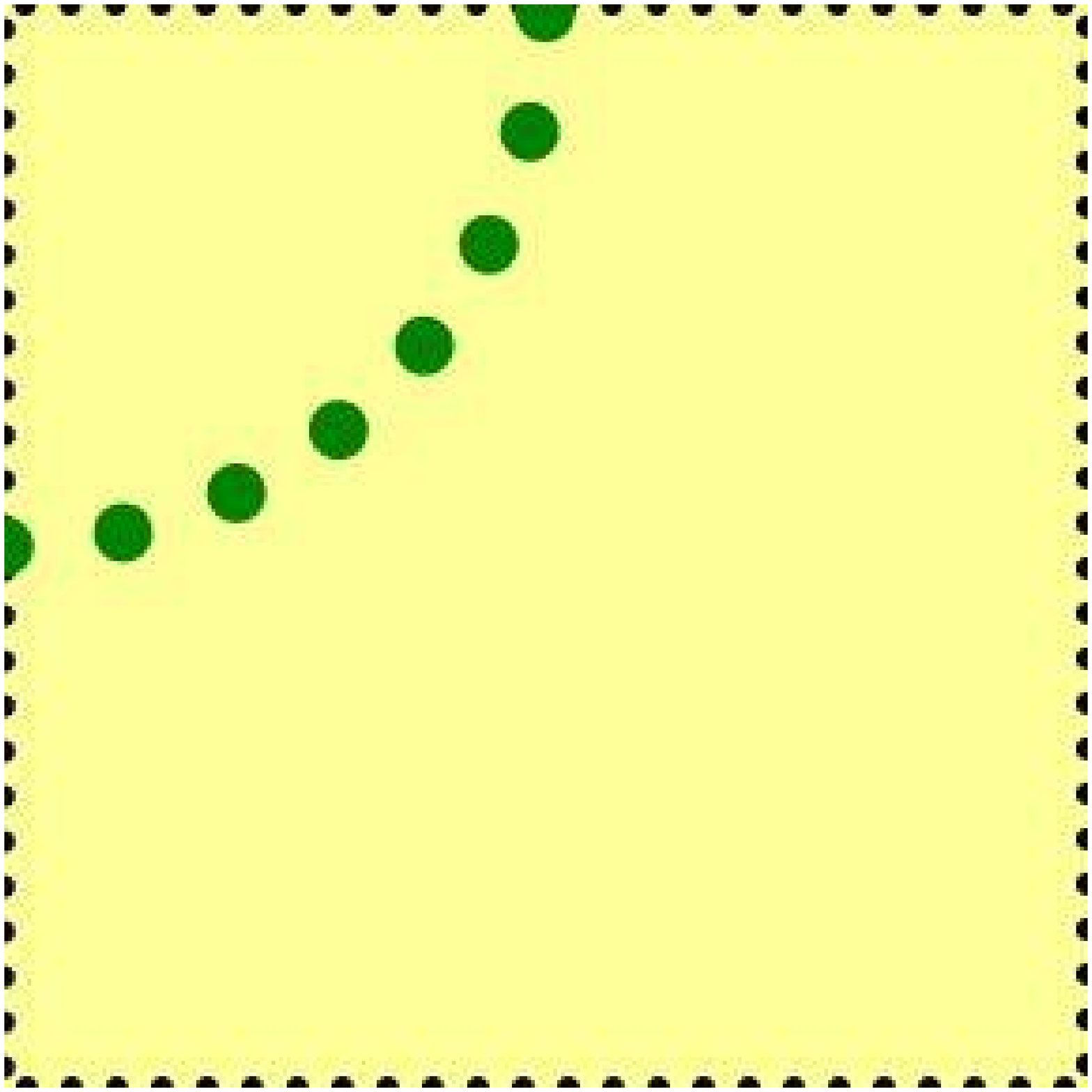}%
}%
, &
{\includegraphics[
height=0.3269in,
width=0.3269in
]%
{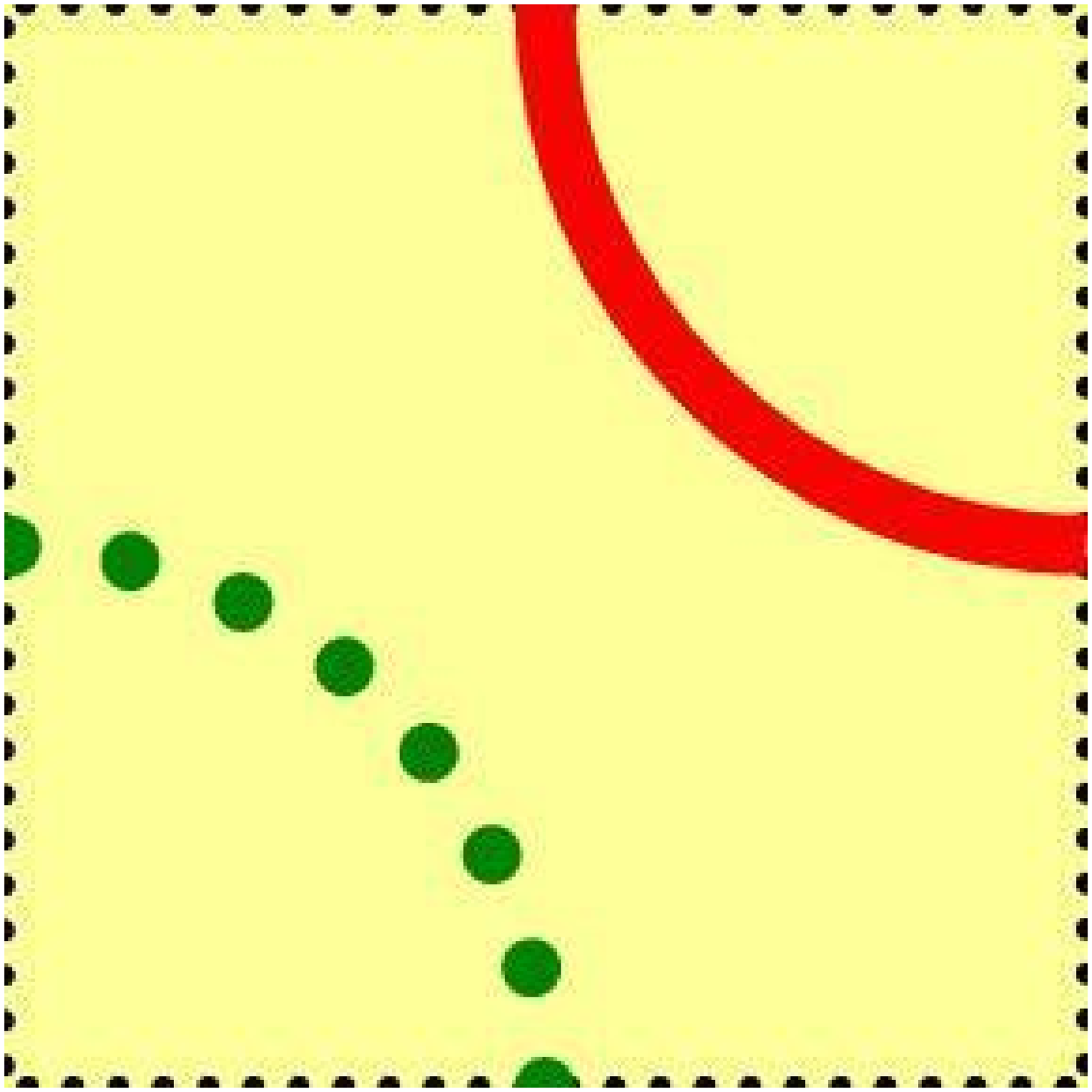}%
}%
, &
{\includegraphics[
height=0.3269in,
width=0.3269in
]%
{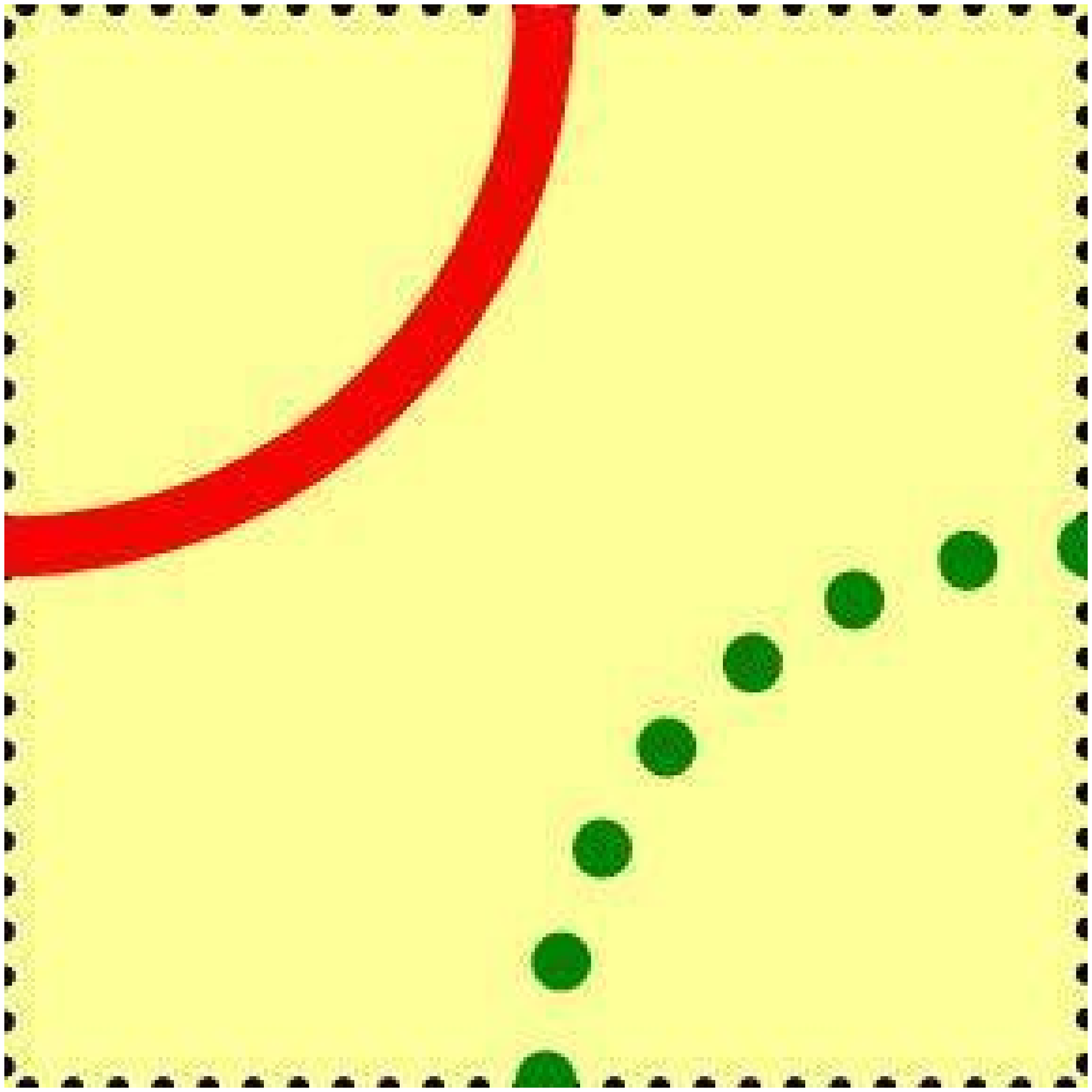}%
}%
, &
{\includegraphics[
height=0.3269in,
width=0.3269in
]%
{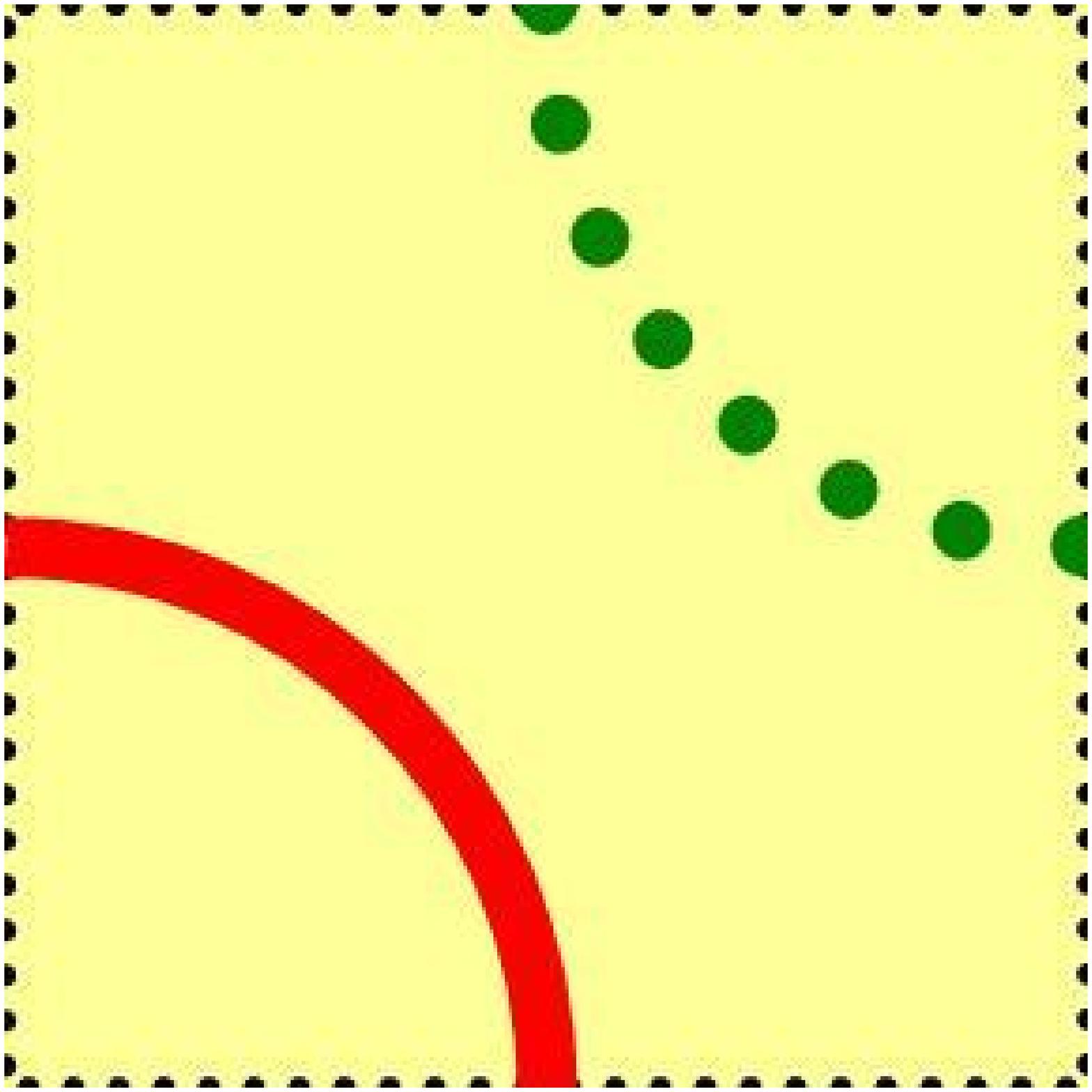}%
}%
, &
{\includegraphics[
height=0.3269in,
width=0.3269in
]%
{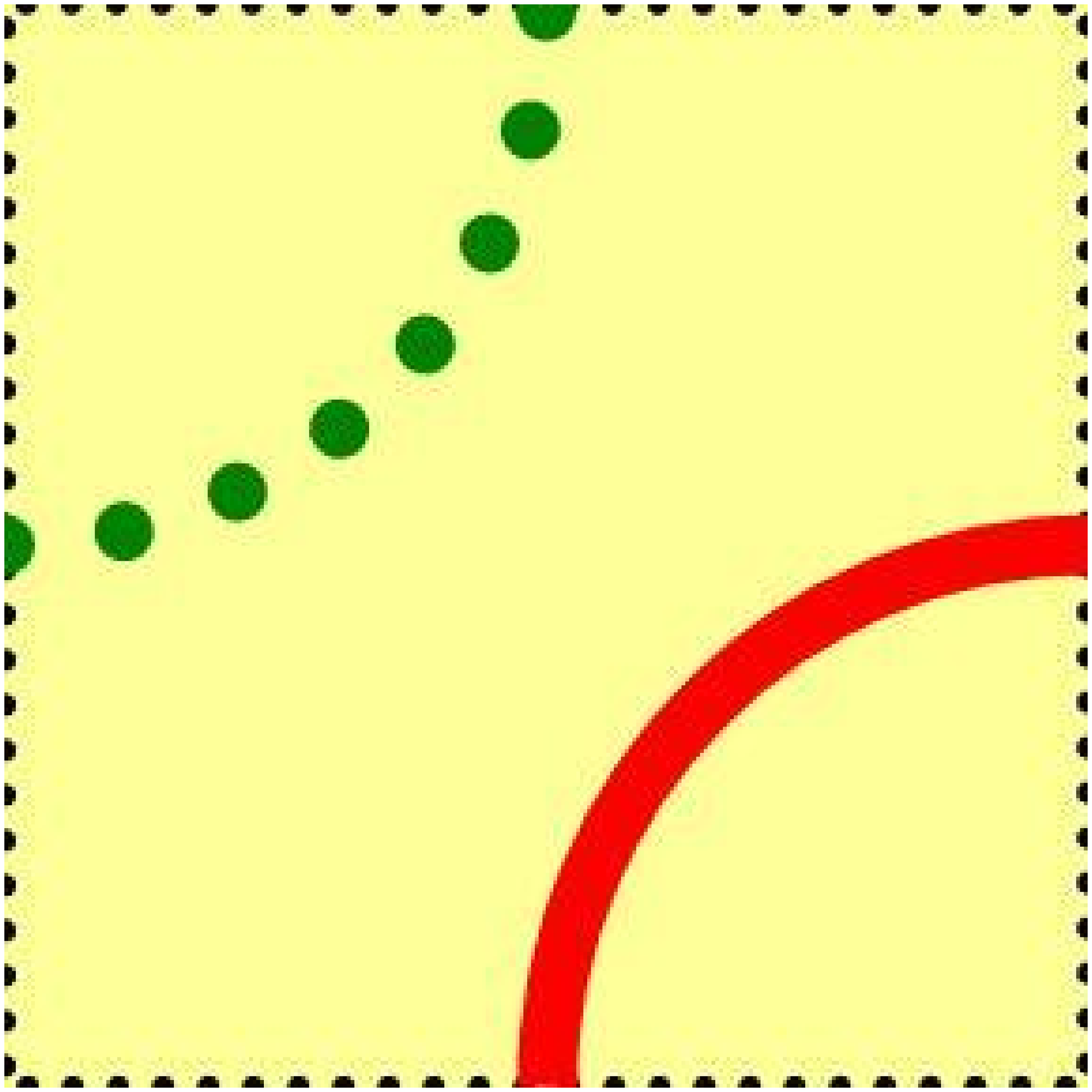}%
}%
, &
{\includegraphics[
height=0.3269in,
width=0.3269in
]%
{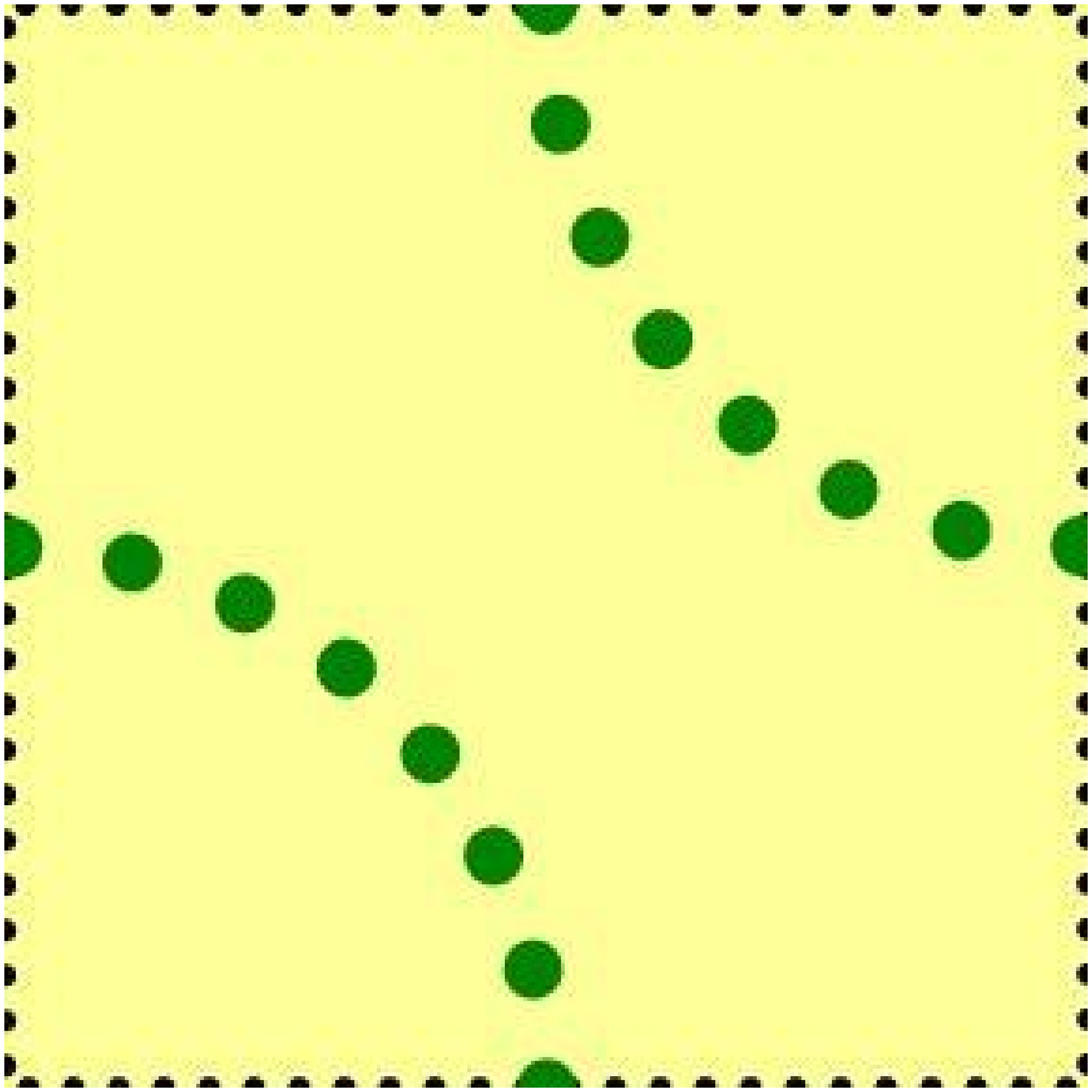}%
}%
, &
{\includegraphics[
height=0.3269in,
width=0.3269in
]%
{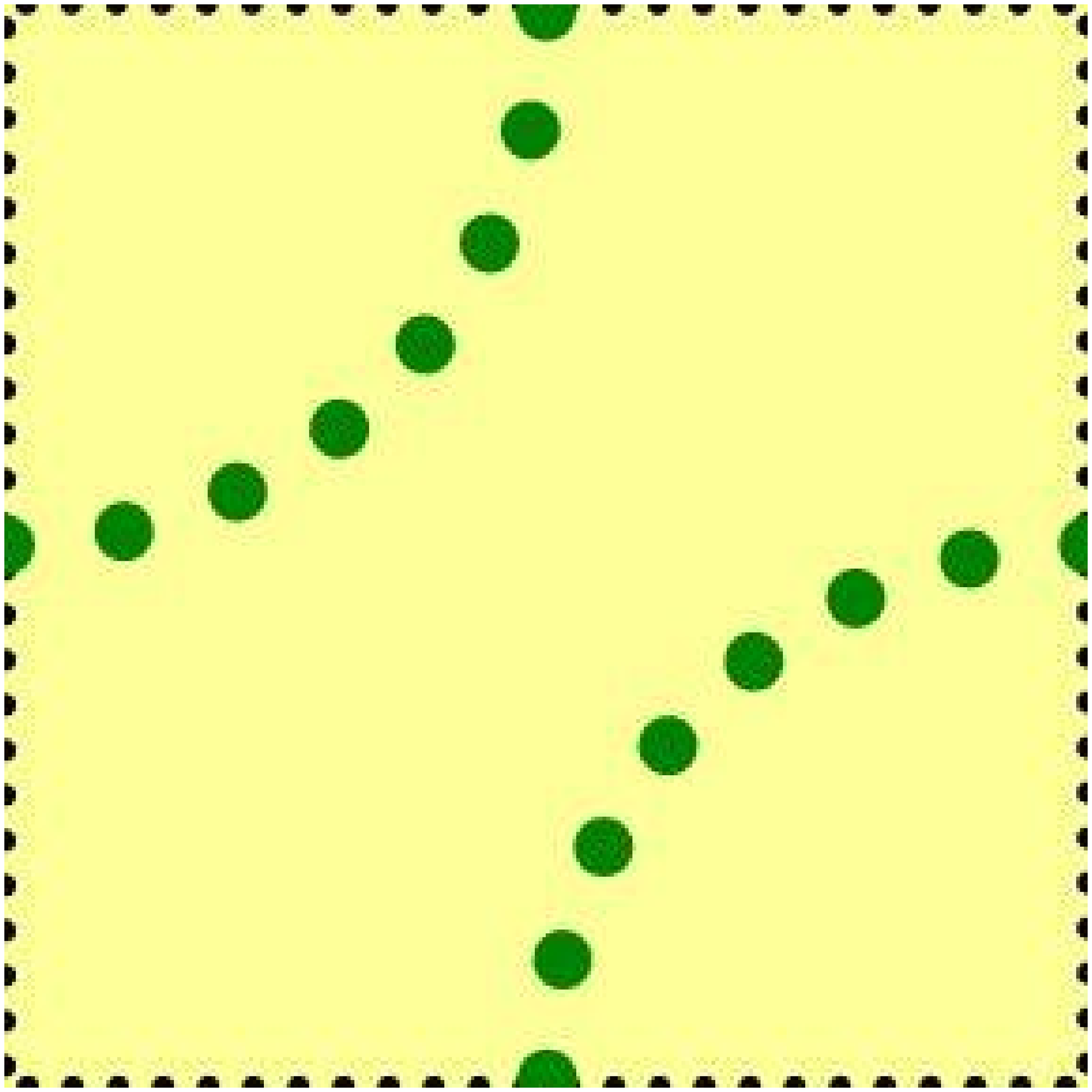}%
}%
\end{array}
\text{ ,}%
\]
\textit{called \textbf{nondeterministic tiles}, to denote either both or any
one (depending on context) of two possible tiles. }\ 

\bigskip

For example, the nondeterministic tile $%
\raisebox{-0.1003in}{\includegraphics[
height=0.3269in,
width=0.3269in
]%
{iut07.ps}%
}%
$ denotes either both or any one of the two tiles $%
\raisebox{-0.1003in}{\includegraphics[
height=0.3269in,
width=0.3269in
]%
{ut01.ps}%
}%
$ and$~%
\raisebox{-0.1003in}{\includegraphics[
height=0.3269in,
width=0.3269in
]%
{ut07.ps}%
}%
$.

\bigskip

\noindent\textbf{Notational Convention 2.} \textit{It is to be understood that
each mosaic move }$N\overset{\left(  i,j\right)  }{\longleftrightarrow
}N^{\prime}$ \textit{denotes either all or any one (depending on context) of
the moves obtained by simultaneously rotating }$N$ \textit{and} $N^{\prime}%
$\textit{ about their respective centers by }$0$\textit{, }$90$\textit{,
}$180$\textit{, or }$270$\textit{ degrees. }

\bigskip

For example,
\[%

\end{align*}

\bigskip

As our final notational convention, we have:

\bigskip

\noindent\textbf{Notational Convention 3.} \textit{Finally, we omit the
location superscript }$\left(  i,j\right)  $, and write $N\longleftrightarrow
N^{\prime}$ to denote either all or any one (depending on context)\ of the
possible locations.

\bigskip

\bigskip

\noindent\textbf{Caveat:} \textit{We caution the reader that throughout the
remainder of this paper, we will be using all of the above nondeterministic
notational conventions.}

\bigskip

\subsection{The planar isotopy moves on knot mosaics}

\bigskip

As an analog to the planar isotopy moves for standard knot diagrams, we define
for mosaics the 11 \textbf{mosaic planar isotopy moves} given below:%
\[
\underset{P_{1}}{%

\ \ \ }%
\]

\bigskip

The above set of 11 planar isotopy moves was found by an exhaustive
enumeration of all 2-mosaic moves corresponding to topological planar isotopy
moves. \ The completeness of this set of moves, i.e., that every planar
isotopy moves for mosaics is a composition of a finite sequence of the above
planar isotopy moves, is addressed in section 2.7 of this paper.

\bigskip

\subsection{The Reidemeister moves on knot mosaics}

\bigskip

As an analog to the Reidemeister moves for standard knot diagrams, we create
for mosaics the \textbf{mosaic Reidemeister moves}. \ 

\bigskip

The \textbf{mosaic Reidemeister 1 moves} are the following:

\bigskip%

\[
\underset{R_{1}}{%

\ \ \ }%
\end{align*}

\bigskip

For describing the mosaic Reidemeister 3 moves, we will use for simplicity of
exposition the following two additional notational conventions:

\bigskip

\noindent\textbf{Notational Convention 4.} \textit{We will make use of each of
the following tiles}%
\[%
{\includegraphics[
height=0.3269in,
width=0.3269in
]%
{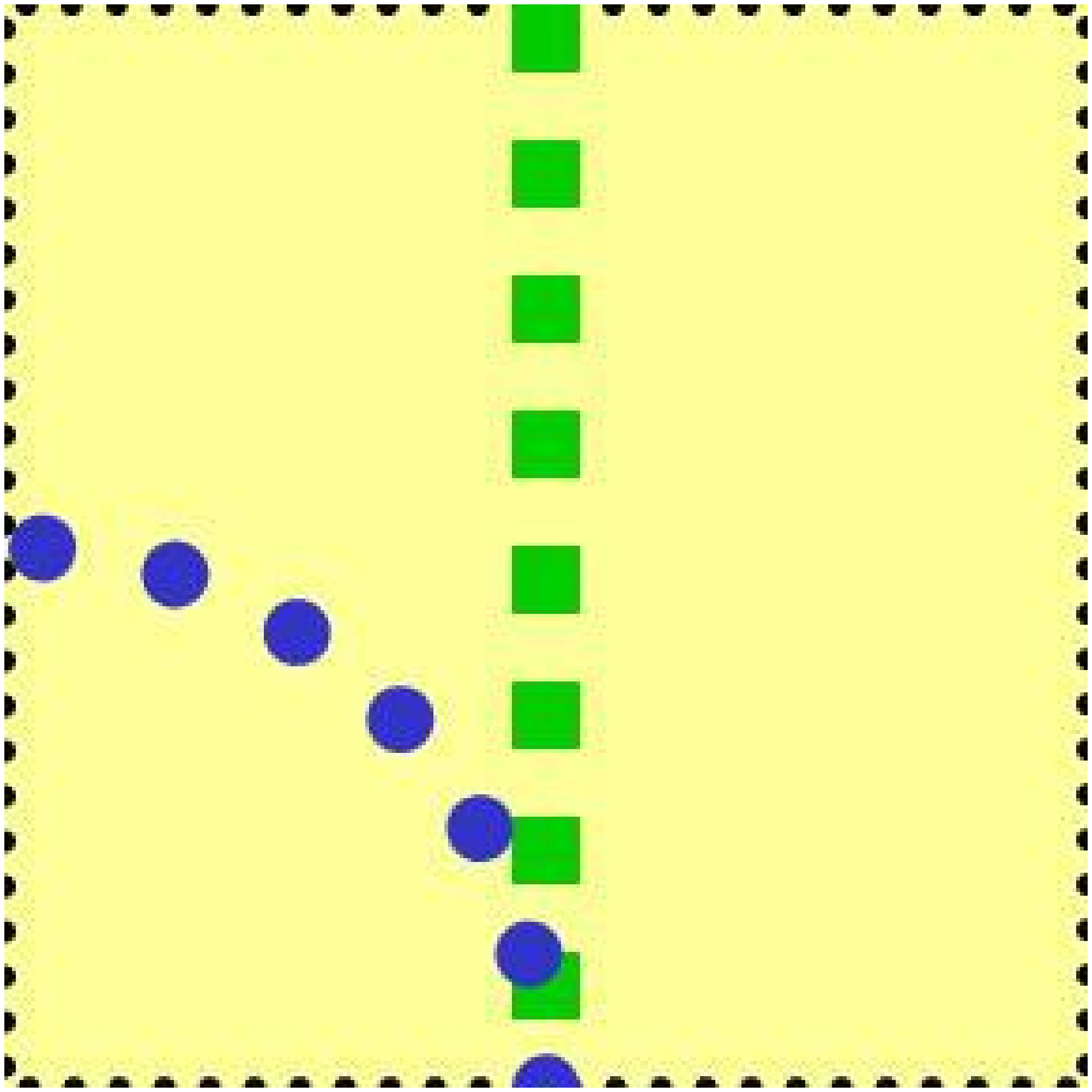}%
}%
\quad%
{\includegraphics[
height=0.3269in,
width=0.3269in
]%
{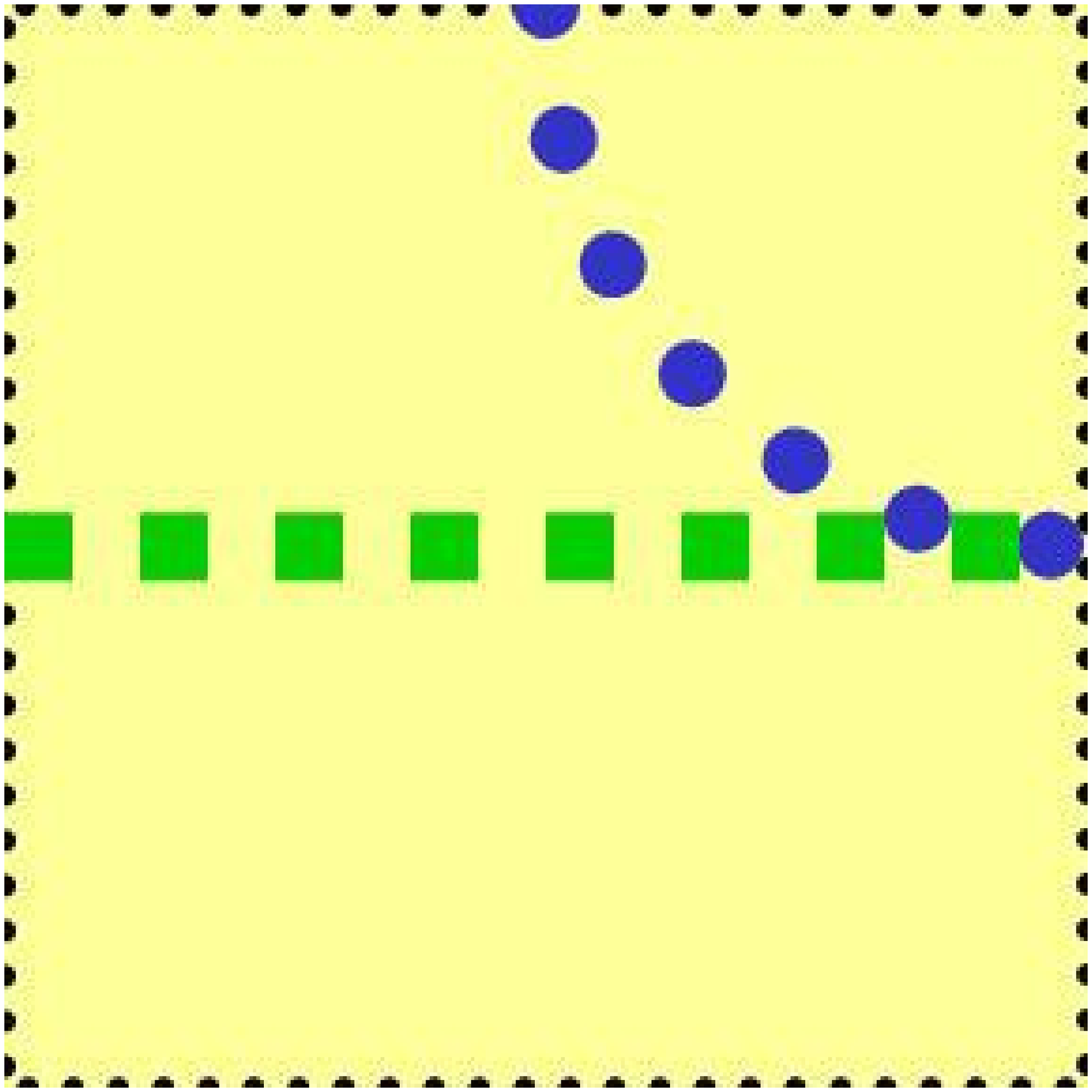}%
}%
\quad%
{\includegraphics[
height=0.3269in,
width=0.3269in
]%
{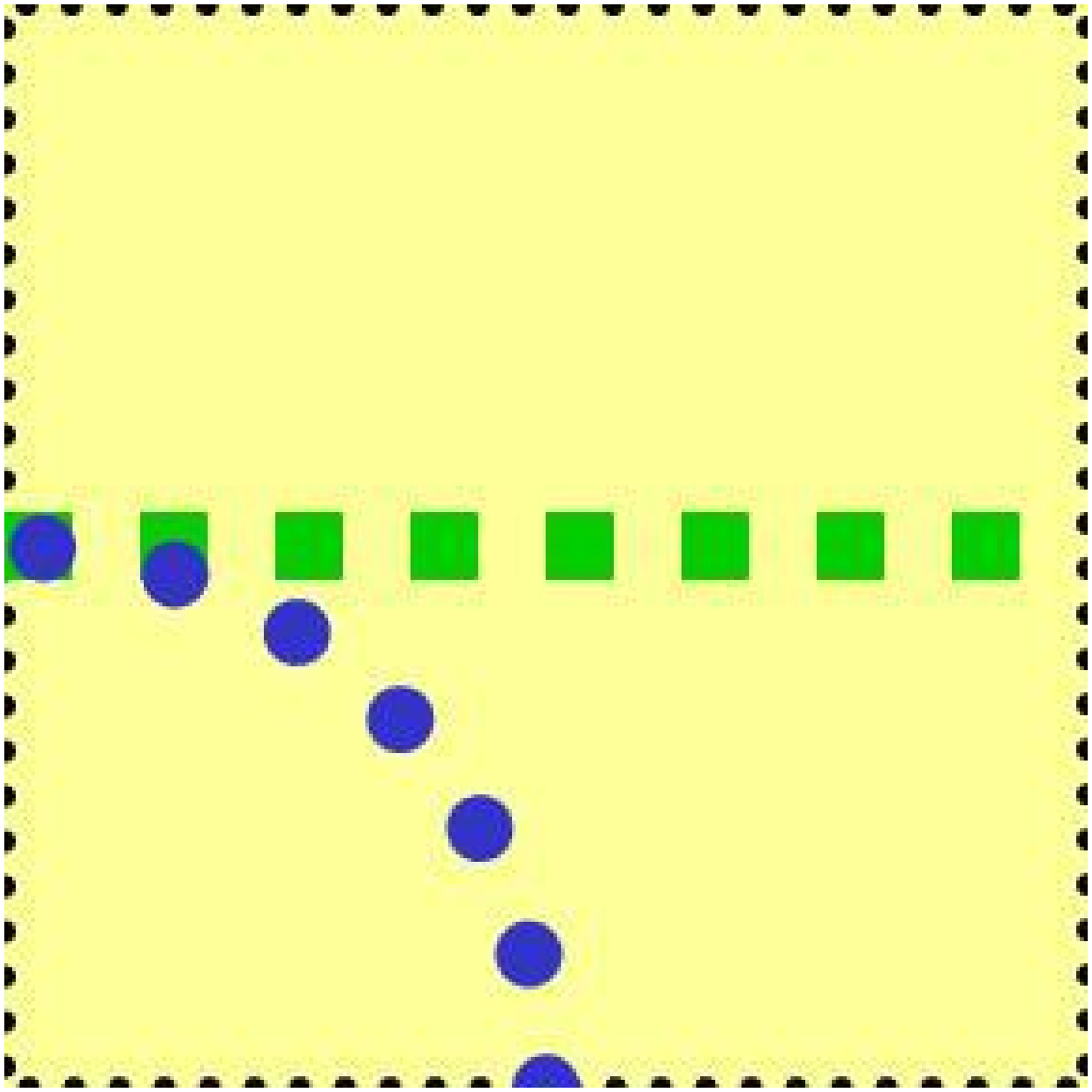}%
}%
\quad%
{\includegraphics[
height=0.3269in,
width=0.3269in
]%
{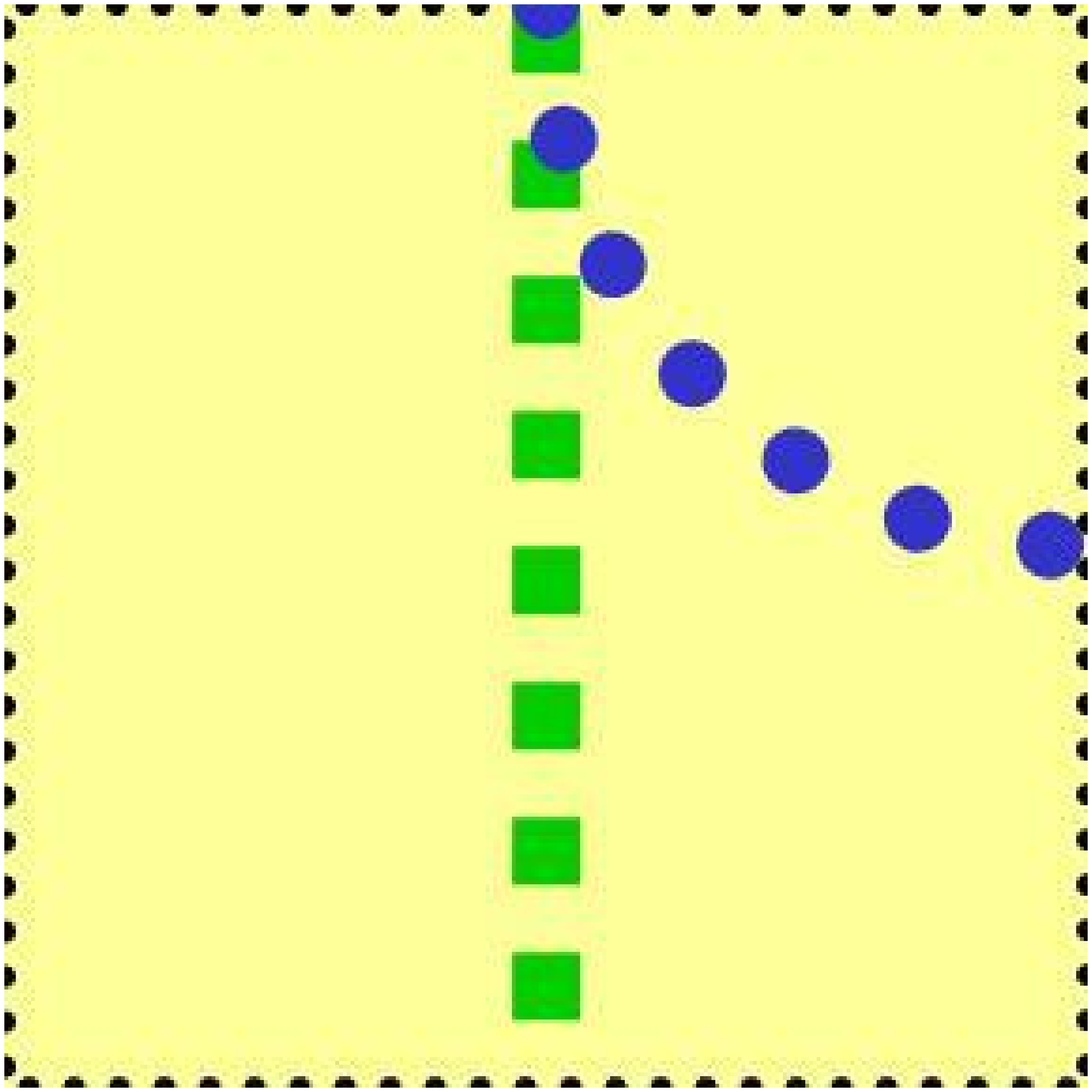}%
}%
\text{ ,}%
\]
\textit{also called \textbf{nondeterministic tiles}, to denote either one of
two possible tiles\footnote{Please note that each of these newly introduced
non-deterministic tiles denotes one of two possible deterministic tiles. On
the other hand, the non-deterministic tiles introduced in section 2.3 denote
one or all of two possible deterministic tiles, depending on context..}.} \ 

\bigskip

For example, the nondeterministic tile $%
\raisebox{-0.1003in}{\includegraphics[
height=0.3269in,
width=0.3269in
]%
{ul.ps}%
}%
$ denotes either of the following two tiles%
\[%
\raisebox{-0.1003in}{\includegraphics[
height=0.3269in,
width=0.3269in
]%
{ul.ps}%
}%
=%
\raisebox{-0.1003in}{\includegraphics[
height=0.3269in,
width=0.3269in
]%
{ut01.ps}%
}%
\text{ \ or \ \ }%
\raisebox{-0.1003in}{\includegraphics[
height=0.3269in,
width=0.3269in
]%
{ut06.ps}%
}%
\text{ .}%
\]

\bigskip

\noindent\textbf{Notational Convention 5.} \textit{Nondeterministic tiles
labeled by the same letter are synchronized as follows:}
\[
\left\{

}$

\bigskip

As noted in a previous section, all mosaic moves are permutations on the set
of mosaics $\mathbb{M}^{(n)}$. \ In particular, the planar isotopy moves and
the Reidemeister moves lie in the permutation group of the set of mosaics.
\ It easily follows that the planar isotopy moves and the Reidemeister moves
also lie in the group of all permutations of the set of knot mosaics
$\mathbb{K}^{(n)}$. \ Hence, we can make the following definition:

\bigskip

\begin{definition}
We define the (\textbf{knot mosaic}) \textbf{ambient group} $\mathbb{A}(n)$ as
the group of all permutations of the set of knot $n$-mosaics $\mathbb{K}%
^{(n)}$ generated by the mosaic planar isotopy and the mosaic Reidemeister
moves. \ 
\end{definition}

\bigskip

\begin{remark}
It follows from a previous proposition that the mosaic\ planar isotopy moves
and Reidemeister moves, as permutations, are each the product of disjoint transpositions.
\end{remark}

\bigskip

The completeness of the set of planar isotopy and Reidemeister moves is
addressed in section 2.7 of this paper.

\bigskip

\subsection{Knot mosaic type}

\bigskip

We now are prepared to define the analog of knot type for mosaics.

\bigskip

We define the \textbf{mosaic injection }%
\[%
\begin{array}
[c]{rrr}%
\iota:\mathbb{M}^{(n)} & \longrightarrow & \mathbb{M}^{(n+1)}\\
M^{(n)} & \longmapsto & M^{(n+1)}%
\end{array}
\]
as%
\[
M_{ij}^{(n+1)}=\left\{
\begin{array}
[c]{cl}%
M_{ij}^{(n)} & \text{if }0\leq i,j<n\\
& \\%
\raisebox{-0.1003in}{\includegraphics[
height=0.3269in,
width=0.3269in
]%
{ut00.ps}%
}%
& \text{otherwise}%
\end{array}
\right.
\]

\bigskip

Thus,
\[
M^{(n)}=%
\begin{array}
[c]{cccc}%
{\includegraphics[
height=0.3269in,
width=0.3269in
]%
{ut00.ps}%
}%
&
{\includegraphics[
height=0.3269in,
width=0.3269in
]%
{ut02.ps}%
}%
&
{\includegraphics[
height=0.3269in,
width=0.3269in
]%
{ut01.ps}%
}%
&
{\includegraphics[
height=0.3269in,
width=0.3269in
]%
{ut00.ps}%
}%
\\%
{\includegraphics[
height=0.3269in,
width=0.3269in
]%
{ut02.ps}%
}%
&
{\includegraphics[
height=0.3269in,
width=0.3269in
]%
{ut09.ps}%
}%
&
{\includegraphics[
height=0.3269in,
width=0.3269in
]%
{ut10.ps}%
}%
&
{\includegraphics[
height=0.3269in,
width=0.3269in
]%
{ut01.ps}%
}%
\\%
{\includegraphics[
height=0.3269in,
width=0.3269in
]%
{ut06.ps}%
}%
&
{\includegraphics[
height=0.3269in,
width=0.3269in
]%
{ut03.ps}%
}%
&
{\includegraphics[
height=0.3269in,
width=0.3269in
]%
{ut09.ps}%
}%
&
{\includegraphics[
height=0.3269in,
width=0.3269in
]%
{ut04.ps}%
}%
\\%
{\includegraphics[
height=0.3269in,
width=0.3269in
]%
{ut03.ps}%
}%
&
{\includegraphics[
height=0.3269in,
width=0.3269in
]%
{ut05.ps}%
}%
&
{\includegraphics[
height=0.3269in,
width=0.3269in
]%
{ut04.ps}%
}%
&
{\includegraphics[
height=0.3269in,
width=0.3269in
]%
{ut00.ps}%
}%
\end{array}
\overset{\iota}{\longrightarrow}M^{(n+1)}=%
\begin{array}
[c]{ccccc}%
{\includegraphics[
height=0.3269in,
width=0.3269in
]%
{ut00.ps}%
}%
&
{\includegraphics[
height=0.3269in,
width=0.3269in
]%
{ut02.ps}%
}%
&
{\includegraphics[
height=0.3269in,
width=0.3269in
]%
{ut01.ps}%
}%
&
{\includegraphics[
height=0.3269in,
width=0.3269in
]%
{ut00.ps}%
}%
&
{\includegraphics[
height=0.3269in,
width=0.3269in
]%
{ut00.ps}%
}%
\\%
{\includegraphics[
height=0.3269in,
width=0.3269in
]%
{ut02.ps}%
}%
&
{\includegraphics[
height=0.3269in,
width=0.3269in
]%
{ut09.ps}%
}%
&
{\includegraphics[
height=0.3269in,
width=0.3269in
]%
{ut10.ps}%
}%
&
{\includegraphics[
height=0.3269in,
width=0.3269in
]%
{ut01.ps}%
}%
&
{\includegraphics[
height=0.3269in,
width=0.3269in
]%
{ut00.ps}%
}%
\\%
{\includegraphics[
height=0.3269in,
width=0.3269in
]%
{ut06.ps}%
}%
&
{\includegraphics[
height=0.3269in,
width=0.3269in
]%
{ut03.ps}%
}%
&
{\includegraphics[
height=0.3269in,
width=0.3269in
]%
{ut09.ps}%
}%
&
{\includegraphics[
height=0.3269in,
width=0.3269in
]%
{ut04.ps}%
}%
&
{\includegraphics[
height=0.3269in,
width=0.3269in
]%
{ut00.ps}%
}%
\\%
{\includegraphics[
height=0.3269in,
width=0.3269in
]%
{ut03.ps}%
}%
&
{\includegraphics[
height=0.3269in,
width=0.3269in
]%
{ut05.ps}%
}%
&
{\includegraphics[
height=0.3269in,
width=0.3269in
]%
{ut04.ps}%
}%
&
{\includegraphics[
height=0.3269in,
width=0.3269in
]%
{ut00.ps}%
}%
&
{\includegraphics[
height=0.3269in,
width=0.3269in
]%
{ut00.ps}%
}%
\\%
{\includegraphics[
height=0.3269in,
width=0.3269in
]%
{ut00.ps}%
}%
&
{\includegraphics[
height=0.3269in,
width=0.3269in
]%
{ut00.ps}%
}%
&
{\includegraphics[
height=0.3269in,
width=0.3269in
]%
{ut00.ps}%
}%
&
{\includegraphics[
height=0.3269in,
width=0.3269in
]%
{ut00.ps}%
}%
&
{\includegraphics[
height=0.3269in,
width=0.3269in
]%
{ut00.ps}%
}%
\end{array}
\]

\bigskip

\begin{remark}
We now can explicitly define the \textbf{graded system} $\left(
\mathbb{K},\mathbb{A}\right)  $ that was mentioned in the introduction. \ The
symbol $\mathbb{K}$ denotes the directed system of sets $\left\{
\mathbb{K}^{(n)}\longrightarrow\mathbb{K}^{(n+1)}:n=1,2,3,\ldots\right\}  $
and $\mathbb{A}$ denotes the directed system of permutation groups $\left\{
\mathbb{A}(n)\longrightarrow\mathbb{A}(n+1):n=1,2,3,\ldots\right\}  $.
\ Thus,
\[
\left(  \mathbb{K},\mathbb{A}\right)  =\left(  \mathbb{K}^{(1)},\mathbb{A}%
\left(  1\right)  \right)  \longrightarrow\left(  \mathbb{K}^{(2)}%
,\mathbb{A}\left(  2\right)  \right)  \longrightarrow\cdots\longrightarrow
\left(  \mathbb{K}^{(n)},\mathbb{A}\left(  n\right)  \right)  \longrightarrow
\cdots
\]

\end{remark}

\bigskip

\begin{definition}
Two $n$-mosaics $M$ and $N$ are said to be of the \textbf{same knot }%
$n$\textbf{-type}, written%
\[
M\underset{n}{\sim}N\text{ ,}%
\]
provided there is an element of the ambient isotopy group $\mathbb{A}(n)$
which transforms $M$ into $N$.
\end{definition}

\bigskip

\begin{definition}
An $m$-mosaic $M$ and an $n$-mosaic $N$ are said to be of the\textbf{ same
knot mosaic type}, written%
\[
M\sim N\text{ ,}%
\]
provided there exists a non-negative integer $\ell$ such that, if $m\leq n$,
then
\[
\iota^{\ell+n-m}M\sim_{\ell+n}\iota^{\ell}N\text{ ,}%
\]
or if $m>n$, then
\[
\iota^{\ell}M\sim_{\ell+m}\iota^{\ell+m-n}N\text{ ,}%
\]
where, for each non-negative integer $p$, $\iota^{p}$ denotes the $p$-fold
composition $\underset{p}{\underbrace{\iota\circ\iota\circ\cdots\circ\iota}}$ .
\end{definition}

\bigskip

\subsection{Tame knot theory and knot mosaic theory are equivalent}

\bigskip

In the introduction of this paper, we conjecture that the formal (re-writing)
system \ $\left(  \mathbb{K},\mathbb{A}\right)  $ of knot mosaics fully
captures the entire structure of tame knot theory. \ \ We now explain in
greater detail what is meant by this conjecture.

\bigskip

Let $\mathbb{Z}$ denote the set of integers, and $\mathbb{R}^{2}$ the two
dimensional Euclidean plane. \ Let $\tau$ denote the \textbf{square tiling} of
$\mathbb{R}^{2}$ induced by the sublattice $\mathbb{Z}\times\mathbb{Z}$ of
$\mathbb{R}^{2}$, and for each $i$, $j$ in $\mathbb{Z}$, let $\tau_{ij}$
denote the subregion of $\mathbb{R}^{2}$ defined by%
\[
\tau_{ij}=\left\{  \left(  x,y\right)  \in\mathbb{R}^{2}:i\leq x\leq i+1\text{
and }j\leq y\leq j+1\right\}  \text{ .}%
\]

\bigskip

Let $k$ be an arbitrary tame knot in $3$-space $\mathbb{R}^{3}$. \ A knot
diagram of $k$, i.e., a regular projection%
\[
\pi:\left(  \mathbb{R}^{3},k\right)  \longrightarrow\left(  \mathbb{R}^{2},\pi
k\right)
\]
is said to be a \textbf{mosaic knot diagram} if

\begin{itemize}
\item[\textbf{1)}] The image under $\pi$ of $k$ lies in the first quadrant of
$\mathbb{R}^{2}$, and

\item[\textbf{2)}] For all $i$, $j$ in $\mathbb{Z}$, the pair $\left(
\tau_{ij},\left(  \pi k\right)  \cap\tau_{ij}\right)  $ is identical with the
cell pair on one of the faces of the 11 tiles $T_{0}$, $T_{1}$, $\ldots$ ,
$T_{10}$ .
\end{itemize}

\bigskip

\begin{remark}
Clearly, using standard arguments in knot theory, one can prove that every
tame knot (or link) has a mosaic knot diagram.
\end{remark}

\bigskip

Each mosaic knot diagram $\pi:\left(  \mathbb{R}^{3},k\right)  \longrightarrow
\left(  \mathbb{R}^{2},\pi k\right)  $ of a knot $k$ can naturally be
identified with a knot $n$-mosaic $K$, where $n$ is the smallest positive
integer such that $\pi k$ lies in the region%
\[
\left\{  \left(  x,y\right)  \in\mathbb{R}^{2}:0\leq x,y\leq n\right\}
\text{.}%
\]
Moreover, every knot $n$-mosaic can naturally be identified with the diagram
of a knot $k$. We call this associated knot mosaic $K$ a \textbf{(knot})
\textbf{mosaic representative} of the original knot $k$.\ 

\bigskip

This leads us to the following conjecture:

\bigskip

\begin{conjecture}
Let $k_{1}$ and $k_{2}$ be two tame knots (or links), and let $K_{1}$ and
$K_{2}$ be two arbitrary chosen mosaic representatives of $k_{1}$ and $k_{2}$,
respectively. \ Then $k_{1}$ and $k_{2}$ are of the same knot type if and only
if the representative mosaics $K_{1}$ and $K_{2}$ are of the same knot mosaic
type. \ In other words, knot mosaic type is a complete invariant of tame knots.
\end{conjecture}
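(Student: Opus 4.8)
The plan is to prove the two implications of the conjecture separately, the forward implication (equal knot type $\Rightarrow$ equal mosaic type) being the substantive one. For the easy direction, $K_1\sim K_2$ should imply that $k_1$ and $k_2$ have the same knot type by a local, move-by-move check. First one observes that the mosaic injection $\iota$ does not change the underlying regular projection: it merely re-reads a mosaic inside a larger board, padding with empty tiles $T_0$, so $\pi k$ is unaltered up to planar isotopy. Next, one verifies that each generator of the ambient group $\mathbb{A}(n)$ — the eleven planar isotopy moves $P_1,\dots,P_{11}$ and the Reidemeister moves $R_1,R_1',R_2,\dots,R_3^{(v)}$, together with all their rotated/reflected/orientation variants encoded by the notational conventions — alters the underlying diagram only by a planar isotopy (in the $P$ cases) or by a single classical Reidemeister move composed with a planar isotopy (in the $R$ cases). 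Since a $k$-move is the identity outside the affected $k\times k$ submosaic, this is a finite check, one template at a time. Any element of $\mathbb{A}(n)$ is a word in these generators, so it preserves the knot type of the underlying diagram; combined with the earlier remark that every tame knot has a mosaic knot diagram, this shows that the assignment from mosaic-type classes to knot types is well defined and surjective, which is exactly the easy half.

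For the hard direction, equality of knot type must force $K_1\sim K_2$, and the engine is the classical Reidemeister theorem: if $k_1$ and $k_2$ have the same type, the diagrams $\pi k_1$ and $\pi k_2$ are connected by a finite sequence of ambient planar isotopies of $\mathbb{R}^2$ and classical Reidemeister moves $\mathrm{I},\mathrm{II},\mathrm{III}$. I would first establish a \emph{grid-refinement lemma}: any mosaic knot diagram, after an ambient planar isotopy, can be realized on a square grid of arbitrarily small mesh (equivalently, after rescaling, as a knot mosaic occupying only a small corner of a much larger board), and moving between a mosaic and such a refinement is accomplished by applying $\iota$ sufficiently many times and then a product of $P$-moves. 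One then \emph{localizes} the Reidemeister sequence: each elementary planar isotopy and each Reidemeister move in the classical sequence takes place inside a small disk, and after refining the mesh enough that this disk is covered by a $2\times2$ or $3\times3$ block of tiles, and after using $\iota$ to guarantee a buffer of empty tiles around the block, that elementary step is reproduced by a finite product of mosaic planar isotopy moves and mosaic Reidemeister moves. Concatenating these local realizations, and interleaving $\iota$'s so that all mosaics sit on a common board size, produces an element of $\mathbb{A}(\ell+n)$ for suitable $\ell$ carrying the appropriate iterate $\iota^{\cdot}K_1$ to the appropriate iterate $\iota^{\cdot}K_2$ — precisely the condition $K_1\sim K_2$ from the definition of same knot mosaic type. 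Taking $k_1=k_2$ as a byproduct shows any two mosaic representatives of one knot are mosaic-equivalent, so the statement is independent of the choices of $K_1,K_2$.

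The main obstacle is the completeness statement hidden inside the grid-refinement lemma and the local-realization step — exactly the claim deferred to section 2.7: that the finite lists $P_1,\dots,P_{11}$ and $R_1,\dots,R_3^{(v)}$ generate \emph{every} planar isotopy and Reidemeister move on grid diagrams. The planar-isotopy half is the combinatorial core: one must show that whenever two knot $n$-mosaics have underlying diagrams related by an ambient plane isotopy respecting the grid, they are connected by $P$-moves (after passing to $\iota^{\ell}K$), which amounts to freely sliding strands, caps, turnbacks, and crossings around the board — and some of these slides genuinely require a temporary extra row or column, which is why one cannot avoid the passage to iterates of $\iota$. The Reidemeister half requires checking that every way a classical $\mathrm{I},\mathrm{II},\mathrm{III}$ move can present itself on the grid — all rotations, reflections and strand-orientation configurations, which is the reason for the proliferation of listed $R$-move variants and for Notational Conventions 4 and 5 — follows from the listed moves modulo $P$-moves. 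Both halves reduce to finite but intricate case analyses over tile configurations, and the honest route is the exhaustive enumeration of $2$-mosaic and $3$-mosaic move types that the authors indicate was carried out.
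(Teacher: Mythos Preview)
The statement you are attempting to prove is labeled a \emph{Conjecture} in the paper, and the paper offers no proof of it. Section 2.7, to which the earlier text refers for ``completeness,'' is precisely the section containing this conjecture; it does not establish the claim but merely formulates it. So there is no proof in the paper against which to compare your proposal.

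Your outline is a sensible blueprint for how such a proof would go, and you correctly identify the crux: the grid-refinement lemma and the local-realization step together amount to showing that the finite lists $P_1,\dots,P_{11}$ and $R_1,\dots,R_3^{(v)}$ (with their rotational and nondeterministic variants) generate all planar isotopies and Reidemeister moves on mosaic diagrams, up to passage to larger boards via $\iota$. That is exactly the content the authors leave open. Your easy direction is fine and is essentially the observation the paper makes informally. But your hard direction is not a proof: you defer the combinatorial core to ``finite but intricate case analyses'' and to ``the exhaustive enumeration \dots\ that the authors indicate was carried out,'' when in fact the paper only asserts that an enumeration produced the \emph{list} of moves, not that the list is complete for generating ambient isotopy. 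So as written, your proposal inherits the paper's gap rather than closing it. (For context: this conjecture was subsequently settled in the affirmative by Kuriya and Shehab, \emph{The Lomonaco--Kauffman conjecture}, J.~Knot Theory Ramifications \textbf{23} (2014), along lines broadly similar to what you sketch, but with the case analysis actually carried out.)
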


\bigskip

\section{Part 2: Quantum Knots}

\bigskip

\subsection{Quantum knot systems, quantum knots, and the ambient group
$\mathbb{A}$}

\bigskip

Our sole purpose in creating the formal system $\left(  \mathbb{K}%
,\mathbb{A}\right)  $ of knot mosaics was to create a framework within which
we can explicitly define what is meant by a quantum knot. \ We are finally in
a position to do so. \ 

\bigskip

We begin by assigning a left-to-right \textbf{linear ordering, }denoted by
`$<$ ', to the 11 mosaic tiles as indicated below

\bigskip

$\hspace{-0.85in}\underset{T_{0}}{%
{\includegraphics[
height=0.3269in,
width=0.3269in
]%
{ut00.ps}%
}%
}%
\begin{array}
[c]{c}%
<\\
\mathstrut
\end{array}
\underset{T_{1}}{%
{\includegraphics[
height=0.3269in,
width=0.3269in
]%
{ut01.ps}%
}%
}%
\begin{array}
[c]{c}%
<\\
\mathstrut
\end{array}
\underset{T_{2}}{%
{\includegraphics[
height=0.3269in,
width=0.3269in
]%
{ut02.ps}%
}%
}%
\begin{array}
[c]{c}%
<\\
\mathstrut
\end{array}
\underset{T_{3}}{%
{\includegraphics[
height=0.3269in,
width=0.3269in
]%
{ut03.ps}%
}%
}%
\begin{array}
[c]{c}%
<\\
\mathstrut
\end{array}
\underset{T_{4}}{%
{\includegraphics[
height=0.3269in,
width=0.3269in
]%
{ut04.ps}%
}%
}%
\begin{array}
[c]{c}%
<\\
\mathstrut
\end{array}
\underset{T_{5}}{%
{\includegraphics[
height=0.3269in,
width=0.3269in
]%
{ut05.ps}%
}%
}%
\begin{array}
[c]{c}%
<\\
\mathstrut
\end{array}
\underset{T_{6}}{%
{\includegraphics[
height=0.3269in,
width=0.3269in
]%
{ut06.ps}%
}%
}%
\begin{array}
[c]{c}%
<\\
\mathstrut
\end{array}
\underset{T_{7}}{%
{\includegraphics[
height=0.3269in,
width=0.3269in
]%
{ut07.ps}%
}%
}%
\begin{array}
[c]{c}%
<\\
\mathstrut
\end{array}
\underset{T_{8}}{%
{\includegraphics[
height=0.3269in,
width=0.3269in
]%
{ut08.ps}%
}%
}%
\begin{array}
[c]{c}%
<\\
\mathstrut
\end{array}
\underset{T_{9}}{%
{\includegraphics[
height=0.3269in,
width=0.3269in
]%
{ut09.ps}%
}%
}%
\begin{array}
[c]{c}%
<\\
\mathstrut
\end{array}
\underset{T_{10}}{%
{\includegraphics[
height=0.3269in,
width=0.3269in
]%
{ut10.ps}%
}%
}$

\bigskip

We let $\mathcal{H}$ be the 11 dimensional Hilbert space with orthonormal
basis $\left\{  \left\vert T_{p}\right\rangle :0\leq p<n\right\}  $ labeled by
the above mosaic tiles, and we define the \textbf{Hilbert space }%
$\mathcal{M}^{(n)}$\textbf{ of }$\mathbf{n}$\textbf{-mosaics} as the tensor
product
\[
\mathcal{M}^{(n)}=%
{\displaystyle\bigotimes\limits_{p=0}^{n^{2}-1}}
\mathcal{H}\text{ .}%
\]
Thus, the induced orthonormal basis of $\mathcal{M}^{(n)}$ consists of all
possible $n^{2}$-fold tensor products of the above 11 mosaic tiles, i.e., the
induced basis is
\[
\left\{
{\textstyle\bigotimes\limits_{p=0}^{n^{2}-1}}
\left\vert T_{\ell\left(  p\right)  }\right\rangle \right\}  \text{ \ .}%
\]
We then use the above defined linear ordering on the set $\mathbb{T}^{(n)}$ of
mosaic tiles to \textbf{lexicographically (lex) order} all the basis elements
of $\mathcal{M}^{(n)}$. \ We also denote this linear ordering by `$<$'.
\ Finally, using \textbf{row major order}, we identify\ each basis element $%
{\textstyle\bigotimes\limits_{p=0}^{n^{2}-1}}
\left\vert T_{\ell\left(  p\right)  }\right\rangle $ with the ket $\left\vert
M\right\rangle $ labeled by the $n$-mosaic $M=\left(  M_{ij}\right)  =\left(
T_{\ell\left(  ni+j\right)  }\right)  $. \ In other words, we have used row
major order to set up a one-to-one correspondence between basis elements of
$\mathcal{M}^{(n)}$ and the set of $n$-mosaics $\mathbb{M}^{(n)}$. \ 

\bigskip

For example, for $n=3$ the basis element%
\[
\left\vert T_{2}\right\rangle \otimes\left\vert T_{5}\right\rangle
\otimes\left\vert T_{4}\right\rangle \quad\otimes\quad\left\vert
T_{9}\right\rangle \otimes\left\vert T_{2}\right\rangle \otimes\left\vert
T_{1}\right\rangle \quad\otimes\quad\left\vert T_{5}\right\rangle
\otimes\left\vert T_{8}\right\rangle \otimes\left\vert T_{3}\right\rangle
\]
is identified with the $n$-mosaic labeled ket%
\[
\left\vert
\begin{array}
[c]{ccc}%
T_{2} & T_{5} & T_{4}\\
T_{9} & T_{2} & T_{1}\\
T_{5} & T_{8} & T_{3}%
\end{array}
\right\rangle \text{ .}%
\]

\bigskip

We can now define the \textbf{Hilbert space of knot }$\mathbf{n}%
$\textbf{-mosaics }$\mathcal{K}^{(n)}$ as the sub-Hilbert space $\mathcal{K}%
^{(n)}$ of $\mathcal{M}^{(n)}$ spanned by all orthonormal basis elements
labeled by knot $n$-mosaics.

\bigskip

Since
\[
\mathbb{K}^{(1)}=\left\{
\raisebox{-0.0502in}{\includegraphics[
height=0.1773in,
width=0.1773in
]%
{ut00.ps}%
}%
\right\}  \text{ \ \ and \ \ }\mathbb{K}^{(2)}=\left\{  \
\begin{array}
[c]{cc}%
{\includegraphics[
height=0.1773in,
width=0.1773in
]%
{ut00.ps}%
}%
&
{\includegraphics[
height=0.1773in,
width=0.1773in
]%
{ut00.ps}%
}%
\\%
{\includegraphics[
height=0.1773in,
width=0.1773in
]%
{ut00.ps}%
}%
&
{\includegraphics[
height=0.1773in,
width=0.1773in
]%
{ut00.ps}%
}%
\end{array}
,\quad%
\begin{array}
[c]{cc}%
{\includegraphics[
height=0.1773in,
width=0.1773in
]%
{ut02.ps}%
}%
&
{\includegraphics[
height=0.1773in,
width=0.1773in
]%
{ut01.ps}%
}%
\\%
{\includegraphics[
height=0.1773in,
width=0.1773in
]%
{ut03.ps}%
}%
&
{\includegraphics[
height=0.1773in,
width=0.1773in
]%
{ut04.ps}%
}%
\end{array}
\ \right\}  \text{ ,}%
\]
the first two Hilbert spaces $\mathcal{K}^{(1)}$ and $\mathcal{K}^{(2)}$ are
of dimensions $1$ and $2$, respectively. \ The third Hilbert space
$\mathcal{K}^{(3)}$ is of dimension $22$, as is demonstrated by the complete
list of all possible knot $3$-mosaics given Appendix A.

\bigskip

An example of an element of the Hilbert space $\mathcal{K}^{(4)}$ is
\[
\frac{\left\vert
\begin{array}
[c]{cccc}%
{\includegraphics[
height=0.1773in,
width=0.1773in
]%
{ut00.ps}%
}%
&
{\includegraphics[
height=0.1773in,
width=0.1773in
]%
{ut02.ps}%
}%
&
{\includegraphics[
height=0.1773in,
width=0.1773in
]%
{ut01.ps}%
}%
&
{\includegraphics[
height=0.1773in,
width=0.1773in
]%
{ut00.ps}%
}%
\\%
{\includegraphics[
height=0.1773in,
width=0.1773in
]%
{ut02.ps}%
}%
&
{\includegraphics[
height=0.1773in,
width=0.1773in
]%
{ut09.ps}%
}%
&
{\includegraphics[
height=0.1773in,
width=0.1773in
]%
{ut10.ps}%
}%
&
{\includegraphics[
height=0.1773in,
width=0.1773in
]%
{ut01.ps}%
}%
\\%
{\includegraphics[
height=0.1773in,
width=0.1773in
]%
{ut06.ps}%
}%
&
{\includegraphics[
height=0.1773in,
width=0.1773in
]%
{ut03.ps}%
}%
&
{\includegraphics[
height=0.1773in,
width=0.1773in
]%
{ut09.ps}%
}%
&
{\includegraphics[
height=0.1773in,
width=0.1773in
]%
{ut04.ps}%
}%
\\%
{\includegraphics[
height=0.1773in,
width=0.1773in
]%
{ut03.ps}%
}%
&
{\includegraphics[
height=0.1773in,
width=0.1773in
]%
{ut05.ps}%
}%
&
{\includegraphics[
height=0.1773in,
width=0.1773in
]%
{ut04.ps}%
}%
&
{\includegraphics[
height=0.1773in,
width=0.1773in
]%
{ut00.ps}%
}%
\end{array}
\right\rangle +\left\vert
\begin{array}
[c]{cccc}%
{\includegraphics[
height=0.1773in,
width=0.1773in
]%
{ut00.ps}%
}%
&
{\includegraphics[
height=0.1773in,
width=0.1773in
]%
{ut02.ps}%
}%
&
{\includegraphics[
height=0.1773in,
width=0.1773in
]%
{ut01.ps}%
}%
&
{\includegraphics[
height=0.1773in,
width=0.1773in
]%
{ut00.ps}%
}%
\\%
{\includegraphics[
height=0.1773in,
width=0.1773in
]%
{ut02.ps}%
}%
&
{\includegraphics[
height=0.1773in,
width=0.1773in
]%
{ut09.ps}%
}%
&
{\includegraphics[
height=0.1773in,
width=0.1773in
]%
{ut07.ps}%
}%
&
{\includegraphics[
height=0.1773in,
width=0.1773in
]%
{ut01.ps}%
}%
\\%
{\includegraphics[
height=0.1773in,
width=0.1773in
]%
{ut03.ps}%
}%
&
{\includegraphics[
height=0.1773in,
width=0.1773in
]%
{ut07.ps}%
}%
&
{\includegraphics[
height=0.1773in,
width=0.1773in
]%
{ut09.ps}%
}%
&
{\includegraphics[
height=0.1773in,
width=0.1773in
]%
{ut04.ps}%
}%
\\%
{\includegraphics[
height=0.1773in,
width=0.1773in
]%
{ut00.ps}%
}%
&
{\includegraphics[
height=0.1773in,
width=0.1773in
]%
{ut03.ps}%
}%
&
{\includegraphics[
height=0.1773in,
width=0.1773in
]%
{ut04.ps}%
}%
&
{\includegraphics[
height=0.1773in,
width=0.1773in
]%
{ut00.ps}%
}%
\end{array}
\right\rangle }{\sqrt{2}}%
\]

\bigskip

Our next step is to identify each element $g$ of the ambient group
$\mathbb{A}(n)$ with the linear transformation%
\[%
\begin{array}
[c]{ccc}%
\mathcal{K}^{(n)} & \overset{g}{\longrightarrow} & \mathcal{K}^{(n)}\\
\left\vert K\right\rangle  & \longmapsto & \left\vert gK\right\rangle
\end{array}
\text{ .}%
\]
This is a unitary transformation, since the element $g$ simply permutes the
basis elements of $\mathcal{K}^{(n)}$. \ In this way, the ambient group
$\mathbb{A}(n)$ is identified with a discrete unitary subgroup (also denoted
by $\mathbb{A}(n)$ ) of the group $U\left(  \mathcal{K}^{(n)}\right)  $, where
$U\left(  \mathcal{K}^{(n)}\right)  $ denotes the \textbf{group of all unitary
transformations on the Hilbert space} $\mathcal{K}^{(n)}$. \ The unitary
subgroup $\mathbb{A}(n)$ will also be called the\textbf{ ambient group}.

\bigskip

Finally, everything comes together with the following definition:

\bigskip

\begin{definition}
Let $n$ be a positive integer. \ A \textbf{quantum knot system} $Q\left(
\mathcal{K}^{(n)},\mathbb{A}(n)\right)  $ \textbf{of order }$n$ is a quantum
system with the Hilbert space $\mathcal{K}^{(n)}$ of knot $n$-mosaics as its
state space, and having the ambient group $\mathbb{A}(n)$ as an accessible
unitary control group. \ The states of the quantum system $Q\left(
\mathcal{K}^{(n)},\mathbb{A}(n)\right)  $ are called \textbf{quantum knots of
order }$n$, and the elements of the ambient group $\mathbb{A}(n)$ are called
\textbf{unitary knot moves}. \ Moreover, the quantum knot system $Q\left(
\mathcal{K}^{(n)},\mathbb{A}(n)\right)  $ of\textbf{ }order\textbf{ }$n$ is a
subsystem of the quantum knot system $Q\left(  \mathcal{K}^{(n+1)}%
,\mathbb{A}(n+1)\right)  $ of order\textbf{ }$n+1$. \ Thus, the quantum knot
systems $Q\left(  \mathcal{K}^{(n)},\mathbb{A}(n)\right)  $ collectively
become a \textbf{nested} \textbf{sequence of quantum knot systems} which we
will denote simply by $Q\left(  \mathcal{K},\mathbb{A}\right)  $. \ In other
words,
\[
Q\left(  \mathcal{K},\mathbb{A}\right)  =Q\left(  \mathcal{K}^{(1)}%
,\mathbb{A}(1)\right)  \longrightarrow Q\left(  \mathcal{K}^{(2)}%
,\mathbb{A}(2)\right)  \longrightarrow\cdots\longrightarrow Q\left(
\mathcal{K}^{(n)},\mathbb{A}(n)\right)  \longrightarrow\cdots
\]

\end{definition}

\bigskip

\begin{remark}
The nested quantum knot system $Q\left(  \mathcal{K},\mathbb{A}\right)  $ is
probably not a physically realizable system. \ However, each quantum knot
system $Q\left(  \mathcal{K}^{(n)},\mathbb{A}(n)\right)  $ of order $n$ is
physically realizable\footnote{It should be mentioned that, although the
quantum knot system $Q\left(  \mathcal{K}^{(n)},\mathbb{A}(n)\right)  $ is
physically realizable, it may or may not be implentable within today's
existing technology.}. \ 
\end{remark}

\bigskip

\begin{example}
As an example, in the quantum system $Q\left(  \mathcal{K}^{(5)}%
,\mathbb{A}(5)\right)  $, we see that the action of the unitary Reidemeister 2
move
\[%

\right\rangle }{\sqrt{2}}\text{ .}%
\]

\end{example}

\bigskip

\subsection{Quantum knot type}

\bigskip

When are two quantum knots the same?

\bigskip

\begin{definition}
Let $\left\vert \psi_{1}\right\rangle $ and $\left\vert \psi_{2}\right\rangle
$ be two quantum knots of a quantum system $Q\left(  \mathcal{K}%
^{(n)},\mathbb{A}(n)\right)  $ of order $n$. \ Then $\left\vert \psi
_{1}\right\rangle $ and $\left\vert \psi_{2}\right\rangle $ are said to be of
the \textbf{same quantum knot }$\mathbf{n}$\textbf{-type}, written%
\[
\left\vert \psi_{1}\right\rangle \underset{n}{\thicksim}\left\vert \psi
_{2}\right\rangle \text{ ,}%
\]
provided there exists a unitary transformation $g$ in the ambient group
$\mathbb{A}(n)$ which transforms $\left\vert \psi_{1}\right\rangle $ into
$\left\vert \psi_{2}\right\rangle $, i.e., such that%
\[
g\left\vert \psi_{1}\right\rangle =\left\vert \psi_{2}\right\rangle \text{ .}%
\]
They are said to be of the \textbf{same quantum knot type}, written
\[
\left\vert \psi_{1}\right\rangle \thicksim\left\vert \psi_{2}\right\rangle
\text{ ,}%
\]
provided that for some non-negative integer $\ell$,
\[
\iota^{\ell}\left\vert \psi_{1}\right\rangle \underset{n+\ell}{\thicksim}%
\iota^{\ell}\left\vert \psi_{2}\right\rangle \text{ ,}%
\]
where $\iota^{\ell}\left\vert \psi_{1}\right\rangle $ and $\iota^{\ell
}\left\vert \psi_{2}\right\rangle $ are states of the quantum system $Q\left(
\mathcal{K}^{(n+\ell)},\mathbb{A}(n+\ell)\right)  $, where $\iota
:\mathcal{K}^{(m)}\longrightarrow\mathcal{K}^{(m+1)}$ is the monomorphism
induced by the previously defined injection $\iota:\mathbb{K}^{(m)}%
\longrightarrow\mathbb{K}^{(m+1)}$.
\end{definition}

\bigskip

Thus, the two quantum knots found in the last example of the previous section
are of the same quantum knot $5$-type, and also of the same quantum knot type.
\ Surprisingly, the following two quantum knots $\left\vert \psi
_{1}\right\rangle $ and $\left\vert \psi_{2}\right\rangle $ are neither of the
same quantum knot $3$-type nor knot type:
\[
\left\vert \psi_{1}\right\rangle =\left\vert
\begin{array}
[c]{ccc}%
{\includegraphics[
height=0.1773in,
width=0.1773in
]%
{ut02.ps}%
}%
&
{\includegraphics[
height=0.1773in,
width=0.1773in
]%
{ut05.ps}%
}%
&
{\includegraphics[
height=0.1773in,
width=0.1773in
]%
{ut01.ps}%
}%
\\%
{\includegraphics[
height=0.1773in,
width=0.1773in
]%
{ut06.ps}%
}%
&
{\includegraphics[
height=0.1773in,
width=0.1773in
]%
{ut00.ps}%
}%
&
{\includegraphics[
height=0.1773in,
width=0.1773in
]%
{ut06.ps}%
}%
\\%
{\includegraphics[
height=0.1773in,
width=0.1773in
]%
{ut03.ps}%
}%
&
{\includegraphics[
height=0.1773in,
width=0.1773in
]%
{ut05.ps}%
}%
&
{\includegraphics[
height=0.1773in,
width=0.1773in
]%
{ut04.ps}%
}%
\end{array}
\right\rangle \text{ \ \ and \ \ }\left\vert \psi_{2}\right\rangle =\frac
{1}{\sqrt{2}}\left(  \left\vert
\begin{array}
[c]{ccc}%
{\includegraphics[
height=0.1773in,
width=0.1773in
]%
{ut02.ps}%
}%
&
{\includegraphics[
height=0.1773in,
width=0.1773in
]%
{ut01.ps}%
}%
&
{\includegraphics[
height=0.1773in,
width=0.1773in
]%
{ut00.ps}%
}%
\\%
{\includegraphics[
height=0.1773in,
width=0.1773in
]%
{ut03.ps}%
}%
&
{\includegraphics[
height=0.1773in,
width=0.1773in
]%
{ut04.ps}%
}%
&
{\includegraphics[
height=0.1773in,
width=0.1773in
]%
{ut00.ps}%
}%
\\%
{\includegraphics[
height=0.1773in,
width=0.1773in
]%
{ut00.ps}%
}%
&
{\includegraphics[
height=0.1773in,
width=0.1773in
]%
{ut00.ps}%
}%
&
{\includegraphics[
height=0.1773in,
width=0.1773in
]%
{ut00.ps}%
}%
\end{array}
\right\rangle +\left\vert
\begin{array}
[c]{ccc}%
{\includegraphics[
height=0.1773in,
width=0.1773in
]%
{ut00.ps}%
}%
&
{\includegraphics[
height=0.1773in,
width=0.1773in
]%
{ut00.ps}%
}%
&
{\includegraphics[
height=0.1773in,
width=0.1773in
]%
{ut00.ps}%
}%
\\%
{\includegraphics[
height=0.1773in,
width=0.1773in
]%
{ut00.ps}%
}%
&
{\includegraphics[
height=0.1773in,
width=0.1773in
]%
{ut02.ps}%
}%
&
{\includegraphics[
height=0.1773in,
width=0.1773in
]%
{ut01.ps}%
}%
\\%
{\includegraphics[
height=0.1773in,
width=0.1773in
]%
{ut00.ps}%
}%
&
{\includegraphics[
height=0.1773in,
width=0.1773in
]%
{ut03.ps}%
}%
&
{\includegraphics[
height=0.1773in,
width=0.1773in
]%
{ut04.ps}%
}%
\end{array}
\right\rangle \right)
\]

\bigskip

This follows from the fact that the ambient group $\mathbb{A}(n)$ is generated
by a finite set of involutions. \ 

\bigskip

\subsection{Hamiltonians of the generators of the ambient group $\mathbb{A}$}

\bigskip

In this section, we show how to find the Hamiltonians associated with the
generators of the ambient group $\mathbb{A}\left(  n\right)  $, i.e., the
planar isotopy and Reidemeister moves for quantum knots.

\bigskip

Let $g$ be an arbitrary planar isotopy move or Reidemeister move in the
ambient group $\mathbb{A}(n)$. \ From proposition 1, we know that $g$, as a
permutation, is the product of disjoint transpositions of knot $n$-mosaics,
i.e., of the form%
\[
g=\left(  K_{\alpha_{1}},K_{\beta_{1}}\right)  \left(  K_{\alpha_{2}}%
,K_{\beta_{2}}\right)  \cdots\left(  K_{\alpha_{\ell}},K_{\beta_{\ell}%
}\right)
\]
Without loss of generality, we can assume that $\alpha_{j}<\beta_{j}$ for
$1\leq j\leq\ell$, and that $\alpha_{j}<\alpha_{j+1}$ for $1\leq j<\ell$,
where `$<$' denotes the previously defined lexicographic (lex) order on the
set of $n$-mosaics $\mathbb{M}^{(n)}$. \ For each permutation $\eta$ of
$\mathbb{M}^{(n)}$, let `$<_{\eta}$' denote the new linear ordering created by
the application of the permutation $\eta$. \ 

\bigskip

Choose a permutation $\eta$ such that
\[
K_{\alpha_{1}}<_{\eta}K_{\beta_{1}}<_{\eta}K_{\alpha_{2}}<_{\eta}K_{\beta_{2}%
}<_{\eta}\cdots<_{\eta}K_{\alpha_{\ell}}<_{\eta}K_{\beta_{\ell}}%
\]
with $K_{\beta_{\ell}}$ $<_{\eta}$ all other $n$-mosaics, and let $\sigma_{0}$
and $\sigma_{1}$ denote respectively the identity matrix and the first Pauli
spin matrix given below%
\[
\sigma_{0}=\left(
\begin{array}
[c]{cc}%
1 & 0\\
0 & 1
\end{array}
\right)  \text{ \ \ and \ \ }\sigma_{1}=\left(
\begin{array}
[c]{cc}%
0 & 1\\
1 & 0
\end{array}
\right)  \text{ .}%
\]

\bigskip

Then in the $\eta$-reordered basis of the Hilbert space $\mathcal{K}^{(n)}$,
the element $g$, as a unitary transformation, is of the form%
\[
\eta^{-1}g\eta=\left(
\begin{array}
[c]{ccccc}%
\sigma_{1} & O & \ldots & O & O\\
O & \sigma_{1} & \ldots & O & O\\
\vdots & \vdots & \ddots &  & \vdots\\
O & O & \ldots & \sigma_{1} & O\\
O & O & \ldots & O & I_{n-2\ell}%
\end{array}
\right)  =\left(  I_{\ell}\otimes\sigma_{1}\right)  \oplus I_{n-2\ell}\text{
,}%
\]
where `$O$' denotes an all zero matrix of appropriate size, where $I_{n-2\ell
}$ denotes the $\left(  n-2\ell\right)  \times\left(  n-2\ell\right)  $
identity matrix, and where `$\oplus$' denotes the direct sum of matrices,
i.e., $A\oplus B=\left(
\begin{array}
[c]{cc}%
A & O\\
O & B
\end{array}
\right)  $.

\bigskip

The natural log of $\sigma_{1}$ is%
\[
\ln\sigma_{1}=\frac{i\pi}{2}\left(  2s+1\right)  \left(  \sigma_{0}-\sigma
_{1}\right)
\]
where $s$ denotes an arbitrary integer. \ Hence, the natural log, $\ln\left(
\eta^{-1}g\eta\right)  $, of the unitary transformation $\eta^{-1}g\eta$ is

\bigskip

$\hspace{-0.85in}\frac{i\pi}{2}\left(
\begin{array}
[c]{cc}%
\left(
\begin{array}
[c]{cccc}%
\left(  2s_{1}+1\right)  \left(  \sigma_{0}-\sigma_{1}\right)  & O & \ldots &
O\\
O & \left(  2s_{2}+1\right)  \left(  \sigma_{0}-\sigma_{1}\right)  & \ldots &
O\\
\vdots & \vdots & \ddots & \vdots\\
O & O & \ldots & \left(  2s_{\ell}+1\right)  \left(  \sigma_{0}-\sigma
_{1}\right)
\end{array}
\right)  & O\\
O & O_{\left(  n-2\ell\right)  \times\left(  n-2\ell\right)  }%
\end{array}
\right)  $

\bigskip

\noindent where $s_{1},s_{2},\ldots,s_{\ell}$ are arbitrary
integers.\footnote{Let $U$ be an arbitrary finite $r\times r$ unitary matrix,
and let $W$ be a unitary matrix that diagonalizes $U$, i.e., a unitary matrix
$W$ such that $WUW^{-1}=\Delta\left(  \ \lambda(1),\lambda(2)\ldots
,\lambda(r)\ \right)  $. \ Then the natural log of $A$ is $\ln A=W^{-1}%
\Delta\left(  \ \ln\lambda(1),\ln\lambda(2)\ldots,\ln\lambda(r)\ \right)  W$.}

\bigskip

Since we are interested only in the simplest Hamiltonian, we choose the
\textbf{principal branch} $\ln_{P}$ of the natural log, i.e., the branch for
which $s_{1}=s_{2}=\cdots s_{\ell}=0$, and obtain for our Hamiltonian%
\[
H_{g}=-i\eta\left[  \ln_{P}\left(  \eta^{-1}g\eta\right)  \right]  \eta
^{-1}=\frac{\pi}{2}\eta\left(

\right\rangle \right)
\]
where $t$ denotes time, and where $\hslash$ denotes Planck's constant divided
by $2\pi$.

\bigskip

\subsection{Knot crossing tunnelling and other unitary transformations}

\bigskip

We should also mention a number of other miscellaneous unitary transformations
that do not lie in the ambient group $\mathbb{A}\left(  n\right)  $, but are
nonetheless of interest.

\bigskip

There is the unitary transformation $\tau_{ij}$, called the \textbf{tunnelling
transformation}, given by
\[
\tau_{ij}=%
\raisebox{-0.1003in}{\includegraphics[
height=0.3269in,
width=0.3269in
]%
{ut09.ps}%
}%
\overset{(i,j)}{\longleftrightarrow}%
\raisebox{-0.1003in}{\includegraphics[
height=0.3269in,
width=0.3269in
]%
{ut10.ps}%
}%
\text{ ,}%
\]
which enables a quantum knot overcrossing to tunnel into an undercrossing, or
vice versa.

\bigskip

From this, we can construct the \textbf{mirror image transformation} given by%
\[
\mu=%
{\displaystyle\prod\limits_{i,j=0}^{n-1}}
\left(
\raisebox{-0.1003in}{\includegraphics[
height=0.3269in,
width=0.3269in
]%
{ut09.ps}%
}%
\overset{(i,j)}{\longleftrightarrow}%
\raisebox{-0.1003in}{\includegraphics[
height=0.3269in,
width=0.3269in
]%
{ut10.ps}%
}%
\right)
\]
that transforms a quantum knot into its mirror image.

\bigskip

The following two unitary transformations can be used to create four
dimensional quantum knots\footnote{A four dimensional classical knot is a
knotted 2-sphere in 4-space. \ For readers interested in learning more about
higher dimensional knot theory, we refer them to \cite{Lomonaco5} and
\cite{Lomonaco6}.}. \ The first is the \textbf{hyperbolic transformation}
$\eta_{ij}$ given by%
\[
\eta_{ij}=%
\raisebox{-0.1003in}{\includegraphics[
height=0.3269in,
width=0.3269in
]%
{ut07.ps}%
}%
\overset{(i,j)}{\longleftrightarrow}%
\raisebox{-0.1003in}{\includegraphics[
height=0.3269in,
width=0.3269in
]%
{ut08.ps}%
}%
\text{ ,}%
\]
and the second is the \textbf{elliptic transformation} $\varepsilon_{ij}$
given by%
\[
\varepsilon_{ij}=%
\begin{array}
[c]{cc}%
{\includegraphics[
height=0.3269in,
width=0.3269in
]%
{ut02.ps}%
}%
&
{\includegraphics[
height=0.3269in,
width=0.3269in
]%
{ut01.ps}%
}%
\\%
{\includegraphics[
height=0.3269in,
width=0.3269in
]%
{ut03.ps}%
}%
&
{\includegraphics[
height=0.3269in,
width=0.3269in
]%
{ut04.ps}%
}%
\end{array}
\overset{(i,j)}{\longleftrightarrow}%
\begin{array}
[c]{cc}%
{\includegraphics[
height=0.3269in,
width=0.3269in
]%
{ut00.ps}%
}%
&
{\includegraphics[
height=0.3269in,
width=0.3269in
]%
{ut00.ps}%
}%
\\%
{\includegraphics[
height=0.3269in,
width=0.3269in
]%
{ut00.ps}%
}%
&
{\includegraphics[
height=0.3269in,
width=0.3269in
]%
{ut00.ps}%
}%
\end{array}
\text{ .}%
\]

\bigskip

More will be said about these transformations in future papers.

\bigskip

\subsection{Quantum observables as invariants of quantum knots}

\bigskip

We now consider the following question:

\bigskip

\noindent\textbf{Question.} \ \textit{What is a quantum knot invariant? \ How
do we define it?}

\bigskip

The objective of the first half of this section is to give a discursive
argument that justifies a definition which will be found to be equivalent to
the following:

\bigskip

\textit{A }\textbf{quantum knot }$n$\textbf{-invariant}\textit{ for a quantum
system }$Q\left(  \mathcal{K}^{(n)},\mathbb{A}(n)\right)  $\textit{ is an
observable }$\Omega$\textit{ on the Hilbert space }$\mathcal{K}^{(n)}$\textit{
of quantum knots which is invariant under the action of the ambient group
}$\mathbb{A}(n)$\textit{, i.e., such that }$U\Omega U^{-1}=\Omega$\textit{ for
all }$U$ in $\mathbb{A}(n)$\textit{. \ }

\bigskip

\noindent\textbf{Caveat:} \ \textit{We emphasize to the reader that the above
definition of quantum knot invariants is not the one currently used in quantum
topology. Quantum topology uses analogies with quantum mechanics to create
significant mathematical structures that do not necessarily correspond
directly to quantum mechanical observables. The invariants of quantum topology
have been investigated for their relevance to quantum computing and they can
be regarded, in our context, as possible secondary calculations made on the
basis of an observable. Here we are concerned with observables that are
themselves topological invariants.}

\bigskip

To justify our new use of the term `quantum knot invariant,' we will use the
following \textsc{yardstick}:

\bigskip

\noindent\textsc{Yardstick:} \textit{Quantum knot invariants are to be
physically meaningful invariants of quantum knot type. \ By "physically
meaningful," we mean that the quantum knot invariants can be directly obtained
from experimental data produced by an implementable physical
experiment.\footnote{Once again we remind the reader that, although the
quantum knot system $Q\left(  \mathcal{K}^{(n)},\mathbb{A}(n)\right)  $ is
physically implementable, it may or may not be implentable within today's
existing technology.}}

\bigskip

Let $Q\left(  \mathcal{K}^{(n)},\mathbb{A}(n)\right)  $ be a quantum knot
system, where $\mathcal{K}^{(n)}$ is the Hilbert space of quantum knots, and
where $\mathbb{A}(n)$ is the underlying ambient group on $\mathcal{K}^{(n)}$.
\ Moreover, let $\mathcal{P}^{(n)}$ denote some yet-to-be-chosen mathematical
domain. \ By an $n$\textbf{-invariant} $I^{(n)}$ of quantum knots, we mean a
map%
\[
I^{(n)}:\mathcal{K}^{(n)}\longrightarrow\mathcal{P}^{(n)}\text{ ,}%
\]
such that, when two quantum knots $\left\vert \psi_{1}\right\rangle $ and
$\left\vert \psi_{2}\right\rangle $ are of the same knot $n$-type, i.e., when%
\[
\left\vert \psi_{1}\right\rangle \underset{n}{\sim}\left\vert \psi
_{2}\right\rangle \text{ ,}%
\]
then their respective invariants must be equal, i.e.,
\[
I^{(n)}\left(  \left\vert \psi_{1}\right\rangle \right)  =I^{(n)}\left(
\left\vert \psi_{2}\right\rangle \right)
\]
In other words, $I^{(n)}:\mathcal{K}^{(n)}\longrightarrow\mathcal{P}^{(n)}$ is
a map which is invariant under the action of the ambient group $\mathbb{A}%
(n)$, i.e.,%
\[
I^{(n)}\left(  \left\vert \psi\right\rangle \right)  =I^{(n)}\left(
g\left\vert \psi\right\rangle \right)
\]
for all elements $g$ in $\mathbb{A}(n)$.

\bigskip

\noindent\textbf{Question:} \textit{But which such invariants are physically
meaningful?}\ 

\bigskip

We begin to try to answer this question by noting that the only way to extract
information from a quantum system is through quantum measurement. \ Thus, if
we wish to extract information about quantum knot type from a quantum knot
system $Q\left(  \mathcal{K}^{(n)},\mathbb{A}(n)\right)  $\textit{, }we of
necessity must make a measurement with respect to some observable. \ But what
kind of observable? \ 

\bigskip

With this in mind, we will now describe quantum measurement from a different,
but nonetheless equivalent perspective, than that which is usually given in
standard texts on quantum mechanics.\footnote{In this paper, we will focus
only on von Neumann quantum measurement. \ We will discuss more general POVM
approach to quantum knot invariants in a later paper.} \ For knot theorists
who might not be familiar with standard quantum measurement, we have included
in the figure below a brief summary of quantum measurement.\footnote{For
readers unfamiliar with quantum measurement, there are many references, for
example, \cite{Helstrom1}, \cite{Lomonaco3} \cite{Nielsen1}, \cite{Peres1},
\cite{Sakuri1}, and \cite{Shankar1}.} \ 

\bigskip%

\begin{center}
\fbox{\includegraphics[
height=3.0277in,
width=4.0274in
]%
{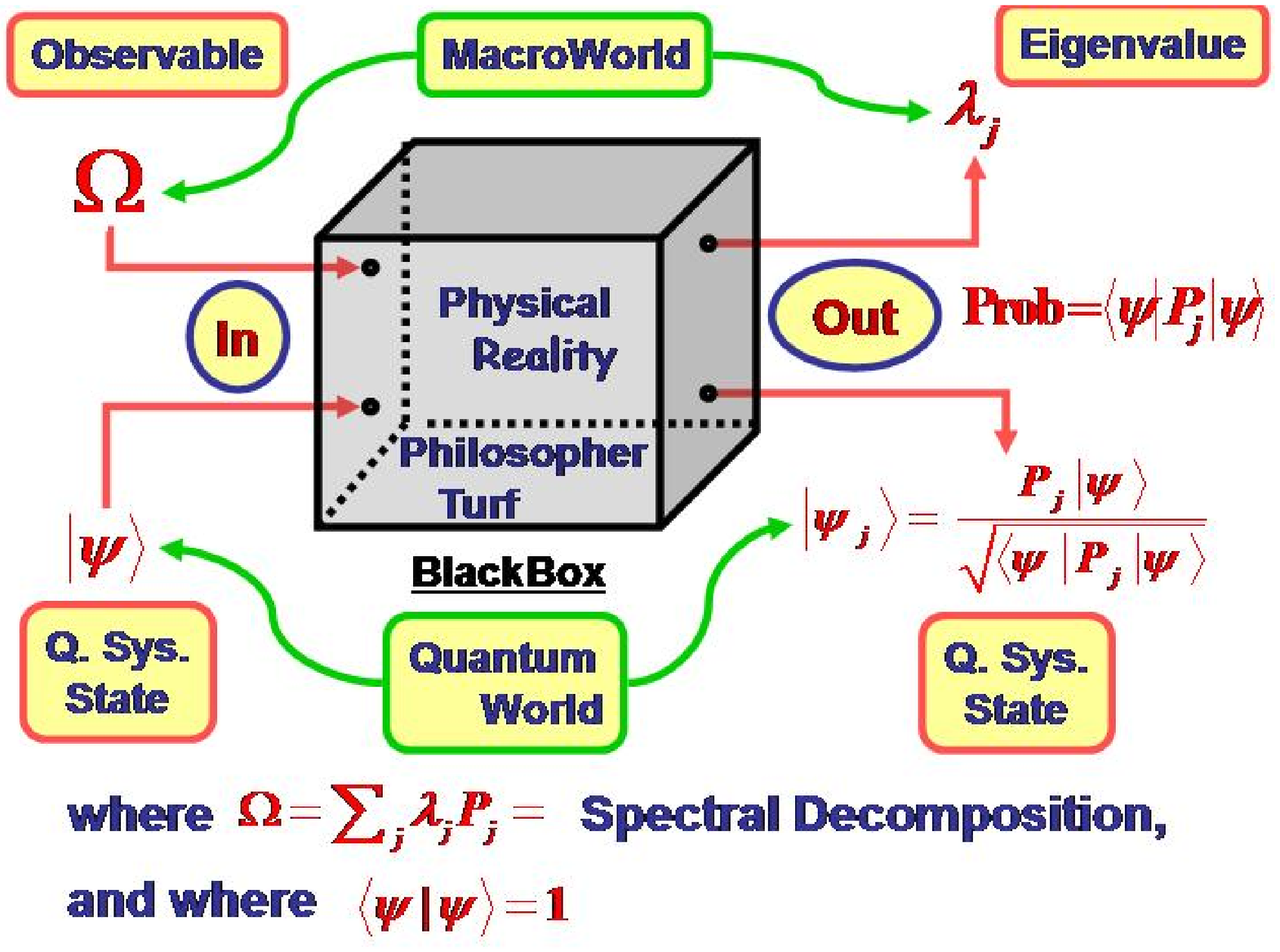}%
}\\
\textbf{Von Neumann measurement.}%
\end{center}

\bigskip

Let $\Omega$ be an observable for a quantum system $Q\left(  \mathcal{K}%
^{(n)},\mathbb{A}(n)\right)  $, i.e., a Hermitian (self-adjoint) linear
operator on the Hilbert space $\mathcal{K}^{(n)}$. \ Moreover, let
\[
\Omega=\sum_{j=1}^{m}\lambda_{j}P_{j}%
\]
be the spectral decomposition of the observable $\Omega$, where $\lambda_{j}$
is the $j$-th eigenvalue of $\Omega$, and where $P_{j}$ is the corresponding
projection operator for the associated eigenspace $V_{j}$.

\bigskip

Let $\mathcal{P}_{\Omega}^{(n)}$ denote the \textbf{set of all probability
distributions on the spectrum} $\left\{  \lambda_{1},\lambda_{2}%
,\ldots,\lambda_{m}\right\}  $ of $\Omega$, i.e.,%
\[
\mathcal{P}_{\Omega}^{(n)}=\left\{  p:\left\{  \lambda_{1},\lambda_{2}%
,\ldots,\lambda_{m}\right\}  \longrightarrow\left[  0,1\right]  :\sum
_{j=1}^{m}p\left(  \lambda_{j}\right)  =1\right\}
\]

We will call the probability distributions $p$ of $\mathcal{P}_{\Omega}^{(n)}$
\textbf{stochastic sources}.

\bigskip\bigskip

Then each observable $\Omega$ uniquely determines a map%
\[%
\begin{array}
[c]{cccc}%
\widetilde{\Omega}: & \mathcal{K}^{(n)} & \longrightarrow & \mathcal{P}%
_{\Omega}^{(n)}\\
& \left\vert \psi\right\rangle  & \longmapsto & p
\end{array}
\]
from quantum knots to stochastic sources on the spectrum of $\Omega$ given by%
\[
p_{j}\left(  \left\vert \psi\right\rangle \right)  =\frac{\left\langle
\psi\left\vert P_{j}\right\vert \psi\right\rangle }{\sqrt{\left\langle
\psi|\psi\right\rangle }}\text{.}%
\]

\bigskip

Thus, what is seen, when a quantum system $Q$ in state $\left\vert
\psi\right\rangle $ is measured with respect to an observable $\Omega$, is a
random sample from the stochastic source $\widetilde{\Omega}\left(  \left\vert
\psi\right\rangle \right)  $. \ But under what circumstances is such a random
sample a quantum knot invariant?

\bigskip

Our answer to this question is that quantum knots $\left\vert \psi
_{1}\right\rangle $ and $\left\vert \psi_{2}\right\rangle $ of the same knot
$n$-type must produce random samples from the same stochastic source when
measured with respect to the observable $\Omega$. \ This answer is captured by
the following definition:

\bigskip

\begin{definition}
Let $Q\left(  \mathcal{K}^{(n)},\mathbb{A}(n)\right)  $ be a quantum knot
system, and let $\Omega$ be an observable on $\mathcal{K}^{(n)}$ with spectral
decomposition
\[
\Omega=\sum_{j=1}^{m}\lambda_{j}P_{j}\text{ .}%
\]
Then the observable $\Omega$ is said to be a \textbf{quantum knot }%
$n$\textbf{-invariant} provided%
\[
\left\langle \psi\left\vert UP_{j}U^{-1}\right\vert \psi\right\rangle
=\left\langle \psi\left\vert P_{j}\right\vert \psi\right\rangle
\]
for all quantum knots $\left\vert \psi\right\rangle \in\mathcal{K}^{(n)}$, for
all $U\in\mathbb{A}(n)$, and for all projectors $P_{j}$.
\end{definition}

\bigskip

\begin{theorem}
Let $Q\left(  \mathcal{K}^{(n)},\mathbb{A}(n)\right)  $ and $\Omega$ be as
given in the above definition. Then the following statements are equivalent:

\begin{itemize}
\item[\textbf{1)}] The observable $\Omega$ is a quantum knot $n$-invariant

\item[\textbf{2)}] $\left[  U,P_{j}\right]  =0$ for all $U\in\mathbb{A}(n)$
and for all $P_{j}$.

\item[\textbf{3)}] $\left[  U,\Omega\right]  =0$ for all $U\in\mathbb{A}(n)$,
\end{itemize}

\noindent where $\left[  A,B\right]  $ denotes the commutator $AB-BA$ of
operators $A$ and $B$.
\end{theorem}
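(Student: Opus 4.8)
The plan is to establish the single implication $(1)\Rightarrow(2)$ together with the trivial implications $(2)\Rightarrow(1)$, $(2)\Rightarrow(3)$, and $(3)\Rightarrow(2)$; since $(3)\Rightarrow(2)\Rightarrow(1)\Rightarrow(2)\Rightarrow(3)$ closes the cycle, this yields the full equivalence. Throughout I would use only that $\mathcal{K}^{(n)}$ is a finite-dimensional complex Hilbert space, that every $U\in\mathbb{A}(n)$ is unitary (so $U^{-1}=U^{\dagger}$), and that each $P_{j}$ is an orthogonal projector, in particular Hermitian and hence so is each conjugate $UP_{j}U^{-1}$.

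First, $(2)\Rightarrow(1)$: if $[U,P_{j}]=0$ then $UP_{j}U^{-1}=P_{j}UU^{-1}=P_{j}$, so $\langle\psi|UP_{j}U^{-1}|\psi\rangle=\langle\psi|P_{j}|\psi\rangle$ holds identically, which is precisely the condition in Definition 15. Next, $(2)\Rightarrow(3)$ is immediate from the spectral decomposition: $[U,\Omega]=\sum_{j=1}^{m}\lambda_{j}[U,P_{j}]=0$. For $(3)\Rightarrow(2)$, note that $[U,\Omega]=0$ forces $U$ to carry each eigenspace $V_{j}$ of $\Omega$ into itself (if $\Omega|v\rangle=\lambda_{j}|v\rangle$ then $\Omega U|v\rangle=U\Omega|v\rangle=\lambda_{j}U|v\rangle$, and $U$ is invertible, so it maps $V_{j}$ bijectively onto $V_{j}$); consequently $U$ commutes with the orthogonal projector $P_{j}$ onto $V_{j}$. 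Equivalently — and this is the cleanest way to write it up — since the $\lambda_{j}$ are distinct, Lagrange interpolation exhibits each $P_{j}$ as a polynomial in $\Omega$, so anything commuting with $\Omega$ commutes with every $P_{j}$.

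The one implication requiring an actual argument is $(1)\Rightarrow(2)$. Fix $U\in\mathbb{A}(n)$ and a projector $P_{j}$, and set $A=UP_{j}U^{-1}-P_{j}$, which is Hermitian by the remark above. Hypothesis $(1)$ says exactly that $\langle\psi|A|\psi\rangle=0$ for \emph{every} $|\psi\rangle\in\mathcal{K}^{(n)}$. A Hermitian operator on a complex Hilbert space is determined by its associated quadratic form: testing against $|\psi\rangle=|e_{k}\rangle$, $|e_{k}\rangle+|e_{\ell}\rangle$, and $|e_{k}\rangle+i|e_{\ell}\rangle$ (for an orthonormal basis $\{|e_{k}\rangle\}$) shows that every matrix entry $\langle e_{k}|A|e_{\ell}\rangle$ vanishes, so $A=0$; that is, $UP_{j}U^{-1}=P_{j}$, i.e. $[U,P_{j}]=0$.

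There is no serious obstacle here — every step is routine linear algebra on a finite-dimensional space — but the point deserving emphasis in the write-up is the use, in $(1)\Rightarrow(2)$, of the fact that $|\psi\rangle$ ranges over \emph{all} of $\mathcal{K}^{(n)}$ rather than, say, only over the basis kets labeled by knot $n$-mosaics; the polarization step genuinely needs the superposition states, and this is exactly the kind of non-classical ingredient the quantum framework supplies.
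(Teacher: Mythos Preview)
Your argument is correct and complete: the polarization step for $(1)\Rightarrow(2)$, the Lagrange-interpolation (or eigenspace-preservation) step for $(3)\Rightarrow(2)$, and the trivial implications are all sound. The paper itself states this theorem without proof, evidently regarding the equivalence as a standard exercise in finite-dimensional spectral theory; your write-up supplies exactly the routine details the authors omitted, so there is nothing to compare beyond noting that you have filled the gap cleanly.
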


\bigskip

The remaining half of this section is devoted to finding an answer to the
following question:

\bigskip

\noindent\textbf{Question:} \ \textit{How do we find observables which are
quantum knot invariants?}

\bigskip

One answer to this question is the following theorem, which is an almost
immediate consequence of the definition of a minimum invariant subspace of
$\mathcal{K}^{(n)}$:

\bigskip

\begin{theorem}
Let $Q\left(  \mathcal{K}^{(n)},\mathbb{A}(n)\right)  $ be a quantum knot
system, and let
\[
\mathcal{K}^{(n)}=\bigoplus_{\ell}W_{\ell}%
\]
be a decomposition of the representation%
\[
\mathbb{A}(n)\times\mathcal{K}^{(n)}\longrightarrow\mathcal{K}^{(n)}%
\]
into irreducible representations of the ambient group $\mathbb{A}(n)$. \ Then,
for each $\ell$, the projection operator $P_{\ell}$ for the subspace $W_{\ell
}$ is an observable which is a quantum knot $n$-invariant.
\end{theorem}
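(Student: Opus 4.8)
The plan is to reduce everything to part 2 of the preceding theorem, which asserts that an observable is a quantum knot $n$-invariant if and only if each of its spectral projectors commutes with every element of $\mathbb{A}(n)$. Hence it suffices to check that $P_\ell$ is a bona fide observable, i.e. a self-adjoint operator on $\mathcal{K}^{(n)}$, and that $[U,P_\ell]=0$ for all $U\in\mathbb{A}(n)$.

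First I would note that $\mathbb{A}(n)$ acts on $\mathcal{K}^{(n)}$ by permuting the finite orthonormal basis indexed by $\mathbb{K}^{(n)}$, hence by unitary operators, so $\mathbb{A}(n)\times\mathcal{K}^{(n)}\to\mathcal{K}^{(n)}$ is a unitary representation on a finite-dimensional Hilbert space. Because the representation is unitary, the decomposition $\mathcal{K}^{(n)}=\bigoplus_\ell W_\ell$ into irreducible subrepresentations may be taken (and is to be understood) as an orthogonal direct sum: whenever $W$ is $\mathbb{A}(n)$-invariant so is $W^\perp$, and one peels off irreducible summands by induction on dimension. Consequently $P_\ell$ is the orthogonal projection onto $W_\ell$, which is self-adjoint; thus $P_\ell$ is an observable, with spectral decomposition $P_\ell=1\cdot P_\ell+0\cdot(I-P_\ell)$.

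Next I would verify the commutation relation. Fix $U\in\mathbb{A}(n)$. Invariance of $W_\ell$ gives $UW_\ell\subseteq W_\ell$, and unitarity of $U$ then forces $UW_\ell^\perp\subseteq W_\ell^\perp$. Writing an arbitrary state as $\left\vert\psi\right\rangle=\left\vert a\right\rangle+\left\vert b\right\rangle$ with $\left\vert a\right\rangle\in W_\ell$ and $\left\vert b\right\rangle\in W_\ell^\perp$, we obtain $UP_\ell\left\vert\psi\right\rangle=U\left\vert a\right\rangle$, while $P_\ell U\left\vert\psi\right\rangle=P_\ell\left(U\left\vert a\right\rangle+U\left\vert b\right\rangle\right)=U\left\vert a\right\rangle$, because $U\left\vert a\right\rangle\in W_\ell$ and $U\left\vert b\right\rangle\in W_\ell^\perp$. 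Hence $[U,P_\ell]=0$, and trivially $[U,I-P_\ell]=0$, for every $U\in\mathbb{A}(n)$.

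Finally I would invoke part 2 of the preceding theorem for the observable $\Omega=P_\ell$: each of its spectral projectors, namely $P_\ell$ and $I-P_\ell$, commutes with every element of $\mathbb{A}(n)$, so $P_\ell$ is a quantum knot $n$-invariant, which is what we wanted. There is essentially no obstacle here; the only point worth flagging is the appeal to unitarity to ensure the irreducible decomposition can be chosen orthogonal — this is precisely what makes $P_\ell$ Hermitian, hence an admissible observable rather than merely a commuting idempotent.
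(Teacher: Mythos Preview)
Your proof is correct and is essentially the argument the paper has in mind: the paper does not give a detailed proof but merely remarks that the theorem ``is an almost immediate consequence of the definition of a minimum invariant subspace of $\mathcal{K}^{(n)}$,'' which is precisely the content of your commutation argument. Your write-up fills in the details the paper omits, including the appeal to unitarity to guarantee the decomposition is orthogonal and the explicit reduction to the characterization in the preceding theorem.
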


\bigskip

Here is yet another way of finding quantum knot invariants:

\bigskip

\begin{theorem}
Let $Q\left(  \mathcal{K}^{(n)},\mathbb{A}(n)\right)  $ be a quantum knot
system, and let $\Omega$ be an observable on the Hilbert space $\mathcal{K}%
^{(n)}$. \ Let $St\left(  \Omega\right)  $ be the stabilizer subgroup for
$\Omega$, i.e.,
\[
St\left(  \Omega\right)  =\left\{  U\in\mathbb{A}(n):U\Omega U^{-1}%
=\Omega\right\}  \text{ .}%
\]
Then the observable%
\[
\sum_{U\in\mathbb{A}(n)/St\left(  \Omega\right)  }U\Omega U^{-1}%
\]
is a quantum knot $n$-invariant, where $\sum\limits_{U\in\mathbb{A}%
(n)/St\left(  \Omega\right)  }U\Omega U^{-1}$ denotes a sum over a complete
set of coset representatives for the stabilizer subgroup $St\left(
\Omega\right)  $ of the ambient group $\mathbb{A}(n)$.
\end{theorem}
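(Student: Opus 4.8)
The plan is to verify directly that the averaged operator $\Omega' = \sum_{U\in\mathbb{A}(n)/St(\Omega)} U\Omega U^{-1}$ satisfies condition 3) of the previous theorem, namely that $[V,\Omega']=0$ for all $V\in\mathbb{A}(n)$; by that theorem this is equivalent to $\Omega'$ being a quantum knot $n$-invariant. First I would note that $\Omega'$ is Hermitian, since each summand $U\Omega U^{-1}$ is Hermitian (conjugation by a unitary preserves self-adjointness) and a finite sum of Hermitian operators is Hermitian; hence $\Omega'$ is a genuine observable. Here one uses that $\mathbb{A}(n)$ is finite—which follows from the fact that $\mathbb{A}(n)$ is a subgroup of the permutation group of the finite set $\mathbb{K}^{(n)}$—so the coset sum is finite and well-defined.

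Next I would check that the expression $\sum_{U\in\mathbb{A}(n)/St(\Omega)} U\Omega U^{-1}$ is independent of the choice of coset representatives. If $U$ and $U'$ lie in the same left coset of $St(\Omega)$, then $U' = Us$ for some $s\in St(\Omega)$, and $U'\Omega U'^{-1} = Us\,\Omega\, s^{-1}U^{-1} = U\Omega U^{-1}$, using precisely the defining property $s\Omega s^{-1}=\Omega$ of the stabilizer. So the summand depends only on the coset, and $\Omega'$ is well-defined.

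The main step is the invariance computation. Fix $V\in\mathbb{A}(n)$. Then
\[
V\Omega' V^{-1} = \sum_{U\in\mathbb{A}(n)/St(\Omega)} (VU)\,\Omega\,(VU)^{-1}.
\]
As $U$ ranges over a complete set of left coset representatives for $St(\Omega)$, the products $VU$ range over a complete set of left coset representatives for $St(\Omega)$ as well (left multiplication by $V$ permutes the left cosets $U\,St(\Omega)$ bijectively). By the representative-independence established in the previous paragraph, the right-hand side therefore equals $\Omega'$ again. Hence $V\Omega' V^{-1}=\Omega'$, i.e. $[V,\Omega']=0$, for every $V\in\mathbb{A}(n)$. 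Invoking the equivalence of statements 1) and 3) in the preceding theorem, $\Omega'$ is a quantum knot $n$-invariant.

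I do not anticipate a serious obstacle here: the argument is a standard group-averaging ("transfer" / Reynolds operator) construction, and the only points requiring care are the finiteness of $\mathbb{A}(n)$ (needed so that the sum converges and $\Omega'$ is bounded and self-adjoint) and the well-definedness under change of coset representatives, both of which are immediate from the definition of $St(\Omega)$. If one wished to be maximally careful, one could also remark that $\Omega'$ is not identically zero in the cases of interest—though the theorem as stated does not claim nontriviality, so this is not strictly needed.
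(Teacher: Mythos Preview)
Your proof is correct and follows essentially the same group-averaging idea as the paper. The only cosmetic difference is that the paper first observes that the full group sum $\sum_{g\in\mathbb{A}(n)} g\Omega g^{-1}$ is invariant and then notes that the coset sum equals $\frac{1}{|St(\Omega)|}$ times this, whereas you work directly with the coset sum via well-definedness under change of representatives; both arguments are standard and equivalent.
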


\begin{proof}
The observable $\sum_{g\in\mathbb{A}(n)}g\Omega g^{-1}$is obviously an quantum
knot $n$-invariant, since $g^{\prime}\left(  \sum_{g\in\mathbb{A}(n)}g\Omega
g^{-1}\right)  g^{\prime-1}=\sum_{g\in\mathbb{A}(n)}g\Omega g^{-1}$ for all
$g^{\prime}\in\mathbb{A}(n)$. \ If we let $\left\vert St\left(  \Omega\right)
\right\vert $ denote the order of $\left\vert St\left(  \Omega\right)
\right\vert $, and if we let $c_{1},c_{2},\ldots,c_{p}$ denote a complete set
of coset representatives of the stabilizer subgroup $St\left(  \Omega\right)
$, then $\sum_{j=1}^{p}c_{j}\Omega c_{j}^{-1}=\frac{1}{\left\vert St\left(
\Omega\right)  \right\vert }\sum_{g\in\mathbb{A}(n)}g\Omega g^{-1}$ is also a
quantum knot invariant. \ 
\end{proof}

\bigskip

We end this section with an example of a quantum knot invariant:

\bigskip

\begin{example}
The following observable $\Omega$ is an example of a quantum knot $4$-invariant:

\bigskip

$\hspace{-0.8in}\Omega=\left\vert
\begin{array}
[c]{cccc}%
{\includegraphics[
height=0.1773in,
width=0.1773in
]%
{ut00.ps}%
}%
&
{\includegraphics[
height=0.1773in,
width=0.1773in
]%
{ut02.ps}%
}%
&
{\includegraphics[
height=0.1773in,
width=0.1773in
]%
{ut01.ps}%
}%
&
{\includegraphics[
height=0.1773in,
width=0.1773in
]%
{ut00.ps}%
}%
\\%
{\includegraphics[
height=0.1773in,
width=0.1773in
]%
{ut02.ps}%
}%
&
{\includegraphics[
height=0.1773in,
width=0.1773in
]%
{ut09.ps}%
}%
&
{\includegraphics[
height=0.1773in,
width=0.1773in
]%
{ut10.ps}%
}%
&
{\includegraphics[
height=0.1773in,
width=0.1773in
]%
{ut01.ps}%
}%
\\%
{\includegraphics[
height=0.1773in,
width=0.1773in
]%
{ut06.ps}%
}%
&
{\includegraphics[
height=0.1773in,
width=0.1773in
]%
{ut03.ps}%
}%
&
{\includegraphics[
height=0.1773in,
width=0.1773in
]%
{ut09.ps}%
}%
&
{\includegraphics[
height=0.1773in,
width=0.1773in
]%
{ut04.ps}%
}%
\\%
{\includegraphics[
height=0.1773in,
width=0.1773in
]%
{ut03.ps}%
}%
&
{\includegraphics[
height=0.1773in,
width=0.1773in
]%
{ut05.ps}%
}%
&
{\includegraphics[
height=0.1773in,
width=0.1773in
]%
{ut04.ps}%
}%
&
{\includegraphics[
height=0.1773in,
width=0.1773in
]%
{ut00.ps}%
}%
\end{array}
\right\rangle \left\langle
\begin{array}
[c]{cccc}%
{\includegraphics[
height=0.1773in,
width=0.1773in
]%
{ut00.ps}%
}%
&
{\includegraphics[
height=0.1773in,
width=0.1773in
]%
{ut02.ps}%
}%
&
{\includegraphics[
height=0.1773in,
width=0.1773in
]%
{ut01.ps}%
}%
&
{\includegraphics[
height=0.1773in,
width=0.1773in
]%
{ut00.ps}%
}%
\\%
{\includegraphics[
height=0.1773in,
width=0.1773in
]%
{ut02.ps}%
}%
&
{\includegraphics[
height=0.1773in,
width=0.1773in
]%
{ut09.ps}%
}%
&
{\includegraphics[
height=0.1773in,
width=0.1773in
]%
{ut10.ps}%
}%
&
{\includegraphics[
height=0.1773in,
width=0.1773in
]%
{ut01.ps}%
}%
\\%
{\includegraphics[
height=0.1773in,
width=0.1773in
]%
{ut06.ps}%
}%
&
{\includegraphics[
height=0.1773in,
width=0.1773in
]%
{ut03.ps}%
}%
&
{\includegraphics[
height=0.1773in,
width=0.1773in
]%
{ut09.ps}%
}%
&
{\includegraphics[
height=0.1773in,
width=0.1773in
]%
{ut04.ps}%
}%
\\%
{\includegraphics[
height=0.1773in,
width=0.1773in
]%
{ut03.ps}%
}%
&
{\includegraphics[
height=0.1773in,
width=0.1773in
]%
{ut05.ps}%
}%
&
{\includegraphics[
height=0.1773in,
width=0.1773in
]%
{ut04.ps}%
}%
&
{\includegraphics[
height=0.1773in,
width=0.1773in
]%
{ut00.ps}%
}%
\end{array}
\right\vert +\left\vert
\begin{array}
[c]{cccc}%
{\includegraphics[
height=0.1773in,
width=0.1773in
]%
{ut00.ps}%
}%
&
{\includegraphics[
height=0.1773in,
width=0.1773in
]%
{ut02.ps}%
}%
&
{\includegraphics[
height=0.1773in,
width=0.1773in
]%
{ut01.ps}%
}%
&
{\includegraphics[
height=0.1773in,
width=0.1773in
]%
{ut00.ps}%
}%
\\%
{\includegraphics[
height=0.1773in,
width=0.1773in
]%
{ut02.ps}%
}%
&
{\includegraphics[
height=0.1773in,
width=0.1773in
]%
{ut09.ps}%
}%
&
{\includegraphics[
height=0.1773in,
width=0.1773in
]%
{ut10.ps}%
}%
&
{\includegraphics[
height=0.1773in,
width=0.1773in
]%
{ut01.ps}%
}%
\\%
{\includegraphics[
height=0.1773in,
width=0.1773in
]%
{ut03.ps}%
}%
&
{\includegraphics[
height=0.1773in,
width=0.1773in
]%
{ut07.ps}%
}%
&
{\includegraphics[
height=0.1773in,
width=0.1773in
]%
{ut09.ps}%
}%
&
{\includegraphics[
height=0.1773in,
width=0.1773in
]%
{ut04.ps}%
}%
\\%
{\includegraphics[
height=0.1773in,
width=0.1773in
]%
{ut00.ps}%
}%
&
{\includegraphics[
height=0.1773in,
width=0.1773in
]%
{ut03.ps}%
}%
&
{\includegraphics[
height=0.1773in,
width=0.1773in
]%
{ut04.ps}%
}%
&
{\includegraphics[
height=0.1773in,
width=0.1773in
]%
{ut00.ps}%
}%
\end{array}
\right\rangle \left\langle
\begin{array}
[c]{cccc}%
{\includegraphics[
height=0.1773in,
width=0.1773in
]%
{ut00.ps}%
}%
&
{\includegraphics[
height=0.1773in,
width=0.1773in
]%
{ut02.ps}%
}%
&
{\includegraphics[
height=0.1773in,
width=0.1773in
]%
{ut01.ps}%
}%
&
{\includegraphics[
height=0.1773in,
width=0.1773in
]%
{ut00.ps}%
}%
\\%
{\includegraphics[
height=0.1773in,
width=0.1773in
]%
{ut02.ps}%
}%
&
{\includegraphics[
height=0.1773in,
width=0.1773in
]%
{ut09.ps}%
}%
&
{\includegraphics[
height=0.1773in,
width=0.1773in
]%
{ut10.ps}%
}%
&
{\includegraphics[
height=0.1773in,
width=0.1773in
]%
{ut01.ps}%
}%
\\%
{\includegraphics[
height=0.1773in,
width=0.1773in
]%
{ut03.ps}%
}%
&
{\includegraphics[
height=0.1773in,
width=0.1773in
]%
{ut07.ps}%
}%
&
{\includegraphics[
height=0.1773in,
width=0.1773in
]%
{ut09.ps}%
}%
&
{\includegraphics[
height=0.1773in,
width=0.1773in
]%
{ut04.ps}%
}%
\\%
{\includegraphics[
height=0.1773in,
width=0.1773in
]%
{ut00.ps}%
}%
&
{\includegraphics[
height=0.1773in,
width=0.1773in
]%
{ut03.ps}%
}%
&
{\includegraphics[
height=0.1773in,
width=0.1773in
]%
{ut04.ps}%
}%
&
{\includegraphics[
height=0.1773in,
width=0.1773in
]%
{ut00.ps}%
}%
\end{array}
\right\vert $
\end{example}

\bigskip

\begin{remark}
For yet another approach to quantum knot measurement, we refer the reader to
the brief discussion on quantum knot tomography found in item 11) in the
conclusion of this paper. \ 
\end{remark}

\bigskip

\section{Conclusion: Open questions and future directions}

\bigskip

There are many possible open questions and future directions for research.
\ We mention only a few.

\bigskip

\begin{itemize}
\item[\textbf{1)}] What is the exact structure of the ambient group
$\mathbb{A}(n)$ and its direct limit
\[
\mathbb{A}=\lim_{\longrightarrow}\mathbb{A}(n)\text{ .}%
\]
Can one write down an explicit presentation for $\mathbb{A}(n)$? for
$\mathbb{A}$? \ The fact that the ambient group $\mathbb{A}(n)$ is generated
by involutions suggests that it may be a Coxeter group. \ Is it a Coxeter
group?\bigskip

\item[\textbf{2)}] Unlike classical knots, quantum knots can exhibit the
non-classical behavior of quantum superposition and quantum entanglement.
\ Are topological entanglement and quantum entanglement related to one
another? \ If so, how? \ \bigskip

\item[\textbf{3)}] What other ways are there to distinguish quantum knots from
classical knots?\bigskip

\item[\textbf{4)}] How does one find a quantum observable for the Jones
polynomial? \ This would be a family of observables parameterized by points on
the unit circle in the complex plane. \ Does this approach lead to an
algorithmic improvement to the quantum algorithm given by Aharonov, Jones, and
Landau in \cite{Aharonov1}? \ (See also \cite{Lomonaco2}, \cite{Shor1}%
.)\bigskip

\item[\textbf{5)}] How does one create quantum knot observables that represent
other knot invariants such as, for example, the Vassiliev invariants?\bigskip

\item[\textbf{6)}] What is gained by extending the definition of quantum knot
observables to POVMs?\bigskip

\item[\textbf{7)}] What is gained by extending the definition of quantum knots
to mixed ensembles?\bigskip

\item[\textbf{8)}] Define the \textbf{mosaic number} of a knot $k$ as the
smallest integer $n$ for which $k$ is representable as a knot $n$-mosaic.
\ For example, the mosaic number of the trefoil is $4$, as is illustrated by
the following knot $n$-mosaic:%
\[%
\begin{array}
[c]{cccc}%
{\includegraphics[
height=0.1773in,
width=0.1773in
]%
{ut00.ps}%
}%
&
{\includegraphics[
height=0.1773in,
width=0.1773in
]%
{ut02.ps}%
}%
&
{\includegraphics[
height=0.1773in,
width=0.1773in
]%
{ut01.ps}%
}%
&
{\includegraphics[
height=0.1773in,
width=0.1773in
]%
{ut00.ps}%
}%
\\%
{\includegraphics[
height=0.1773in,
width=0.1773in
]%
{ut02.ps}%
}%
&
{\includegraphics[
height=0.1773in,
width=0.1773in
]%
{ut09.ps}%
}%
&
{\includegraphics[
height=0.1773in,
width=0.1773in
]%
{ut10.ps}%
}%
&
{\includegraphics[
height=0.1773in,
width=0.1773in
]%
{ut01.ps}%
}%
\\%
{\includegraphics[
height=0.1773in,
width=0.1773in
]%
{ut06.ps}%
}%
&
{\includegraphics[
height=0.1773in,
width=0.1773in
]%
{ut03.ps}%
}%
&
{\includegraphics[
height=0.1773in,
width=0.1773in
]%
{ut09.ps}%
}%
&
{\includegraphics[
height=0.1773in,
width=0.1773in
]%
{ut04.ps}%
}%
\\%
{\includegraphics[
height=0.1773in,
width=0.1773in
]%
{ut03.ps}%
}%
&
{\includegraphics[
height=0.1773in,
width=0.1773in
]%
{ut05.ps}%
}%
&
{\includegraphics[
height=0.1773in,
width=0.1773in
]%
{ut04.ps}%
}%
&
{\includegraphics[
height=0.1773in,
width=0.1773in
]%
{ut00.ps}%
}%
\end{array}
\]
\ In general, how does one compute the mosaic number? \ Is the mosaic number
related to the crossing number of a knot? \ How does one find an observable
for the mosaic number?\bigskip

\item[\textbf{9)}] Let $D_{n}$ denote the dimension of the Hilbert space
$\mathcal{K}^{(n)}$ of quantum knot $n$-mosaics. \ We have shown that
$D_{1}=1$, $D_{2}=2$, and $D_{3}=22$. \ Find $D_{n}$ for other values of $n$.
\ \ A very loose upper bound for $D_{n}$ is obviously $11^{n^{2}}$.\bigskip

\item[\textbf{10)}] Consider the following alternate stronger definitions of
quantum knot $n$-type and quantum knot type: \bigskip\newline Let $Q\left(
\mathcal{K}^{(n)},\mathbb{A}(n)\right)  $ be a quantum knot system, and let
$\mathcal{U}\left(  \mathcal{K}^{(n)}\right)  $ denote the Lie group of all
unitary transformations on the Hilbert space $\mathcal{K}^{(n)}$. \ Define the
\textbf{continuous ambient group} $\widetilde{\mathbb{A}}(n)$ as the smallest
connected Lie subgroup of $\mathcal{U}\left(  \mathcal{K}^{(n)}\right)  $
containing the discrete ambient group $\mathbb{A}(n)$.\bigskip

\begin{proposition}
Let $\mathcal{G}$ denote the set of planar isotopy and Reidemeister generators
of the discrete ambient group $\mathbb{A}(n)$, and let $a(n)$ be the Lie
algebra generated by the elements of the set%
\[
\left\{  \ln_{P}\left(  g\right)  :g\in\mathcal{G}\right\}  \text{ ,}%
\]
where $\ln_{P}$ denotes the principal branch of the natural log on
$\mathcal{U}\left(  \mathcal{K}^{(n)}\right)  $. \ Then the continuous ambient
group is given by%
\[
\widetilde{\mathbb{A}}(n)=\exp\left(  a(n)\right)  \text{ .}%
\]

\end{proposition}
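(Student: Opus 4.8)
The plan is to prove the two containments $\widetilde{\mathbb{A}}(n)\subseteq\exp(a(n))$ and $\exp(a(n))\subseteq\widetilde{\mathbb{A}}(n)$ separately, exploiting that $\mathcal{U}(\mathcal{K}^{(n)})$ is a compact connected Lie group and that $\mathcal{K}^{(n)}$ is finite dimensional, so all the relevant Lie-theoretic machinery (the closed-subgroup theorem, the correspondence between connected Lie subgroups and Lie subalgebras, surjectivity of $\exp$ onto a connected compact group) is available.

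First I would fix notation: let $G=\mathcal{U}(\mathcal{K}^{(n)})$ with Lie algebra $\mathfrak{g}=\mathfrak{u}(\mathcal{K}^{(n)})$, let $\mathcal{G}=\{g_1,\dots,g_r\}$ be the finite generating set of planar isotopy and Reidemeister moves, and set $X_k=\ln_P(g_k)\in\mathfrak{g}$ (this lies in $\mathfrak{g}$ because each $g_k$ is unitary, so its principal log is skew-Hermitian). Let $a(n)$ be the Lie subalgebra of $\mathfrak{g}$ generated by $X_1,\dots,X_r$, and let $A=\exp(a(n))$ in the sense of the connected Lie subgroup of $G$ integrating $a(n)$ — here I would invoke the standard fact that to every Lie subalgebra of $\mathfrak{g}$ there corresponds a unique connected (immersed) Lie subgroup, and then note that because $G$ is compact this subgroup is actually closed and equals the image of $a(n)$ under the exponential map of $G$ restricted to $a(n)$ (since a connected compact Lie group is covered by its exponential). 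So $A$ is a connected Lie subgroup of $G$.

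For the containment $A\subseteq\widetilde{\mathbb{A}}(n)$: each $g_k=\exp(X_k)\in\mathbb{A}(n)\subseteq\widetilde{\mathbb{A}}(n)$, and since $\widetilde{\mathbb{A}}(n)$ is a connected Lie subgroup, its Lie algebra $\widetilde{\mathfrak{a}}(n)$ is a Lie subalgebra of $\mathfrak{g}$ that is closed under brackets; I would argue that $X_k=\ln_P(g_k)\in\widetilde{\mathfrak{a}}(n)$ because the one-parameter group $t\mapsto\exp(tX_k)$, which for $t\in[0,1]$ connects the identity to $g_k$ inside the connected subgroup $\widetilde{\mathbb{A}}(n)$ (here one must check that this one-parameter subgroup stays in $\widetilde{\mathbb{A}}(n)$, which follows from $\widetilde{\mathbb{A}}(n)$ being the smallest connected Lie subgroup containing $\mathbb{A}(n)$ together with the fact that the closure of $\{\exp(tX_k)\}$ is a connected abelian subgroup containing $g_k$ — a torus — and one shows this torus lies in $\widetilde{\mathbb{A}}(n)$). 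Hence every $X_k\in\widetilde{\mathfrak{a}}(n)$, so $a(n)\subseteq\widetilde{\mathfrak{a}}(n)$, and therefore $A\subseteq\widetilde{\mathbb{A}}(n)$ by monotonicity of the subgroup–subalgebra correspondence.

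For the reverse containment $\widetilde{\mathbb{A}}(n)\subseteq A$: I would show $A$ is a connected Lie subgroup containing all of $\mathbb{A}(n)$, and then invoke minimality of $\widetilde{\mathbb{A}}(n)$. Since $A$ is closed and contains $g_k=\exp(X_k)$ for each generator $g_k$, and $\mathbb{A}(n)$ is generated as a group by the $g_k$, we get $\mathbb{A}(n)\subseteq A$; as $A$ is a connected Lie subgroup, minimality of $\widetilde{\mathbb{A}}(n)$ forces $\widetilde{\mathbb{A}}(n)\subseteq A$. Combining the two containments gives $\widetilde{\mathbb{A}}(n)=A=\exp(a(n))$.

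\textbf{Main obstacle.} The delicate point — and the one I would spend the most care on — is the claim, used in the first containment, that for each generator $g_k$ the entire one-parameter subgroup $\{\exp(tX_k):t\in\mathbb{R}\}$ (or at least the torus it generates) is contained in $\widetilde{\mathbb{A}}(n)$, given only that the \emph{endpoint} $g_k=\exp(X_k)$ lies there. This is not automatic from $g_k\in\widetilde{\mathbb{A}}(n)$ alone: it requires knowing that $\widetilde{\mathbb{A}}(n)$, being a closed connected subgroup of the compact group $G$, contains the closure of the cyclic group $\langle g_k\rangle$, which is a torus whose Lie algebra contains $X_k$ (this last step uses that $\mathbb{A}(n)$ is generated by \emph{involutions}, or more carefully that $X_k=\ln_P(g_k)$ lies in the Lie algebra of $\overline{\langle g_k\rangle}$ — true since $\ln_P(g_k)$ is, up to a rational multiple, $\log$ of an element of that torus). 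One must also verify at the outset that each $g_k$ is genuinely in the domain of $\ln_P$ (no eigenvalue equal to $-1$), or else work with a branch adapted to each generator; since the generators are products of disjoint transpositions their eigenvalues are $\pm1$, so some care with the principal branch convention is needed, but this is a routine check rather than a conceptual difficulty.
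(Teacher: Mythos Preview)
The paper states this proposition in the ``Open questions and future directions'' section without supplying any proof, so there is nothing to compare your argument against directly. That said, your write-up contains a genuine gap at exactly the point you flag as the main obstacle, and your proposed resolution does not work.

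You argue that $X_k=\ln_P(g_k)$ lies in $\widetilde{\mathfrak a}(n)=\operatorname{Lie}\bigl(\widetilde{\mathbb A}(n)\bigr)$ by passing to the closure $\overline{\langle g_k\rangle}$, calling it a torus, and asserting its Lie algebra contains $X_k$. But as the paper itself emphasizes (Proposition~1 and the remark following the definition of $\mathbb A(n)$), every generator $g_k$ is a product of disjoint transpositions, hence an \emph{involution}: $g_k^2=1$. Thus $\langle g_k\rangle=\{1,g_k\}$ is already closed, its Lie algebra is $\{0\}$, and it contains no nonzero $X_k$. Your torus argument therefore collapses, and with it the inclusion $a(n)\subseteq\widetilde{\mathfrak a}(n)$.

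Worse, this is not just a repairable detail: the underlying definition of $\widetilde{\mathbb A}(n)$ as ``the smallest connected Lie subgroup of $\mathcal U(\mathcal K^{(n)})$ containing $\mathbb A(n)$'' is itself delicate, because the intersection of connected Lie subgroups need not be connected. Concretely, an involution in a compact Lie group typically lies on many distinct one-parameter subgroups (e.g.\ in $U(1)\times U(1)$ the element $(-1,-1)$ lies on $\{(z,z)\}$, $\{(z,z^{-1})\}$, $\{(z,z^{3})\}$, \dots), and their intersection is finite. So there may be no canonical ``smallest'' such subgroup at all, and the proposition is most charitably read as \emph{specifying} what $\widetilde{\mathbb A}(n)$ is to mean---namely $\exp(a(n))$---rather than as a theorem to be proved from the prior one-line definition. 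Your direction $\widetilde{\mathbb A}(n)\subseteq\exp(a(n))$ (via minimality, once you check $\exp(a(n))$ is a connected Lie subgroup containing every $g_k$) is fine and is really the only substantive content; the reverse inclusion cannot be established without either strengthening the definition of $\widetilde{\mathbb A}(n)$ or simply taking the proposition as that definition.
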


We define two quantum $n$-knots $\left\vert \psi_{1}\right\rangle $ and
$\left\vert \psi_{2}\right\rangle $ to be of the \textbf{same continuous knot
}$n$\textbf{-type}, written
\[
\left\vert \psi_{2}\right\rangle \underset{n}{\approx}\left\vert \psi
_{2}\right\rangle \text{ \ ,}%
\]
provided there exists an element $g$ of the continuous ambient group
$\widetilde{\mathbb{A}}(n)$ which transforms $\left\vert \psi_{1}\right\rangle
$ into $\left\vert \psi_{2}\right\rangle $, i.e., such that $g\left\vert
\psi_{1}\right\rangle =\left\vert \psi_{2}\right\rangle $. \ They are of the
\textbf{same continuous knot type}, written $\left\vert \psi_{1}\right\rangle
\approx\left\vert \psi_{2}\right\rangle $, if there exists an integer $\ell$
such that $\iota^{\ell}\left\vert \psi_{1}\right\rangle \underset{n+\ell
}{\approx}\iota^{\ell}\left\vert \psi_{2}\right\rangle $.\bigskip

\begin{conjecture}
Let $K_{1}$ and $K_{2}$ denote two knot $n$-mosaics, and let $\left\vert
K_{1}\right\rangle $ and $\left\vert K_{2}\right\rangle $ denote the
corresponding quantum knots. \ Then
\[
\left\vert K_{1}\right\rangle \underset{n}{\approx}\left\vert K_{2}%
\right\rangle \Longleftrightarrow K_{1}\underset{n}{\sim}K_{2}\text{ \ \ and
\ \ }\left\vert K_{1}\right\rangle \approx\left\vert K_{2}\right\rangle
\Longleftrightarrow K_{1}\sim K_{2}%
\]

\end{conjecture}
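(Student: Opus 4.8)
The plan is to deduce both biconditionals from one structural fact: the continuous ambient group $\widetilde{\mathbb{A}}(n)$, although vastly larger than $\mathbb{A}(n)$, still preserves the decomposition of $\mathcal{K}^{(n)}$ into $\mathbb{A}(n)$-orbit subspaces, so it can carry one mosaic basis ket to another only when $\mathbb{A}(n)$ already does. The two implications ``$\Leftarrow$'' are immediate, because $\mathbb{A}(n)\subseteq\widetilde{\mathbb{A}}(n)$: if $g\in\mathbb{A}(n)$ realizes $K_{1}\underset{n}{\sim}K_{2}$ then the same $g$, viewed in $\widetilde{\mathbb{A}}(n)$, gives $\left\vert K_{1}\right\rangle \underset{n}{\approx}\left\vert K_{2}\right\rangle $, and the stabilized statement follows in the same way. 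So everything is in the implications ``$\Rightarrow$''.

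First I would build the orbit decomposition. Partition $\mathbb{K}^{(n)}$ into its $\mathbb{A}(n)$-orbits $\mathcal{O}_{1},\mathcal{O}_{2},\dots$, and let $W_{\alpha}\subseteq\mathcal{K}^{(n)}$ be spanned by the basis kets $\left\vert K\right\rangle $ with $K\in\mathcal{O}_{\alpha}$, so that $\mathcal{K}^{(n)}=\bigoplus_{\alpha}W_{\alpha}$ is an orthogonal direct sum of coordinate subspaces for the orthonormal mosaic basis. By Proposition 1 each generating move $g$ of $\mathbb{A}(n)$ is a product of disjoint transpositions of knot $n$-mosaics; since $g\in\mathbb{A}(n)$, each transposed pair $\{M,M'\}$ satisfies $M'=gM$ and hence lies in one $\mathbb{A}(n)$-orbit. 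Thus the unitary $g$ permutes the basis of each $W_{\alpha}$ within itself, so it preserves each $W_{\alpha}$; and so does the one-parameter subgroup $t\mapsto\exp(t\ln_{P}(g))$, because $\ln_{P}(g)$ is block-diagonal with respect to $\bigoplus_{\alpha}W_{\alpha}$ (the principal logarithm of a block-diagonal unitary, computed by the spectral calculus, is block-diagonal). Consequently the Lie algebra $a(n)$ generated by the elements $\ln_{P}(g)$ lies in $\bigoplus_{\alpha}\mathfrak{u}(W_{\alpha})$, and by the Proposition stated just above, $\widetilde{\mathbb{A}}(n)=\exp(a(n))\subseteq\prod_{\alpha}U(W_{\alpha})$; equivalently, $\prod_{\alpha}U(W_{\alpha})$ is a connected Lie subgroup of $\mathcal{U}(\mathcal{K}^{(n)})$ containing $\mathbb{A}(n)$, so by minimality it contains $\widetilde{\mathbb{A}}(n)$. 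Either way, every element of $\widetilde{\mathbb{A}}(n)$ preserves each $W_{\alpha}$.

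Granting this, the ``$\Rightarrow$'' of the first biconditional falls out: if $g\in\widetilde{\mathbb{A}}(n)$ and $g\left\vert K_{1}\right\rangle =\left\vert K_{2}\right\rangle $, let $\alpha$ be the orbit of $K_{1}$; then $\left\vert K_{2}\right\rangle =g\left\vert K_{1}\right\rangle \in g(W_{\alpha})=W_{\alpha}$, and since $W_{\alpha}$ is a coordinate subspace of $\mathcal{K}^{(n)}$ this forces $K_{2}\in\mathcal{O}_{\alpha}$, i.e.\ $K_{1}\underset{n}{\sim}K_{2}$. For the second biconditional I would then push this through the stabilization maps: $\left\vert K_{1}\right\rangle \approx\left\vert K_{2}\right\rangle $ means $\iota^{\ell}\left\vert K_{1}\right\rangle \underset{n+\ell}{\approx}\iota^{\ell}\left\vert K_{2}\right\rangle $ for some $\ell\geq0$, and since $\iota$ carries each basis ket to a basis ket we have $\iota^{\ell}\left\vert K_{i}\right\rangle =\left\vert \iota^{\ell}K_{i}\right\rangle $; applying the first biconditional at order $n+\ell$ gives $\iota^{\ell}K_{1}\underset{n+\ell}{\sim}\iota^{\ell}K_{2}$, which is precisely $K_{1}\sim K_{2}$ in the sense of knot mosaic type (both $K_{i}$ being $n$-mosaics). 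The reverse implication runs identically, starting from a witnessing element of $\mathbb{A}(n+\ell)$.

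I expect the one genuinely delicate point to be the step ``each generating move preserves the orbit decomposition'': it combines Proposition 1 with the (easy but essential) observation that any transposition appearing in the cycle form of $g\in\mathbb{A}(n)$ joins two mosaics in a common orbit, and this is exactly what prevents $\widetilde{\mathbb{A}}(n)$ from mixing distinct $W_{\alpha}$ and so from creating spurious continuous equivalences among basis kets. A secondary point of care is making the phrase ``smallest connected Lie subgroup containing $\mathbb{A}(n)$'' precise; the safest route is to lean on the explicit description $\widetilde{\mathbb{A}}(n)=\exp(a(n))$ from the Proposition rather than on an abstract minimality argument, since $\exp$ of a Lie subalgebra need not by itself be closed under products. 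Everything else — orthogonality bookkeeping, block-diagonality of $\ln_{P}$, and the $\iota$-stabilization — is routine.
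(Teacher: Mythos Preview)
The paper presents this statement as an open \textbf{conjecture} in its list of future directions (item 10), and immediately afterward asks ``Is the above conjecture true?''. There is therefore no proof in the paper to compare against.

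Your argument, however, is essentially a valid proof of the conjecture as stated. The decisive observation --- that the block-diagonal subgroup $\prod_{\alpha}U(W_{\alpha})$, with $W_{\alpha}$ the span of the $\mathbb{A}(n)$-orbit $\mathcal{O}_{\alpha}$, is a closed connected Lie subgroup of $\mathcal{U}(\mathcal{K}^{(n)})$ containing $\mathbb{A}(n)$, and hence contains $\widetilde{\mathbb{A}}(n)$ by the minimality clause in its definition --- is correct and does all the work. It forces every $g\in\widetilde{\mathbb{A}}(n)$ to carry $\left\vert K_{1}\right\rangle$ into the coordinate subspace $W_{\alpha}$ of its own $\mathbb{A}(n)$-orbit, and since $\left\vert K_{2}\right\rangle$ is a basis vector this pins $K_{2}\in\mathcal{O}_{\alpha}$. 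The stabilized biconditional then follows exactly as you say via $\iota^{\ell}\left\vert K\right\rangle=\left\vert\iota^{\ell}K\right\rangle$.

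One small expository point: in your ``secondary point of care'' you suggest leaning on $\widetilde{\mathbb{A}}(n)=\exp(a(n))$ rather than on minimality, while also noting that $\exp$ of a Lie subalgebra need not be a group. This is backwards --- the minimality argument is the clean one and needs nothing from the paper's Proposition, whereas the $\exp(a(n))$ route inherits whatever imprecision that Proposition has. Either way the inclusion $\widetilde{\mathbb{A}}(n)\subseteq\prod_{\alpha}U(W_{\alpha})$ holds, so this does not affect correctness. For basis kets (which is all the conjecture asserts), you have in fact settled it.
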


Thus, if this conjectures 1 and 2 are true, these two stronger definitions of
quantum knot $n$-type and quantum knot type fully capture all of classical
tame knot theory. \ Moreover, these two stronger definitions have a number of
advantages over the weaker definitions, two of which are the following:
\ \bigskip

\item Under the Hamiltonians associated with the generators $\mathcal{G}$, the
Schroedinger equation determines a connected continuous path in $\mathcal{K}%
^{(n)}$ consisting of quantum $n$-knots all of the same quantum continuous
knot $n$-type. \ \bigskip

\item Although the following two quantum knots are not of the same discrete
knot $n$-type, they are however of the same continuous knot $n$-type%
\[
\left\vert \psi_{1}\right\rangle =\left\vert
\begin{array}
[c]{ccc}%
{\includegraphics[
height=0.1773in,
width=0.1773in
]%
{ut02.ps}%
}%
&
{\includegraphics[
height=0.1773in,
width=0.1773in
]%
{ut05.ps}%
}%
&
{\includegraphics[
height=0.1773in,
width=0.1773in
]%
{ut01.ps}%
}%
\\%
{\includegraphics[
height=0.1773in,
width=0.1773in
]%
{ut06.ps}%
}%
&
{\includegraphics[
height=0.1773in,
width=0.1773in
]%
{ut00.ps}%
}%
&
{\includegraphics[
height=0.1773in,
width=0.1773in
]%
{ut06.ps}%
}%
\\%
{\includegraphics[
height=0.1773in,
width=0.1773in
]%
{ut03.ps}%
}%
&
{\includegraphics[
height=0.1773in,
width=0.1773in
]%
{ut05.ps}%
}%
&
{\includegraphics[
height=0.1773in,
width=0.1773in
]%
{ut04.ps}%
}%
\end{array}
\right\rangle \text{ \ \ and \ \ }\left\vert \psi_{2}\right\rangle =\frac
{1}{\sqrt{2}}\left(  \left\vert
\begin{array}
[c]{ccc}%
{\includegraphics[
height=0.1773in,
width=0.1773in
]%
{ut02.ps}%
}%
&
{\includegraphics[
height=0.1773in,
width=0.1773in
]%
{ut01.ps}%
}%
&
{\includegraphics[
height=0.1773in,
width=0.1773in
]%
{ut00.ps}%
}%
\\%
{\includegraphics[
height=0.1773in,
width=0.1773in
]%
{ut03.ps}%
}%
&
{\includegraphics[
height=0.1773in,
width=0.1773in
]%
{ut04.ps}%
}%
&
{\includegraphics[
height=0.1773in,
width=0.1773in
]%
{ut00.ps}%
}%
\\%
{\includegraphics[
height=0.1773in,
width=0.1773in
]%
{ut00.ps}%
}%
&
{\includegraphics[
height=0.1773in,
width=0.1773in
]%
{ut00.ps}%
}%
&
{\includegraphics[
height=0.1773in,
width=0.1773in
]%
{ut00.ps}%
}%
\end{array}
\right\rangle +\left\vert
\begin{array}
[c]{ccc}%
{\includegraphics[
height=0.1773in,
width=0.1773in
]%
{ut00.ps}%
}%
&
{\includegraphics[
height=0.1773in,
width=0.1773in
]%
{ut00.ps}%
}%
&
{\includegraphics[
height=0.1773in,
width=0.1773in
]%
{ut00.ps}%
}%
\\%
{\includegraphics[
height=0.1773in,
width=0.1773in
]%
{ut00.ps}%
}%
&
{\includegraphics[
height=0.1773in,
width=0.1773in
]%
{ut02.ps}%
}%
&
{\includegraphics[
height=0.1773in,
width=0.1773in
]%
{ut01.ps}%
}%
\\%
{\includegraphics[
height=0.1773in,
width=0.1773in
]%
{ut00.ps}%
}%
&
{\includegraphics[
height=0.1773in,
width=0.1773in
]%
{ut03.ps}%
}%
&
{\includegraphics[
height=0.1773in,
width=0.1773in
]%
{ut04.ps}%
}%
\end{array}
\right\rangle \right)
\]
\bigskip Is the above conjecture true? \ If so, what is the structure of the
continuous ambient group $\widetilde{\mathbb{A}}(n)$ ? \ What are its
irreducible representations?\bigskip

\item[\textbf{11)}] Quantum knot tomography: Given repeated copies of a
quantum knot $\left\vert \psi\right\rangle $, how does one employ the method
of quantum state tomography\cite{Leonhardt1} to determine $\left\vert
\psi\right\rangle $? \ Most importantly, how can this be done with the
greatest efficiency? \ For example, given repeated copies of the unknown
quantum knot basis state
\[
\left\vert \psi\right\rangle =\left\vert
\begin{array}
[c]{cccc}%
{\includegraphics[
height=0.1773in,
width=0.1773in
]%
{ut00.ps}%
}%
&
{\includegraphics[
height=0.1773in,
width=0.1773in
]%
{ut02.ps}%
}%
&
{\includegraphics[
height=0.1773in,
width=0.1773in
]%
{ut01.ps}%
}%
&
{\includegraphics[
height=0.1773in,
width=0.1773in
]%
{ut00.ps}%
}%
\\%
{\includegraphics[
height=0.1773in,
width=0.1773in
]%
{ut02.ps}%
}%
&
{\includegraphics[
height=0.1773in,
width=0.1773in
]%
{ut09.ps}%
}%
&
{\includegraphics[
height=0.1773in,
width=0.1773in
]%
{ut10.ps}%
}%
&
{\includegraphics[
height=0.1773in,
width=0.1773in
]%
{ut01.ps}%
}%
\\%
{\includegraphics[
height=0.1773in,
width=0.1773in
]%
{ut06.ps}%
}%
&
{\includegraphics[
height=0.1773in,
width=0.1773in
]%
{ut03.ps}%
}%
&
{\includegraphics[
height=0.1773in,
width=0.1773in
]%
{ut09.ps}%
}%
&
{\includegraphics[
height=0.1773in,
width=0.1773in
]%
{ut04.ps}%
}%
\\%
{\includegraphics[
height=0.1773in,
width=0.1773in
]%
{ut03.ps}%
}%
&
{\includegraphics[
height=0.1773in,
width=0.1773in
]%
{ut05.ps}%
}%
&
{\includegraphics[
height=0.1773in,
width=0.1773in
]%
{ut04.ps}%
}%
&
{\includegraphics[
height=0.1773in,
width=0.1773in
]%
{ut00.ps}%
}%
\end{array}
\right\rangle \text{ \ ,}%
\]
one could make repeated measurements of this state with respect to the
$11\cdot n^{2}$ (for $n=4$) tile observables
\[
\left\{  \Omega_{ij}^{(p)}=1^{\otimes\left(  nj+i-1\right)  }\otimes\left(
\left\vert T_{p}\right\rangle \left\langle T_{p}\right\vert \right)
\otimes1^{\otimes\left(  n^{2}-nj-i\right)  }\ :\ 0\leq p<11,\ 0\leq
i,j<n\right\}
\]
to determine the state with a probability of error less than a chosen positive
threshold $\epsilon$. \ This is obviously not the most efficient set of
observables for the task, nor is it a universal set of observables for quantum
knot tomography . \ Given many copies of an arbitrary quantum knot, how does
find a universal set observables that is best in the sense of greatest
efficiency for a given threshold $\epsilon$? \ \bigskip

\item[\textbf{12)}] Quantum Braids: One can also use mosaics to define quantum
braids. \ How is this related to the work found in \cite{Jacak1},
\cite{Kitaev1}, \cite{Sarma1}?\bigskip

\item[\textbf{13)}] Can quantum knot systems be used to model and to predict
the behavior of

\begin{itemize}
\item[\textbf{i)}] Quantum vortices in liquid Helium II? (See \cite{Rasetti1}.)

\item[\textbf{ii)}] Quantum vortices in the Bose-Einstein condensate?

\item[\textbf{iii)}] Fractional charge quantification that is manifest in the
fractional quantum Hall effect? (See \cite{Jacak1} and \cite{Wen1}.)
\end{itemize}
\end{itemize}

\bigskip

We should mention that we have throughout this paper used a version of knot
diagrams (i.e., two dimensional knot mosaics), that is susceptible to
quantization. \ Within this context, the familiar Reidemeister moves have
become unitary transformations on an appropriate Hilbert space. \ Knots as we
know them in topology and geometry occur as embeddings in three dimensional
space, and are projected to knot diagrams for combinatorial and topological
purposes. \ Thus, we have chosen to model quantum knots through the extra
structure of knot diagrams. \ However, it is also possible to investigate
quantum knots in full three dimensional space by using three dimensional knot
mosaics. \ This will be the subject of a forthcoming paper.

\bigskip

In closing this section, we should finally also say that, in the open
literature, the phrase "quantum knot" has many different meanings, and is
sometimes a phrase that is used loosely. \ We mention only two examples. \ In
\cite{Kauffman1}, a quantum knot is essentially defined as an element of the
Hilbert space with orthonormal basis in one-one-correspondence with knot
types, rather than knot representatives. \ Within the context of the mosaic
construction, a quantum knot in \cite{Kauffman1} corresponds to an element of
the orbit Hilbert space $\mathcal{K}^{(n)}/\mathbb{A}(n)$. \ \ In
\cite{Collins1} and in \cite{Sarma1} the phrase "quantum knot" refers not to
knots, but to the use of representations of the braid group to model the
dynamic behavior of certain quantum systems. \ In this contexts, braids are
used as a tool to model topological obstructions to quantum decoherence that
are conjectured to exist within certain quantum systems. \ 

\bigskip

\bigskip

\begin{acknowledgement}
The first author would like to thank the Institute for Scientific Interchange
(ISI) in Torino, Italy, and the Mathematical Sciences Research Institute in
Berkeley, California for providing a research climate where many of the ideas
found within this paper could germinate. \ This effort was partially supported
by the Defense Advanced Research Projects Agency (DARPA) and Air Force
Research Laboratory, Air Force Materiel Command, USAF, under agreement number
F30602-01-2-0522, and by L-OO-P Fund Grant BECA2002. The second author would
like to thank the National Science Foundation for support under NSF grant DMS-0245588.
\end{acknowledgement}

\bigskip

\bigskip

\section{Appendix A: A list of all knot 3-mosaics}

\bigskip

A complete list of all knot 3-mosaics, listed in lexicographic (lex) order, is
given below:

\bigskip

$\hspace{-0.7in}\underset{K_{0}=\text{ 000-000-000}}{\fbox{$%

$}}$

\bigskip

\section{Appendix B: Oriented mosaics and oriented quantum knots}

\bigskip

So far, we have only discussed unoriented objects in this paper, e.g.,
unoriented mosaics, unoriented knot mosaics, unoriented quantum knots, and so
forth. \ In this appendix, we briefly discuss how all these unoriented objects
can be transformed into oriented ones.

\bigskip

Let $\mathbb{T}^{(o)}$ denote the set of the following 29 symbols \bigskip

\hspace{-0.7in}%
{\includegraphics[
height=0.3269in,
width=0.3269in
]%
{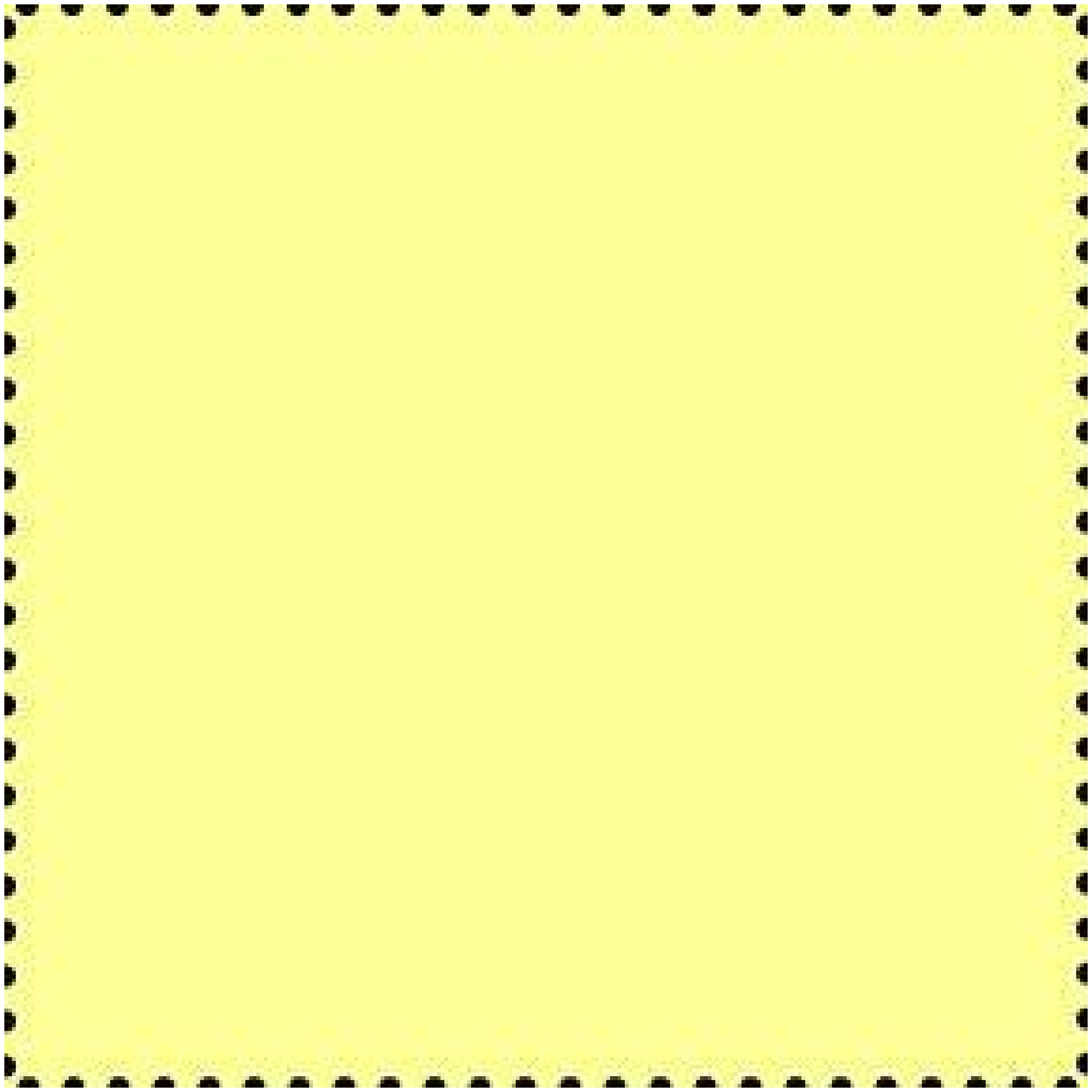}%
}%
\qquad%
{\includegraphics[
height=0.3269in,
width=0.3269in
]%
{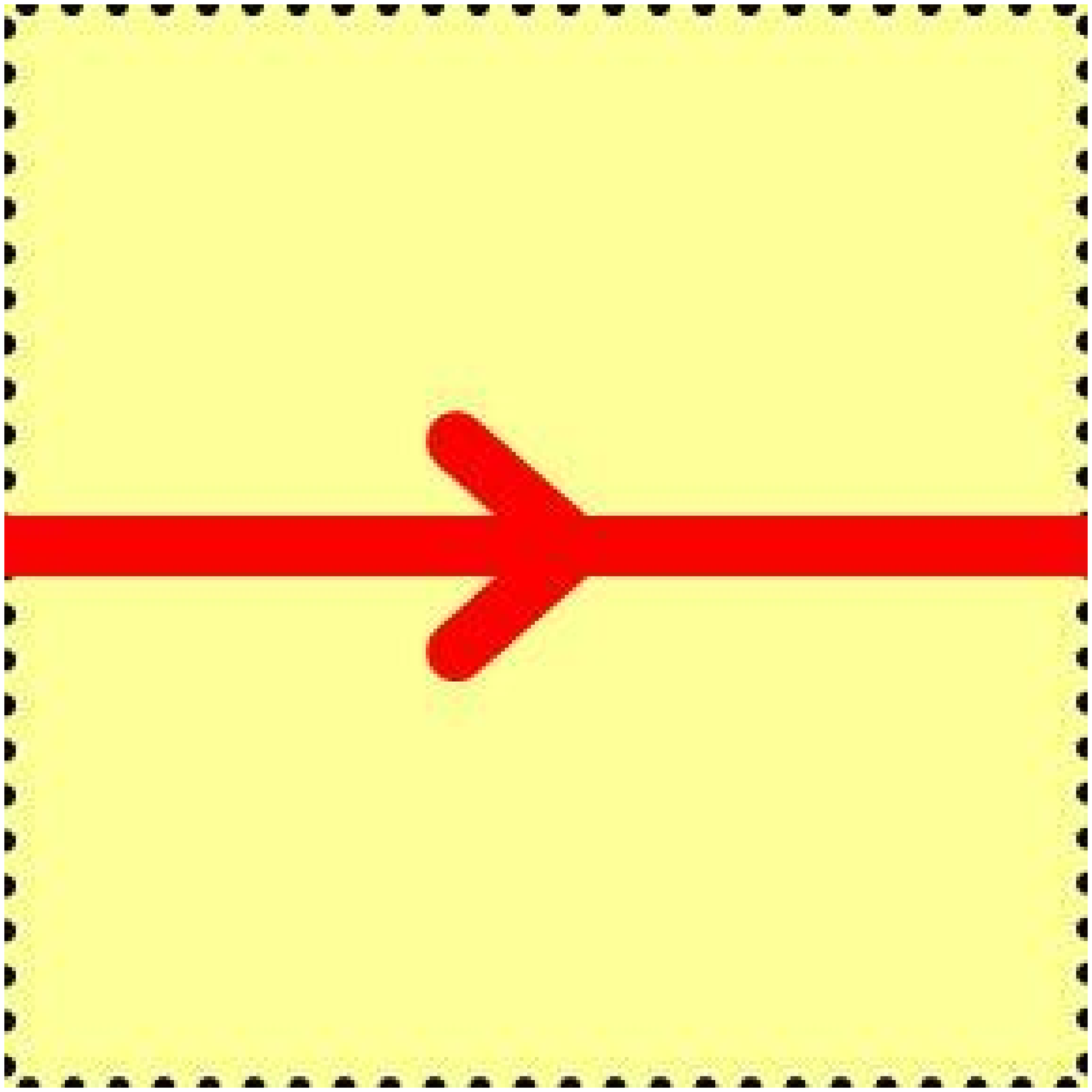}%
}%
{\includegraphics[
height=0.3269in,
width=0.3269in
]%
{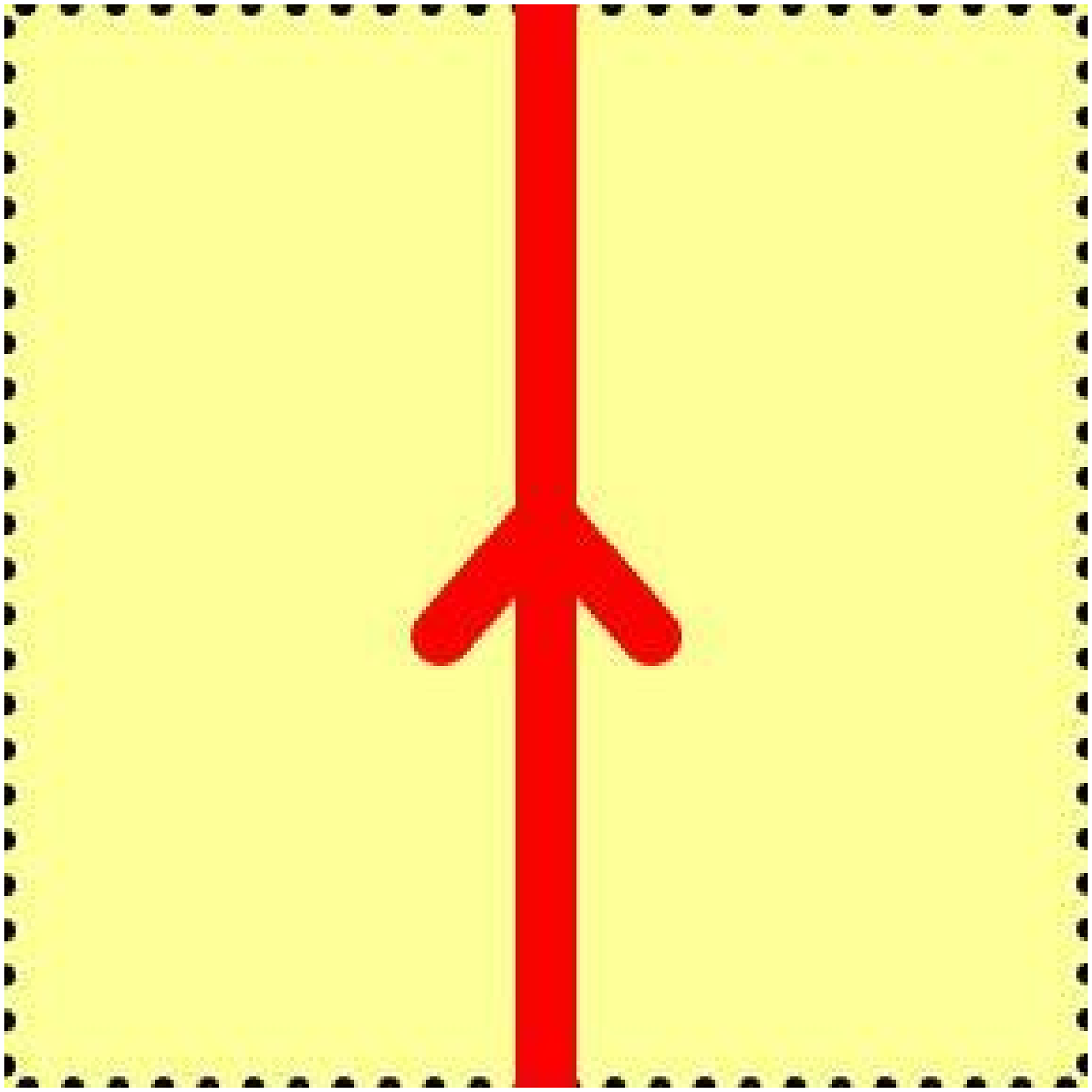}%
}%
{\includegraphics[
height=0.3269in,
width=0.3269in
]%
{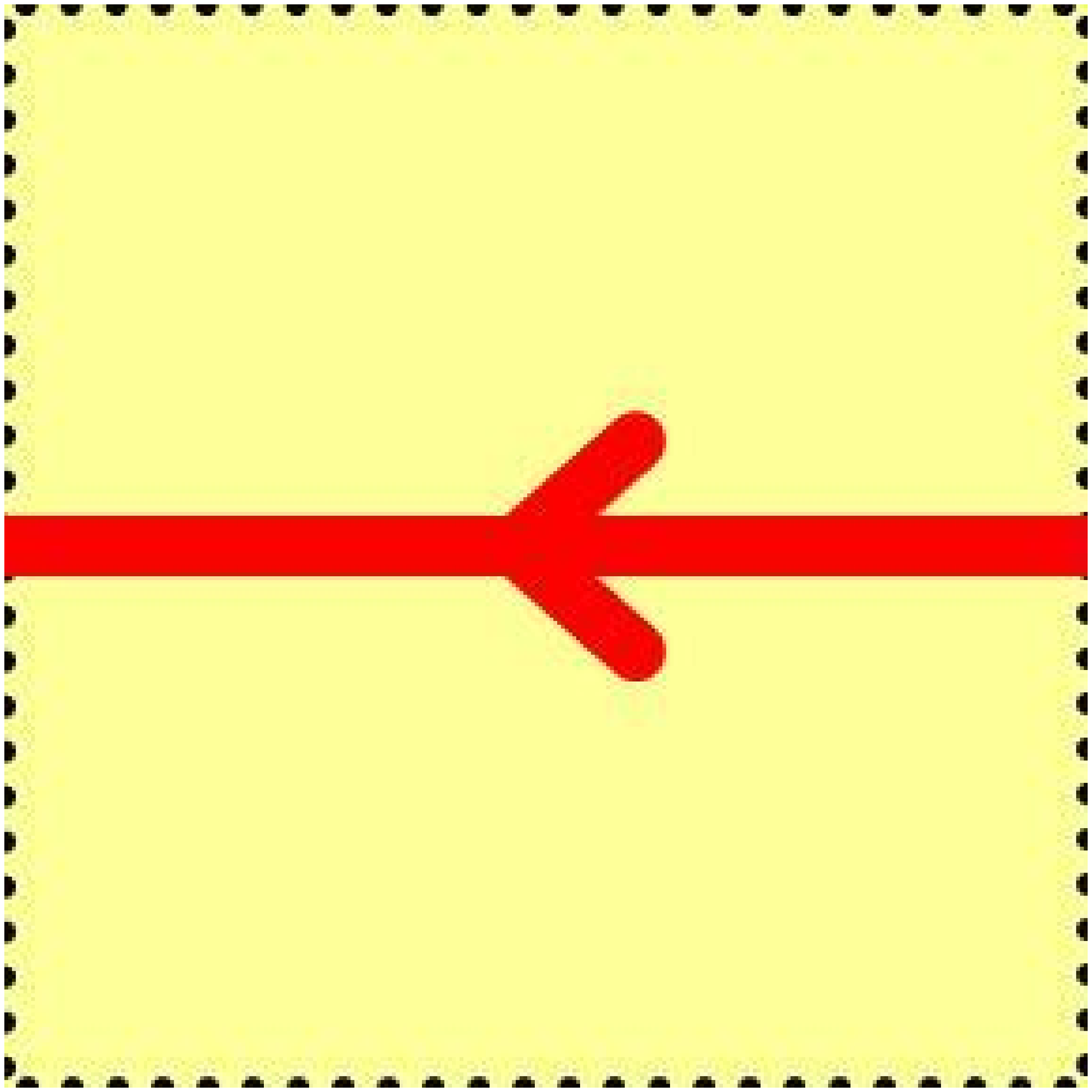}%
}%
{\includegraphics[
height=0.3269in,
width=0.3269in
]%
{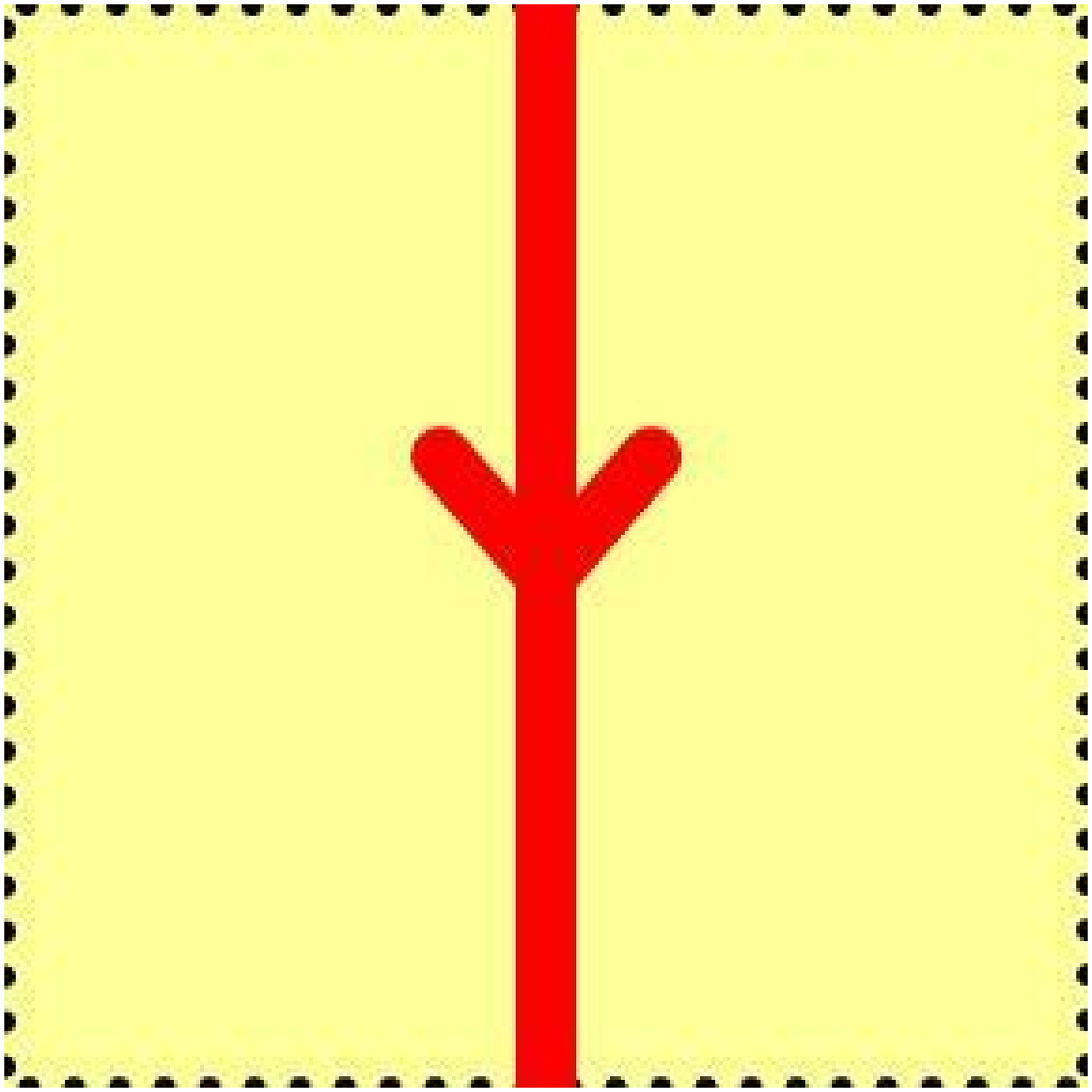}%
}%
\qquad%
{\includegraphics[
height=0.3269in,
width=0.3269in
]%
{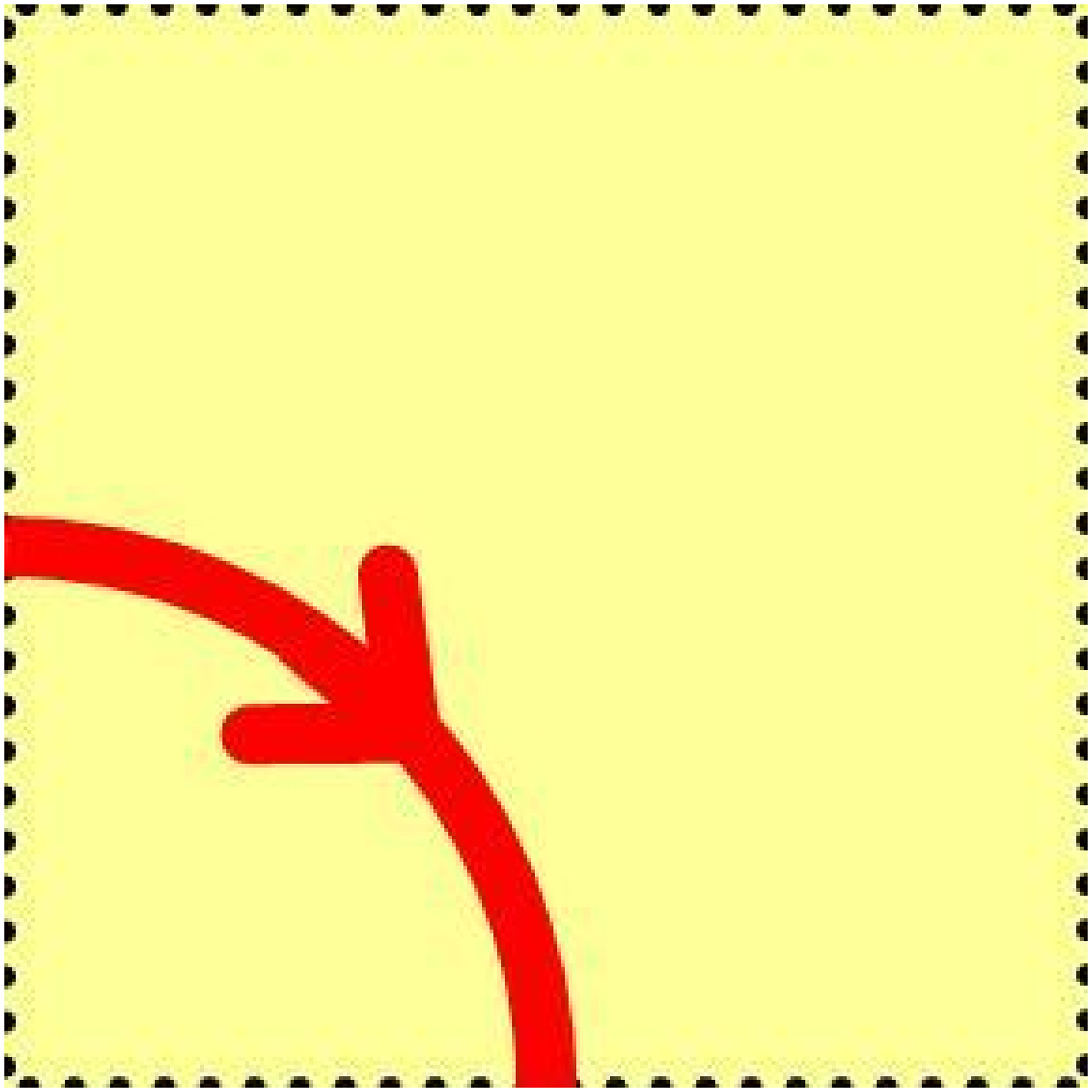}%
}%
{\includegraphics[
height=0.3269in,
width=0.3269in
]%
{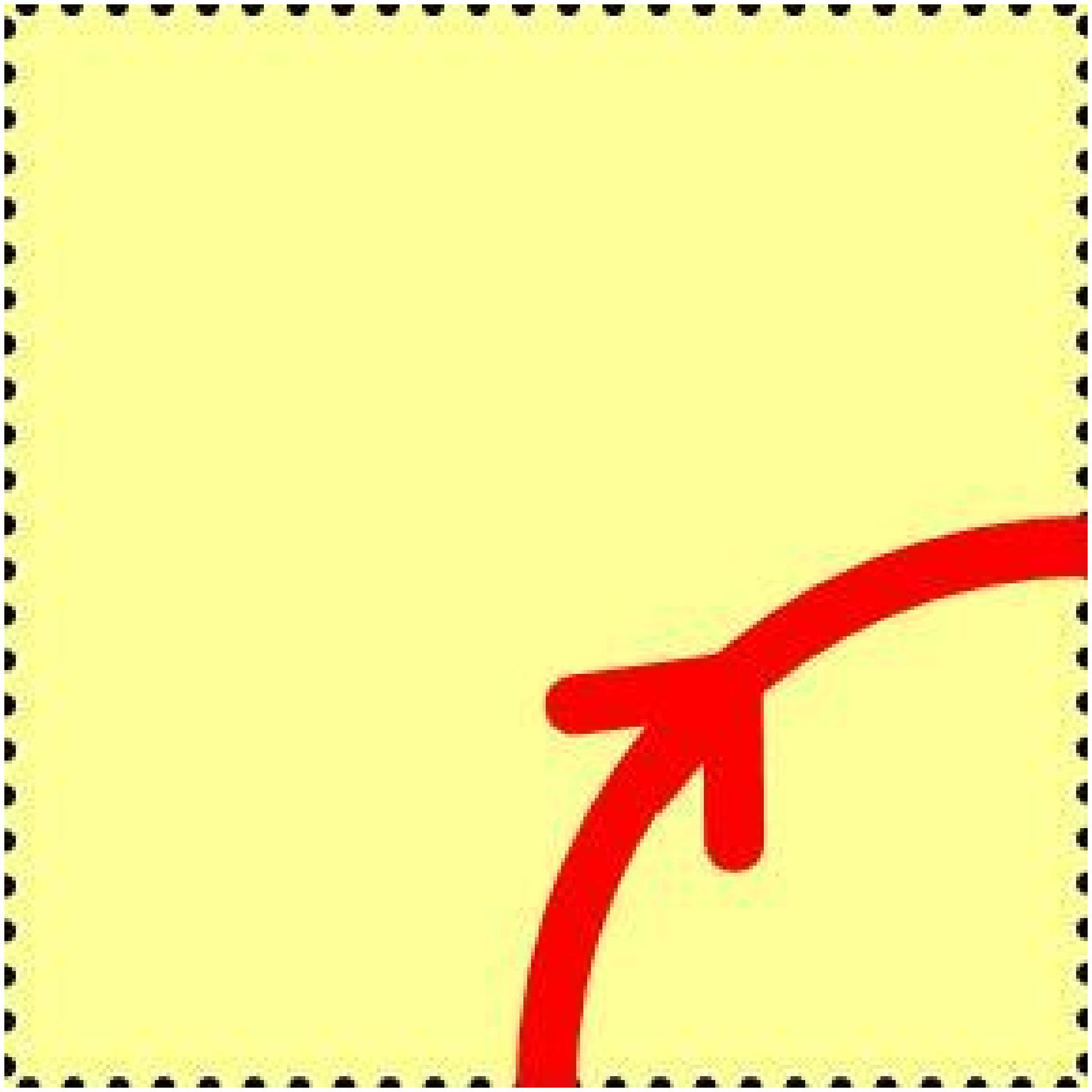}%
}%
{\includegraphics[
height=0.3269in,
width=0.3269in
]%
{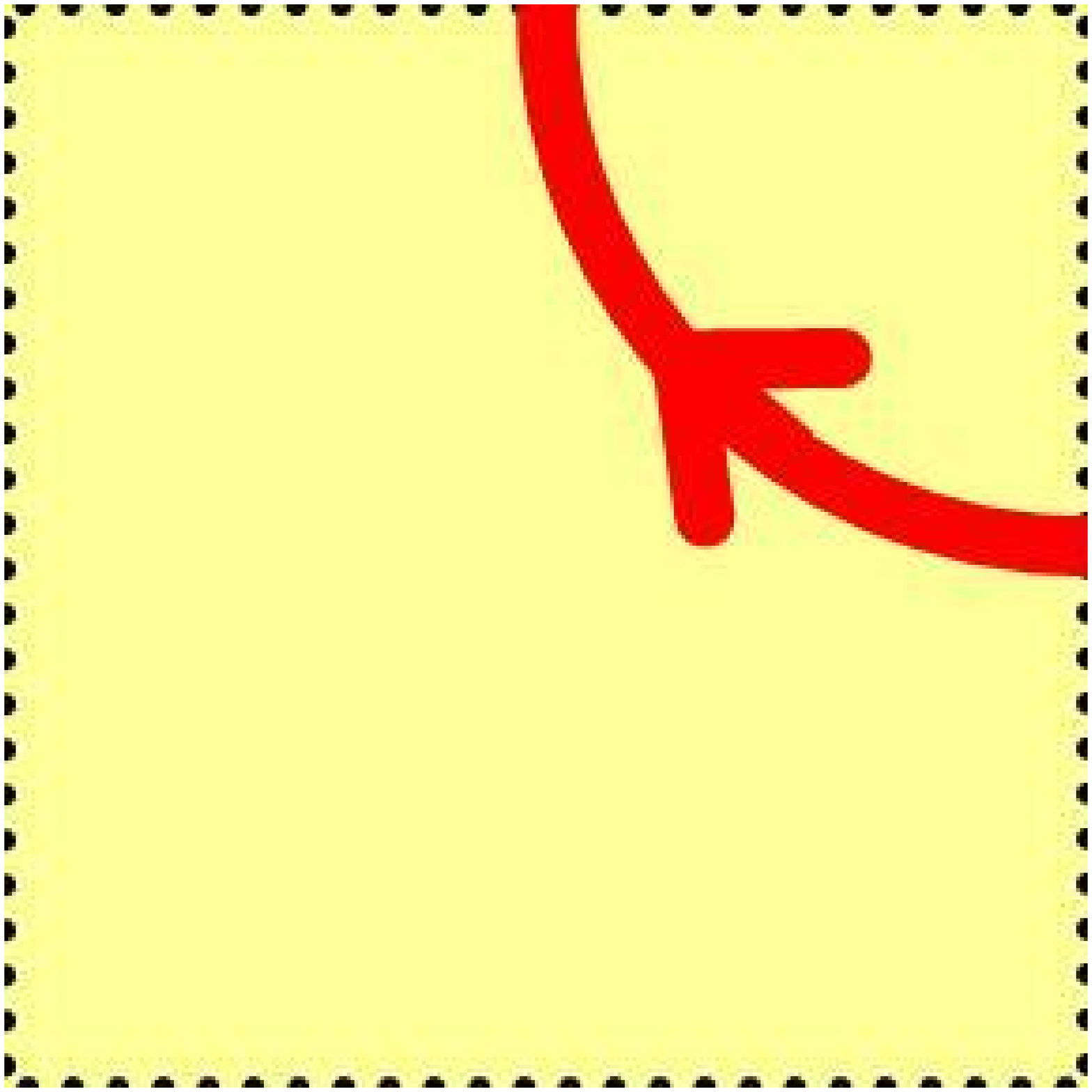}%
}%
{\includegraphics[
height=0.3269in,
width=0.3269in
]%
{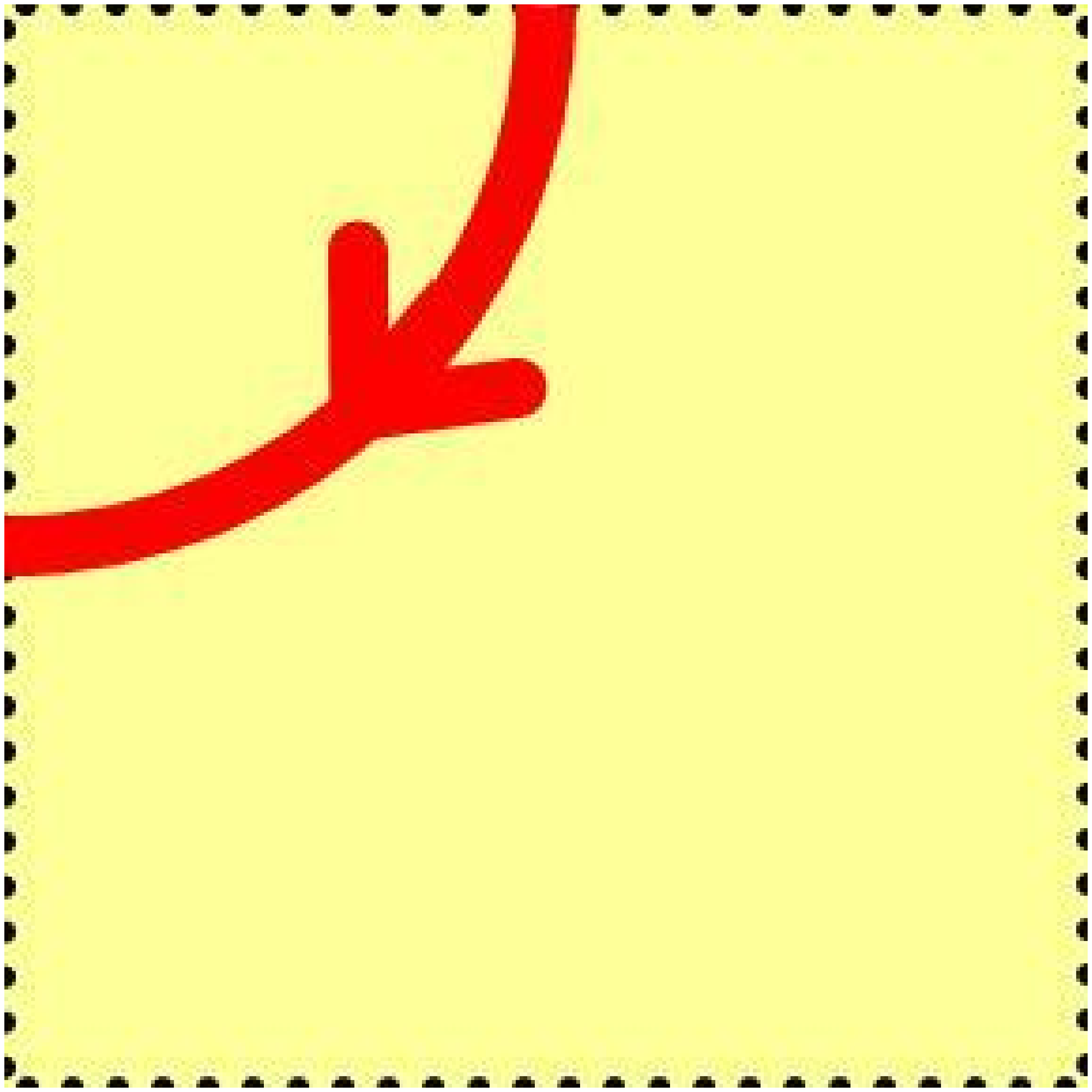}%
}%
\qquad%
{\includegraphics[
height=0.3269in,
width=0.3269in
]%
{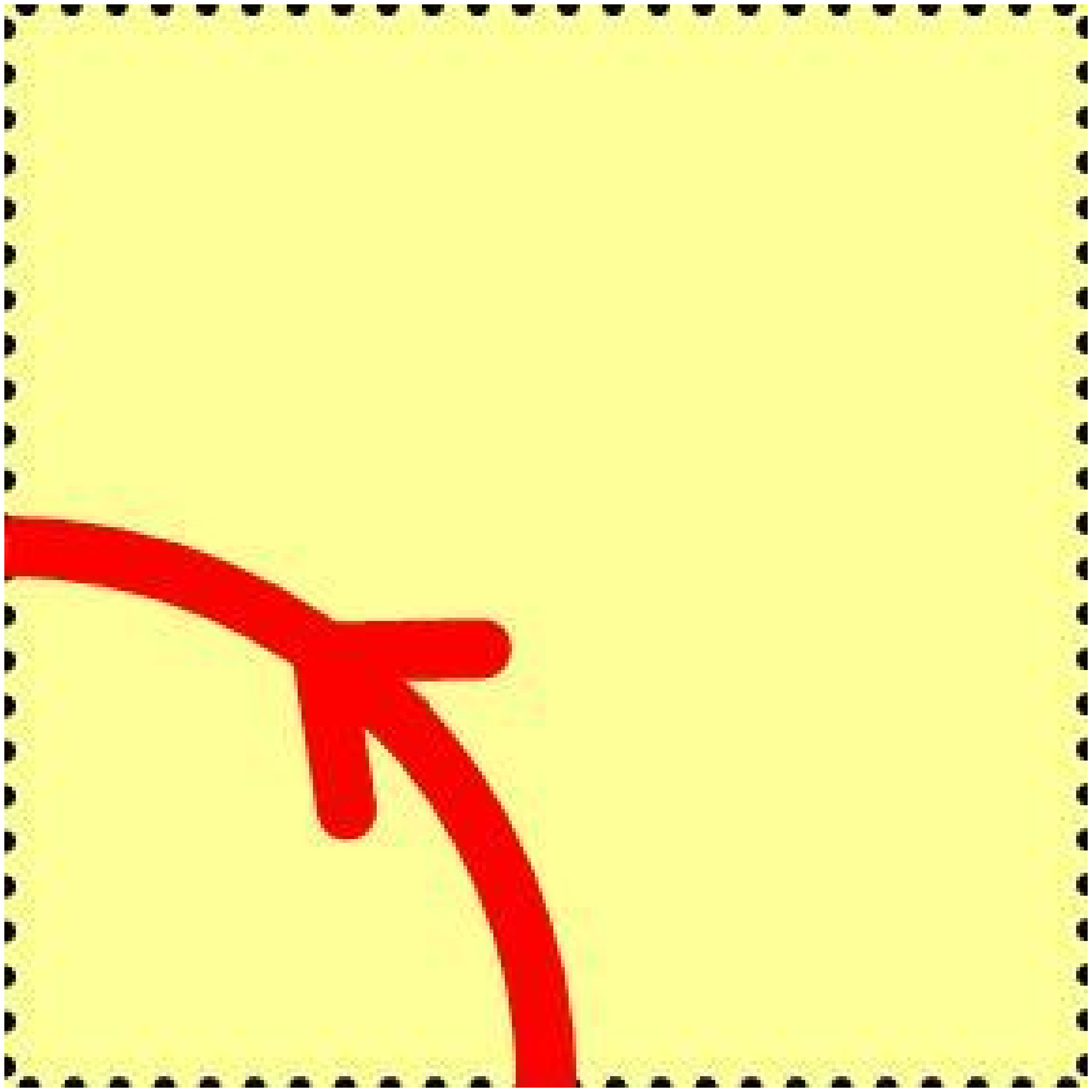}%
}%
{\includegraphics[
height=0.3269in,
width=0.3269in
]%
{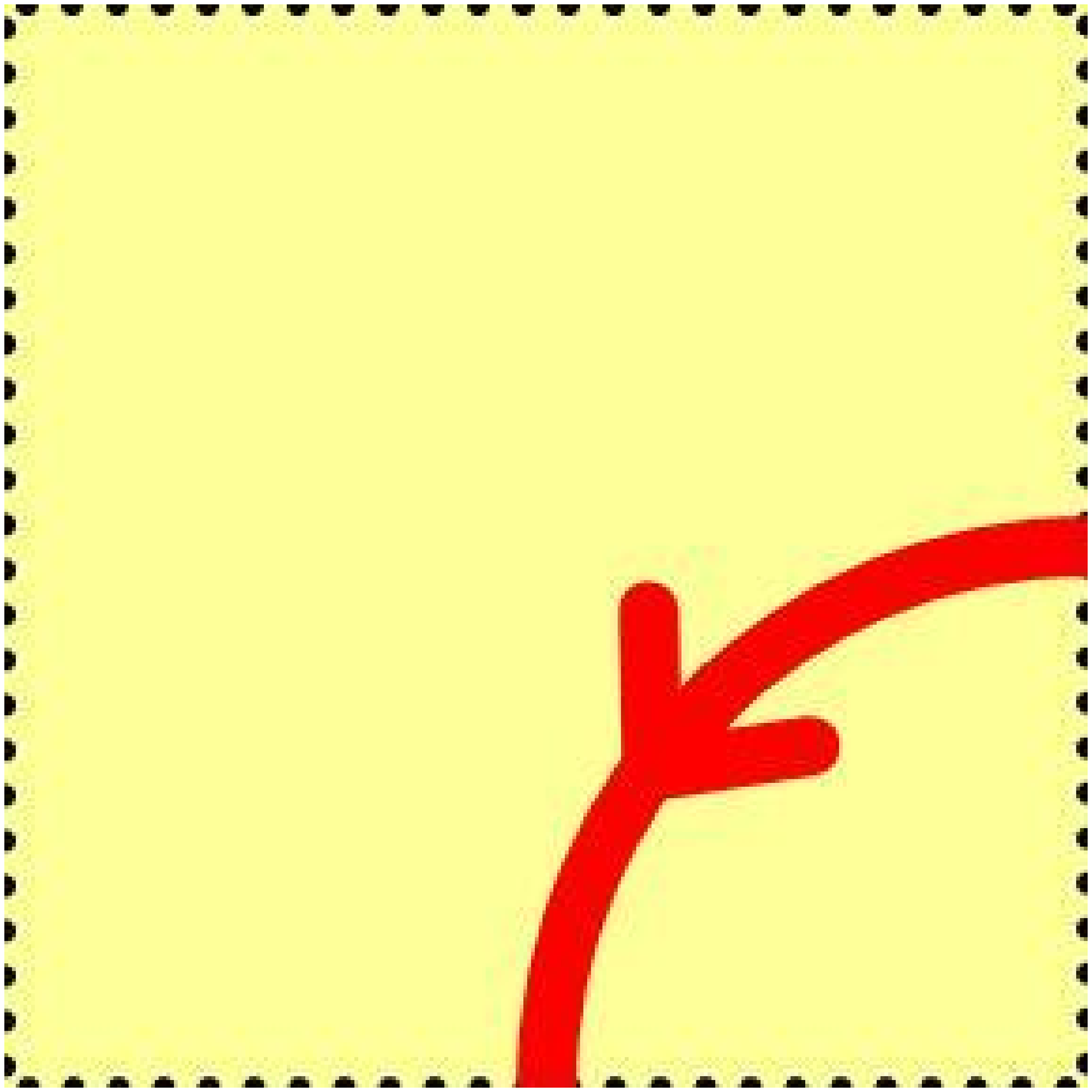}%
}%
{\includegraphics[
height=0.3269in,
width=0.3269in
]%
{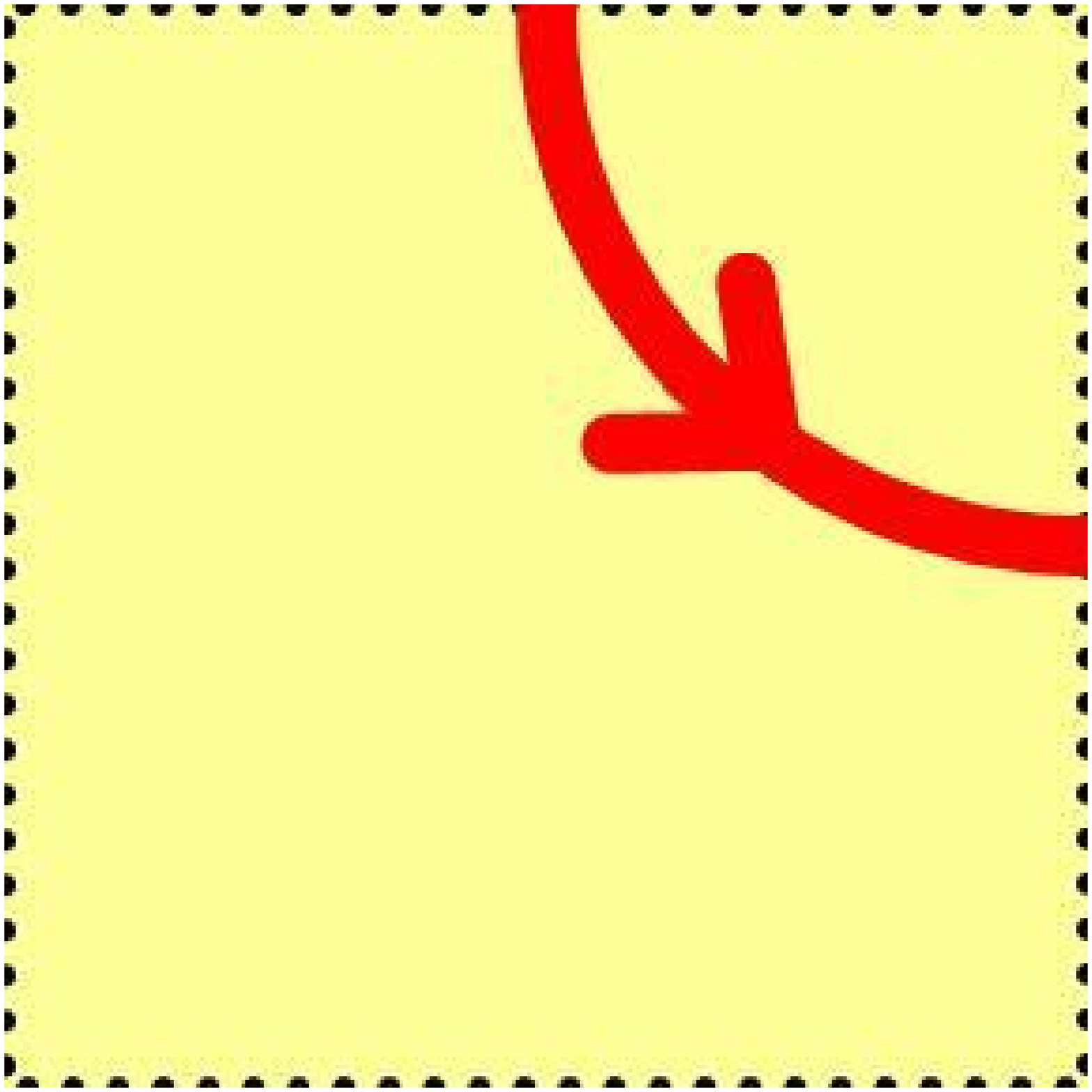}%
}%
{\includegraphics[
height=0.3269in,
width=0.3269in
]%
{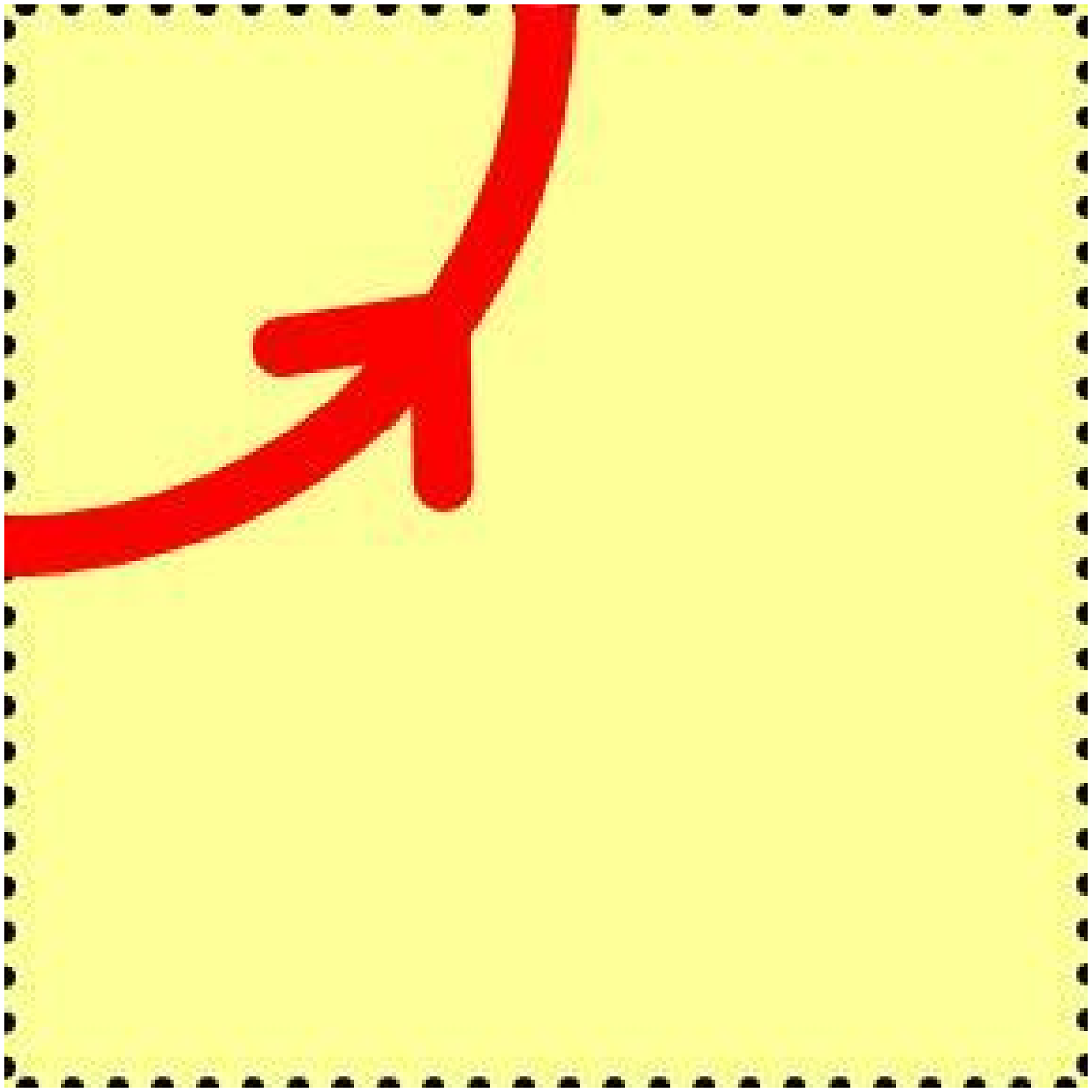}%
}%
\bigskip%

{\includegraphics[
height=0.3269in,
width=0.3269in
]%
{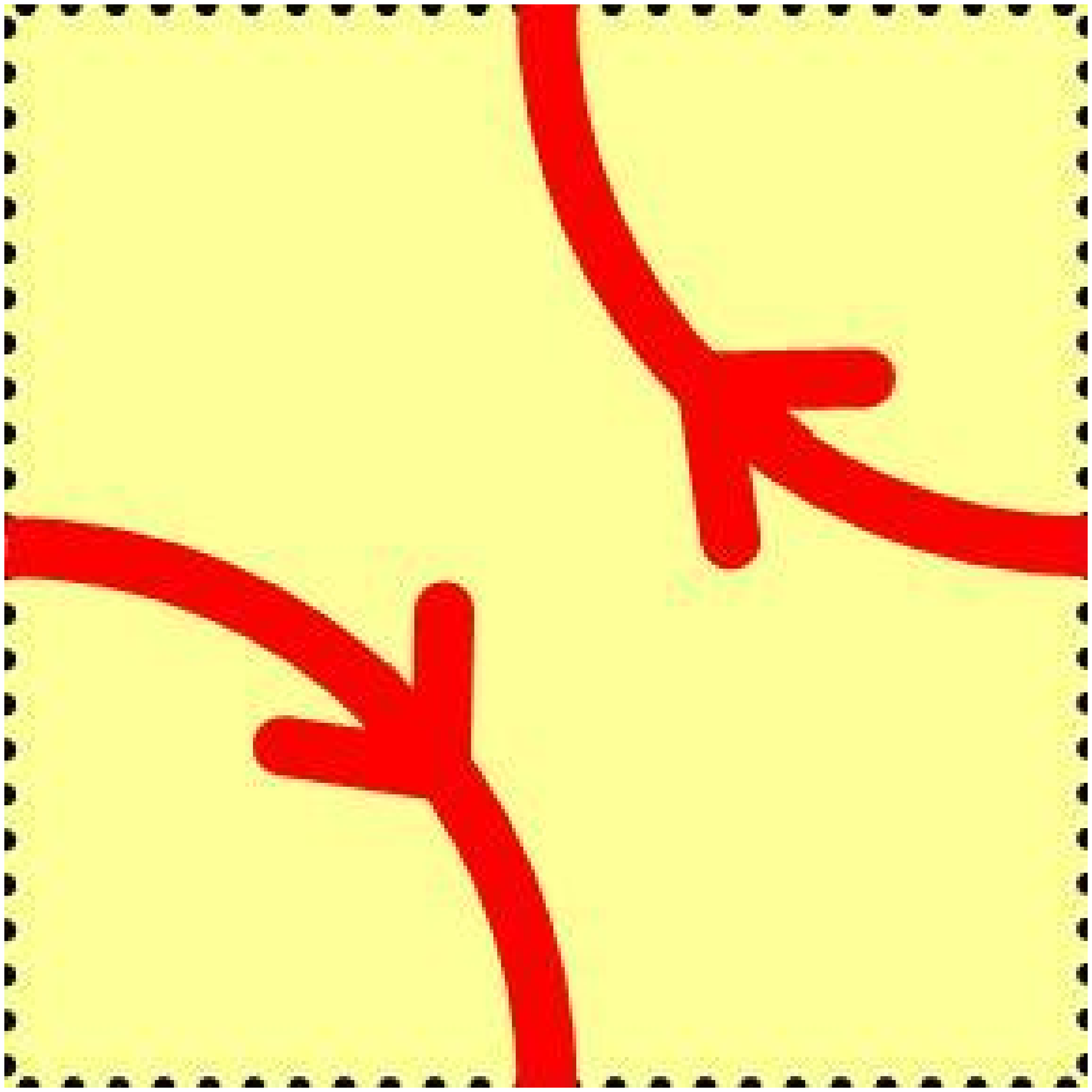}%
}%
{\includegraphics[
height=0.3269in,
width=0.3269in
]%
{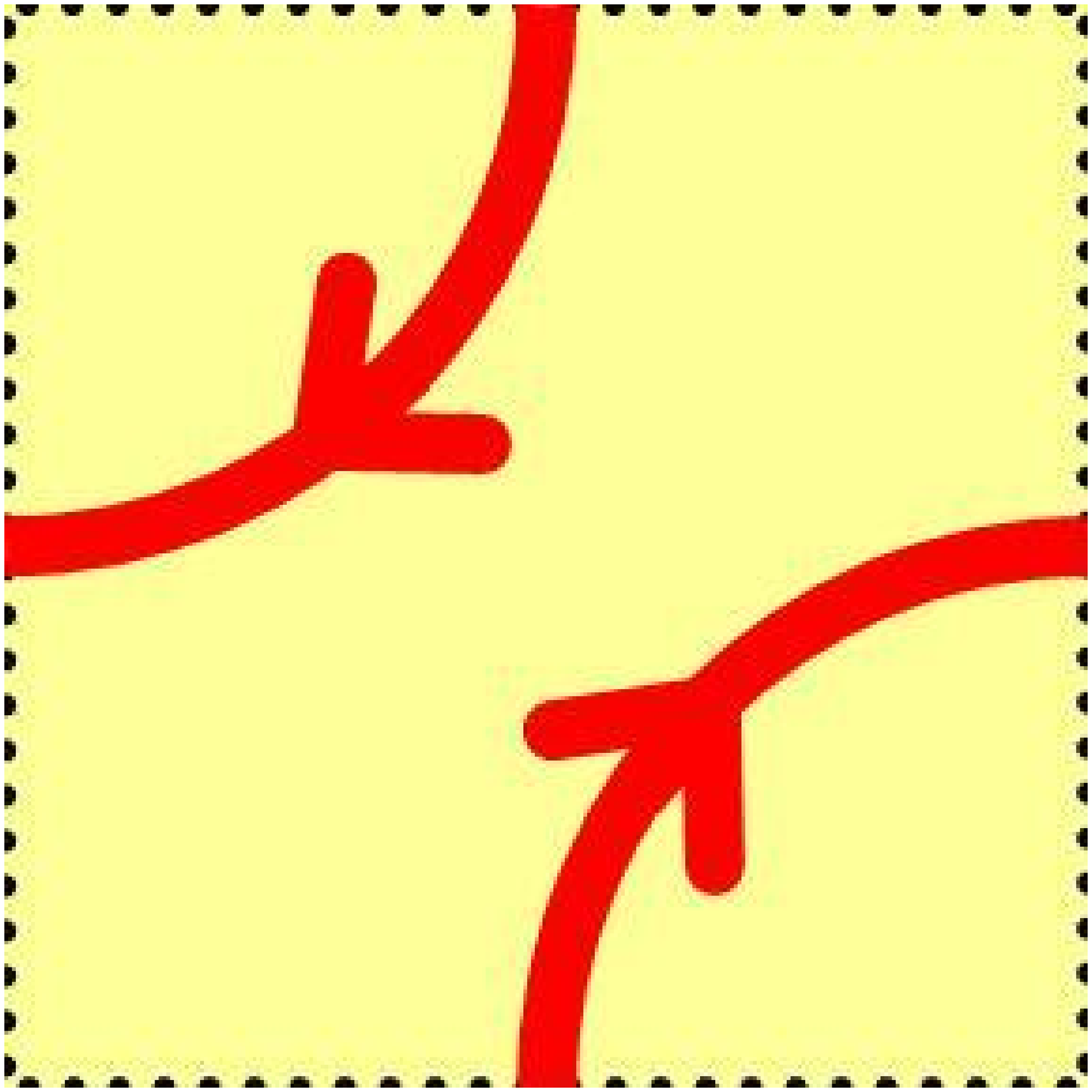}%
}%
\qquad%
{\includegraphics[
height=0.3269in,
width=0.3269in
]%
{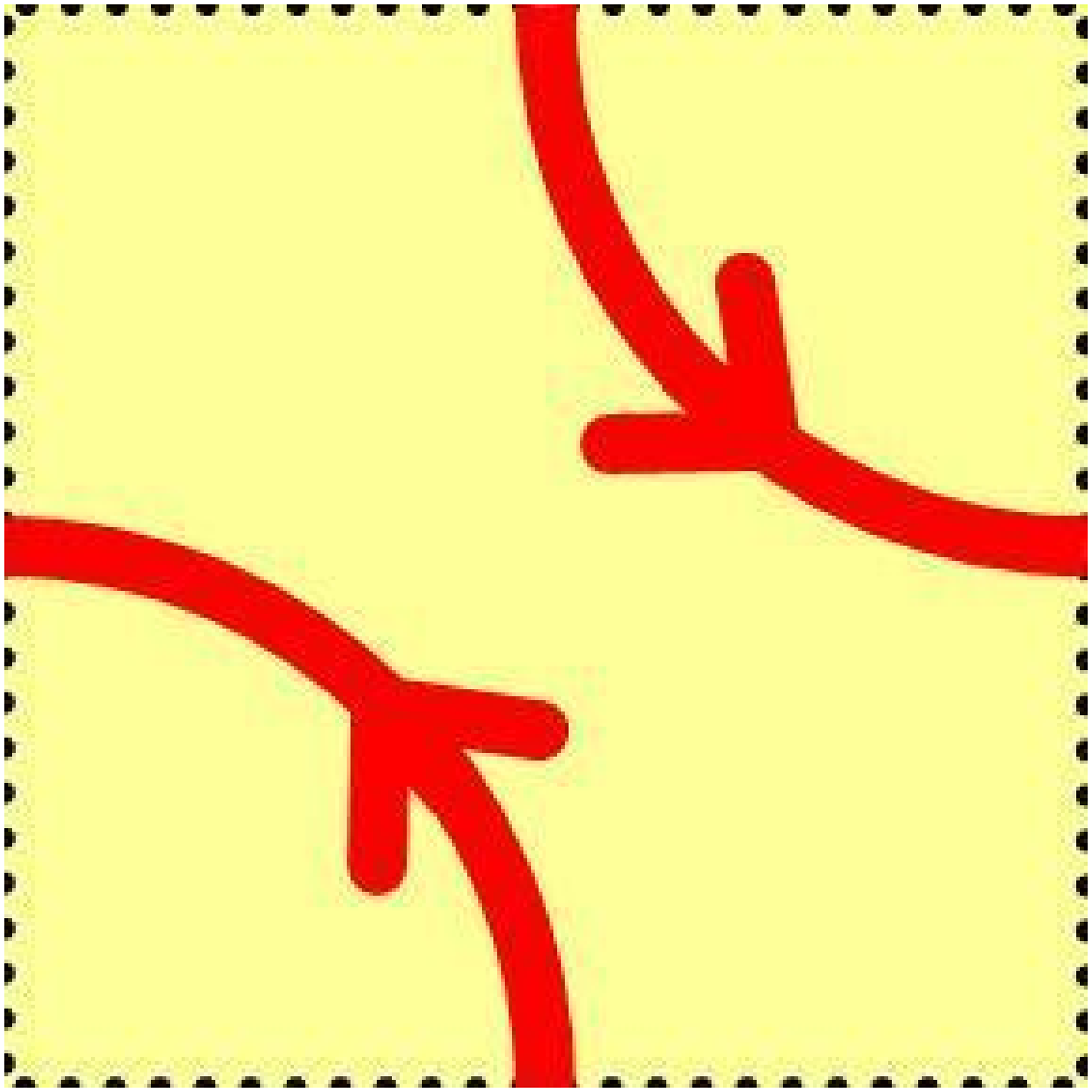}%
}%
{\includegraphics[
height=0.3269in,
width=0.3269in
]%
{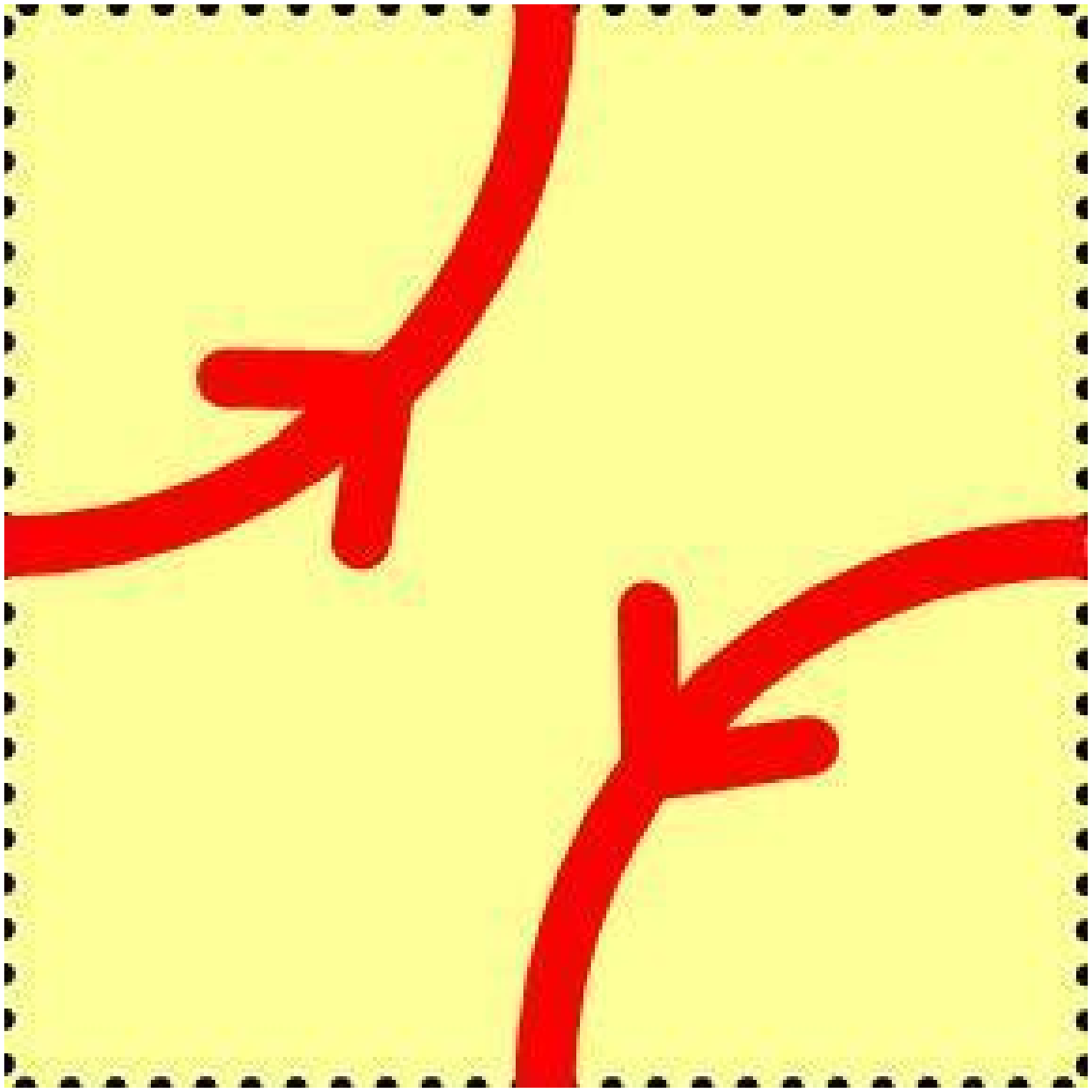}%
}%
\qquad%
{\includegraphics[
height=0.3269in,
width=0.3269in
]%
{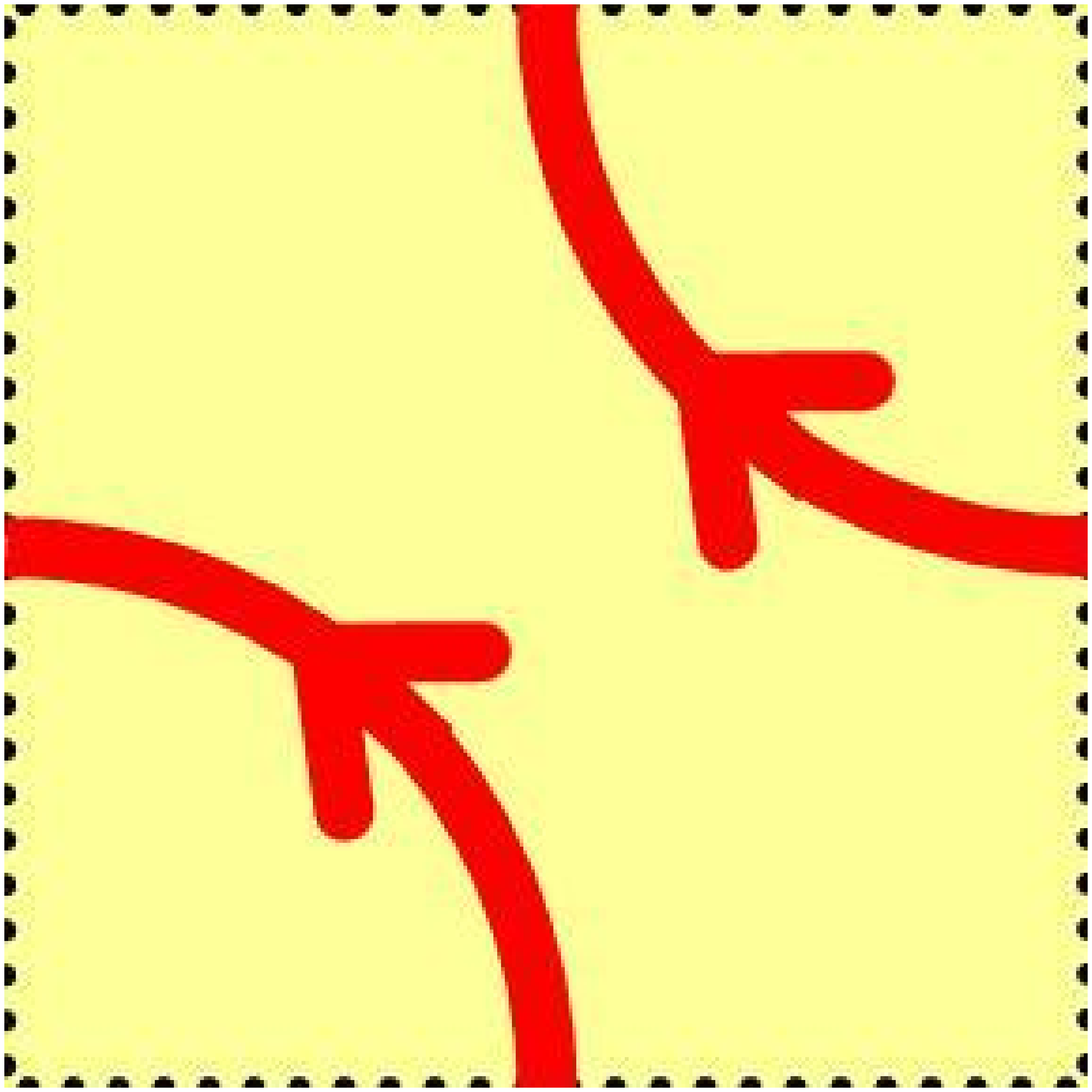}%
}%
{\includegraphics[
height=0.3269in,
width=0.3269in
]%
{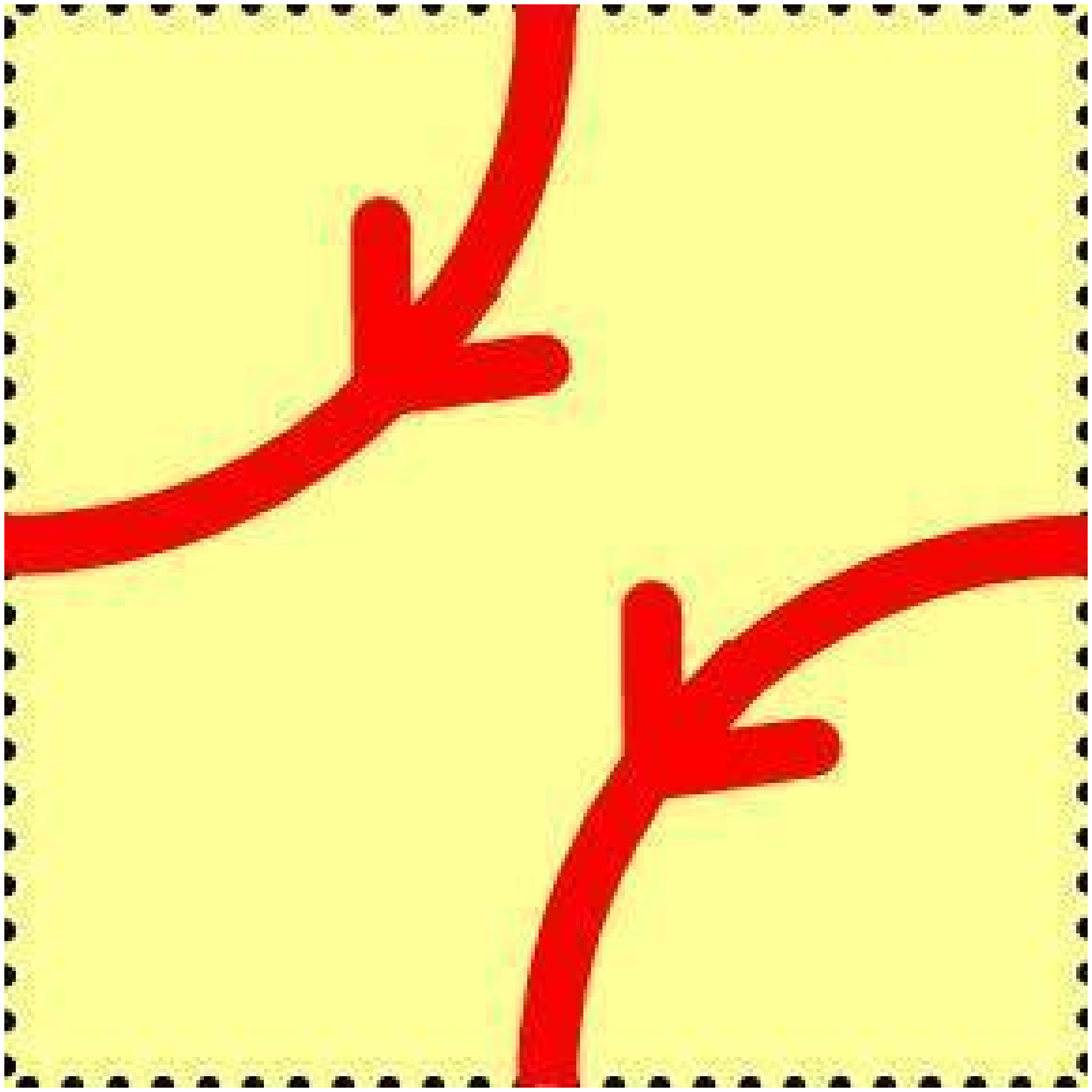}%
}%
{\includegraphics[
height=0.3269in,
width=0.3269in
]%
{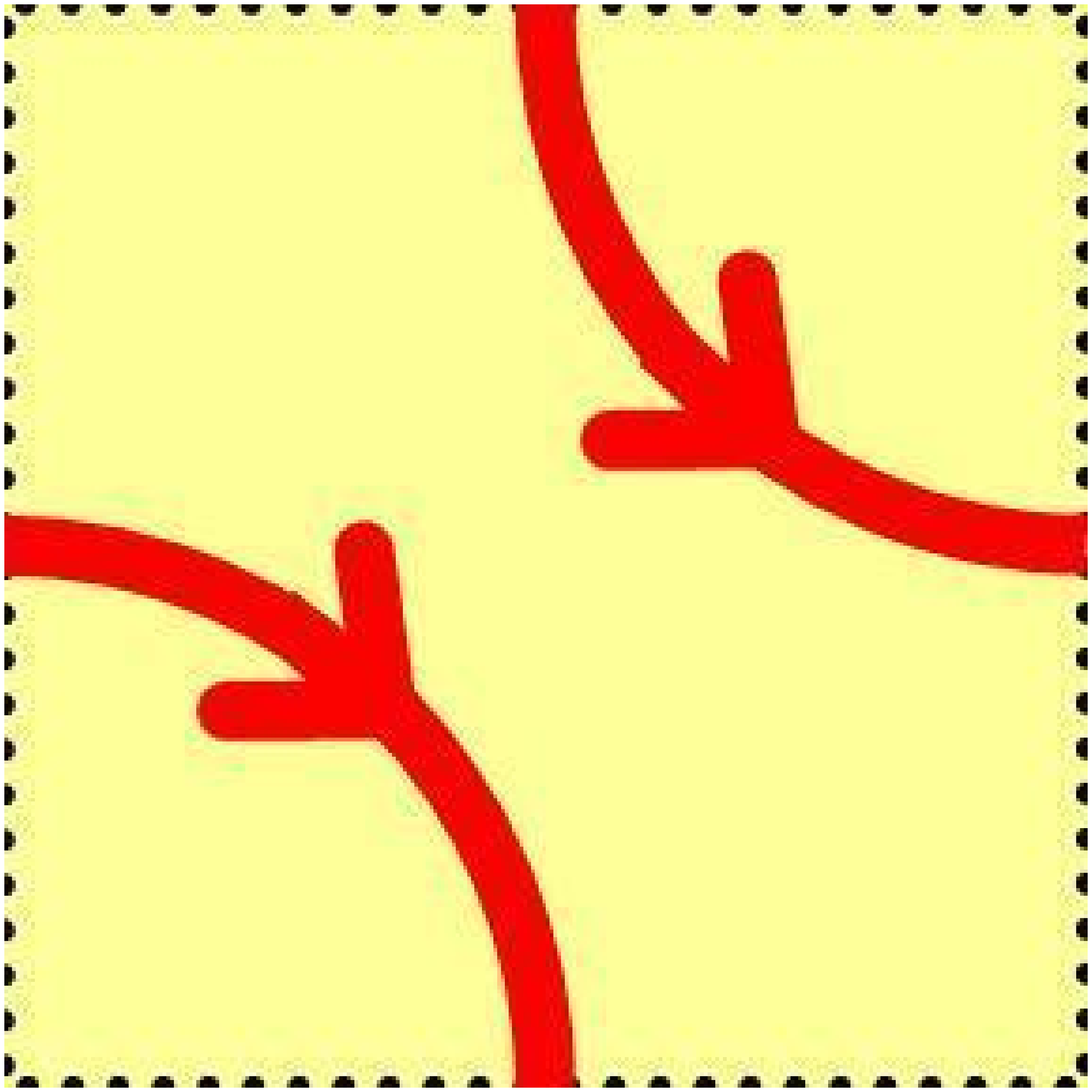}%
}%
{\includegraphics[
height=0.3269in,
width=0.3269in
]%
{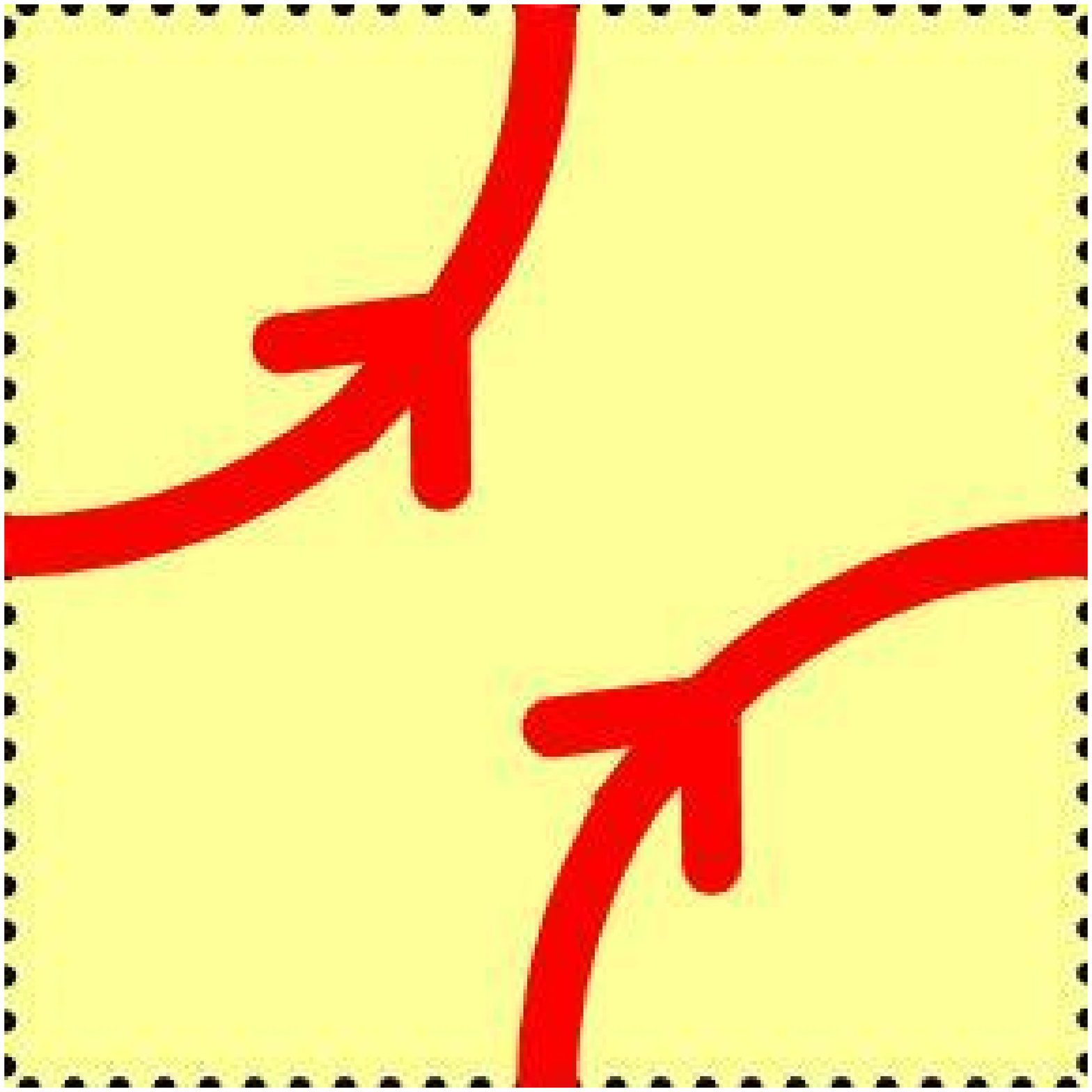}%
}%
\bigskip

\quad%
{\includegraphics[
height=0.3269in,
width=0.3269in
]%
{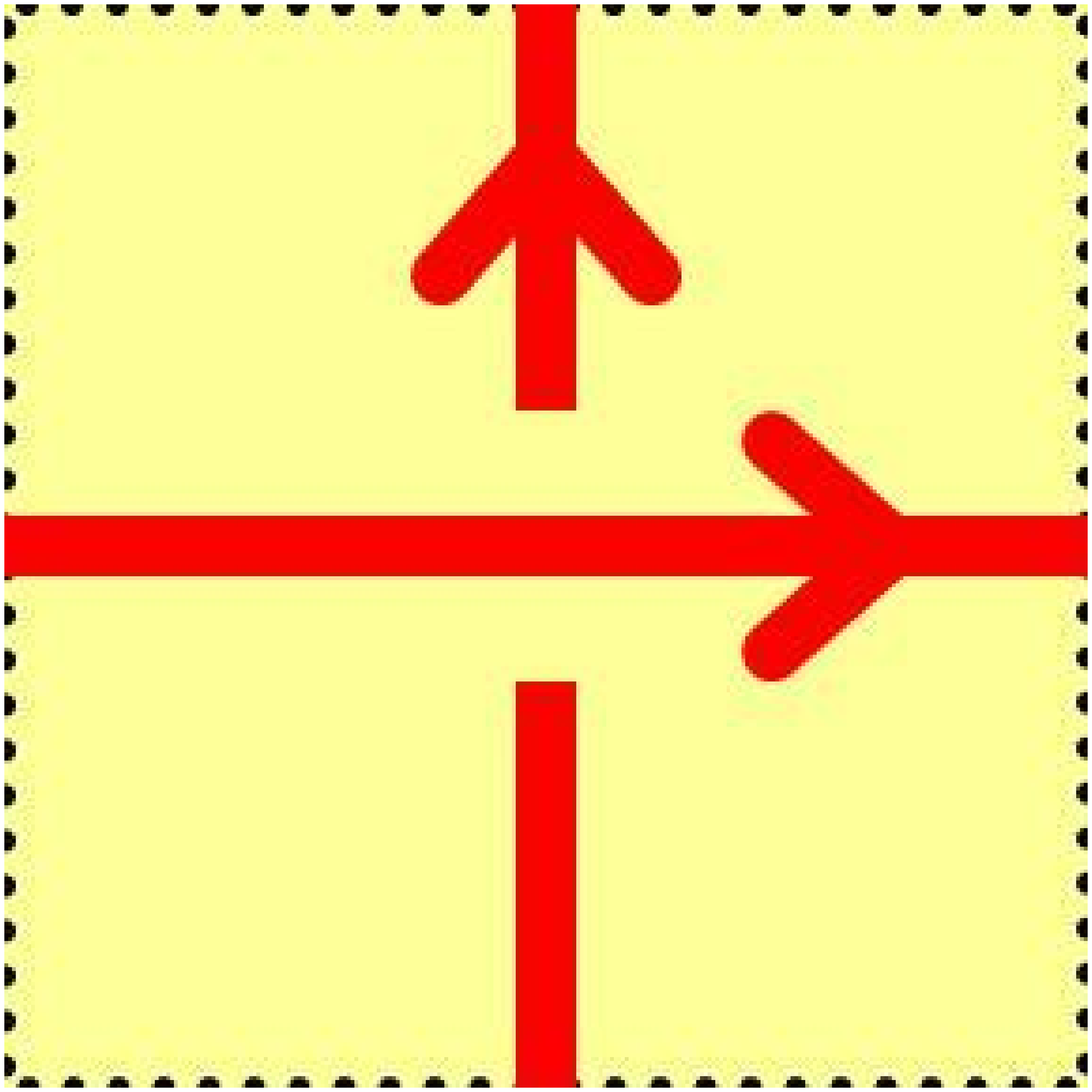}%
}%
{\includegraphics[
height=0.3269in,
width=0.3269in
]%
{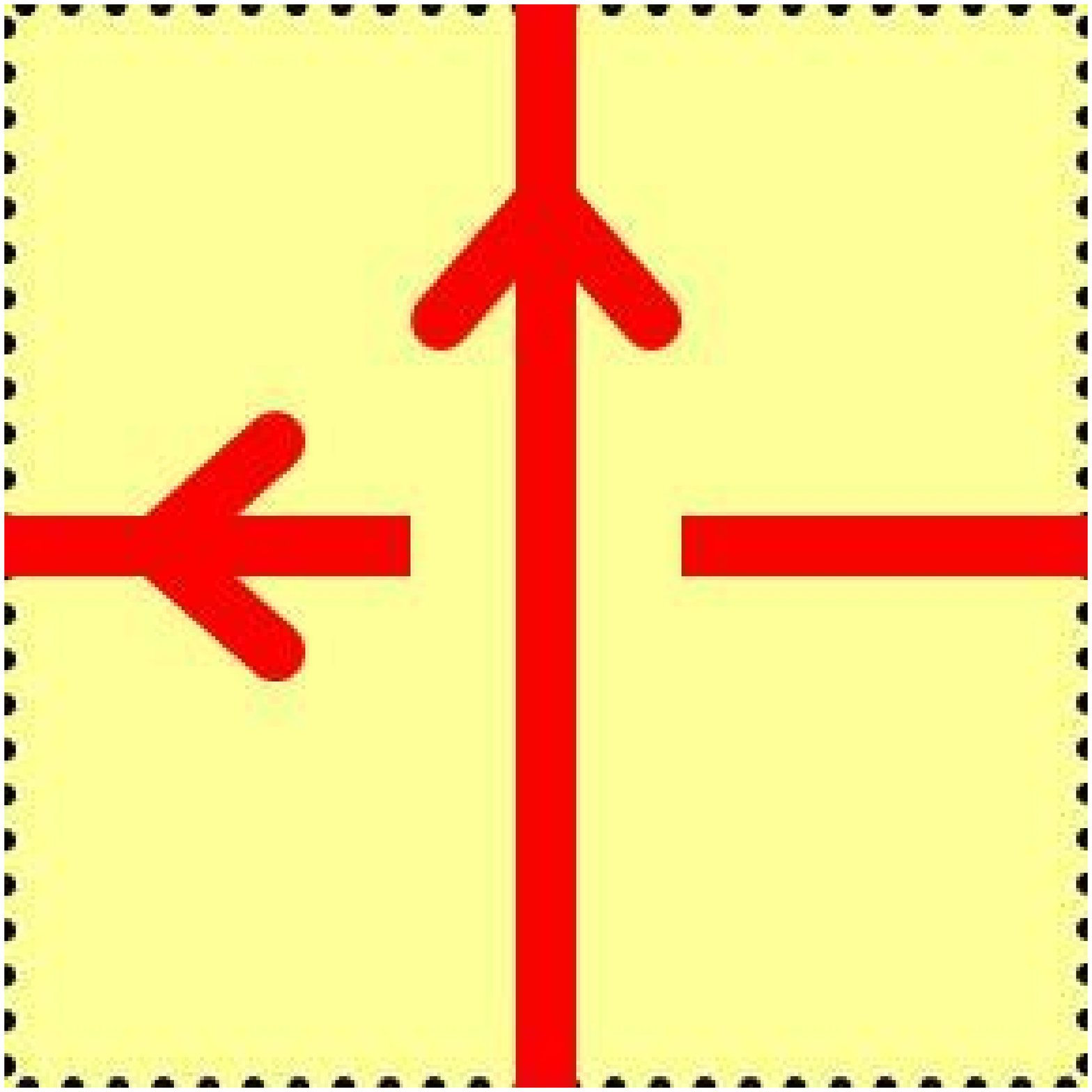}%
}%
{\includegraphics[
height=0.3269in,
width=0.3269in
]%
{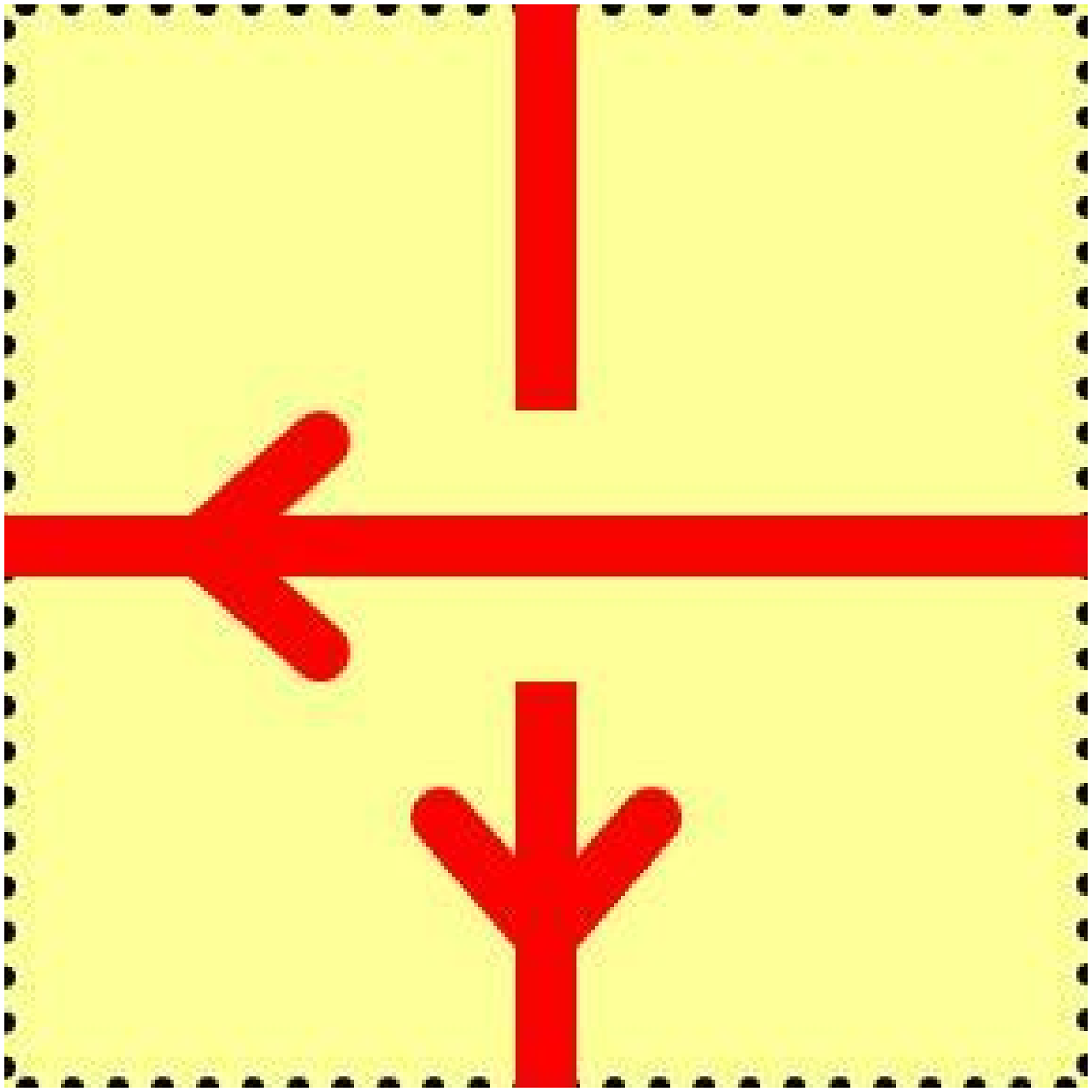}%
}%
{\includegraphics[
height=0.3269in,
width=0.3269in
]%
{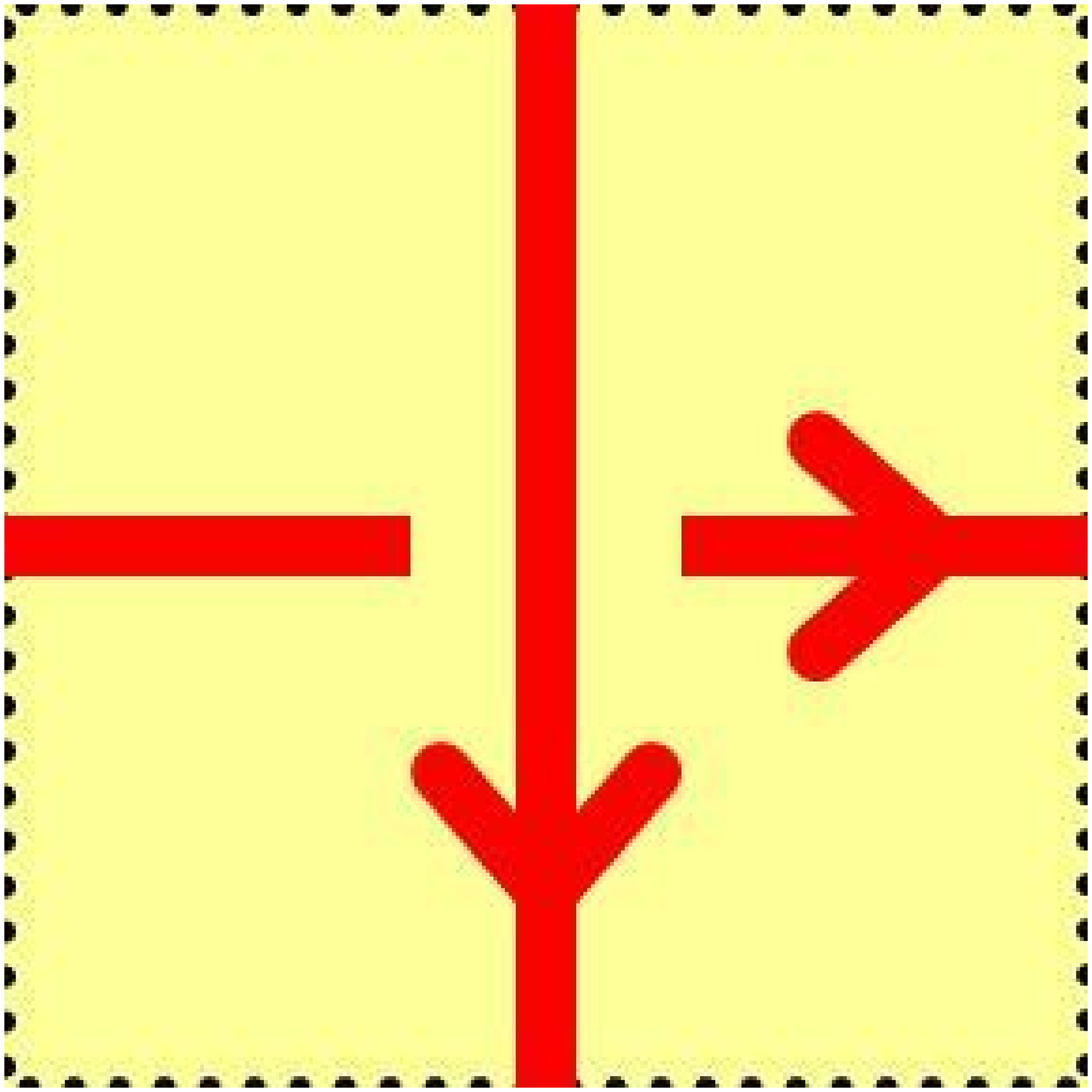}%
}%
\qquad%
{\includegraphics[
height=0.3269in,
width=0.3269in
]%
{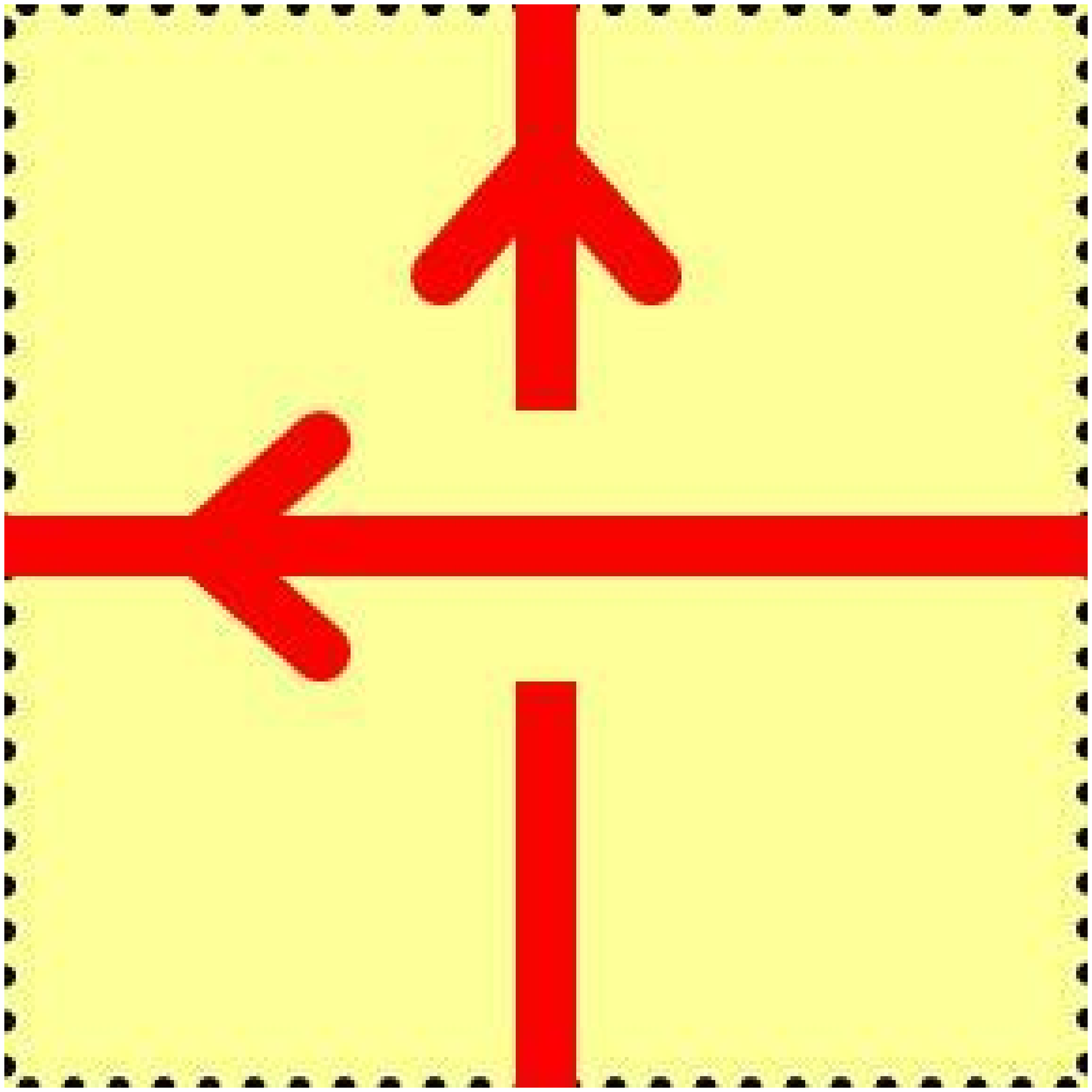}%
}%
{\includegraphics[
height=0.3269in,
width=0.3269in
]%
{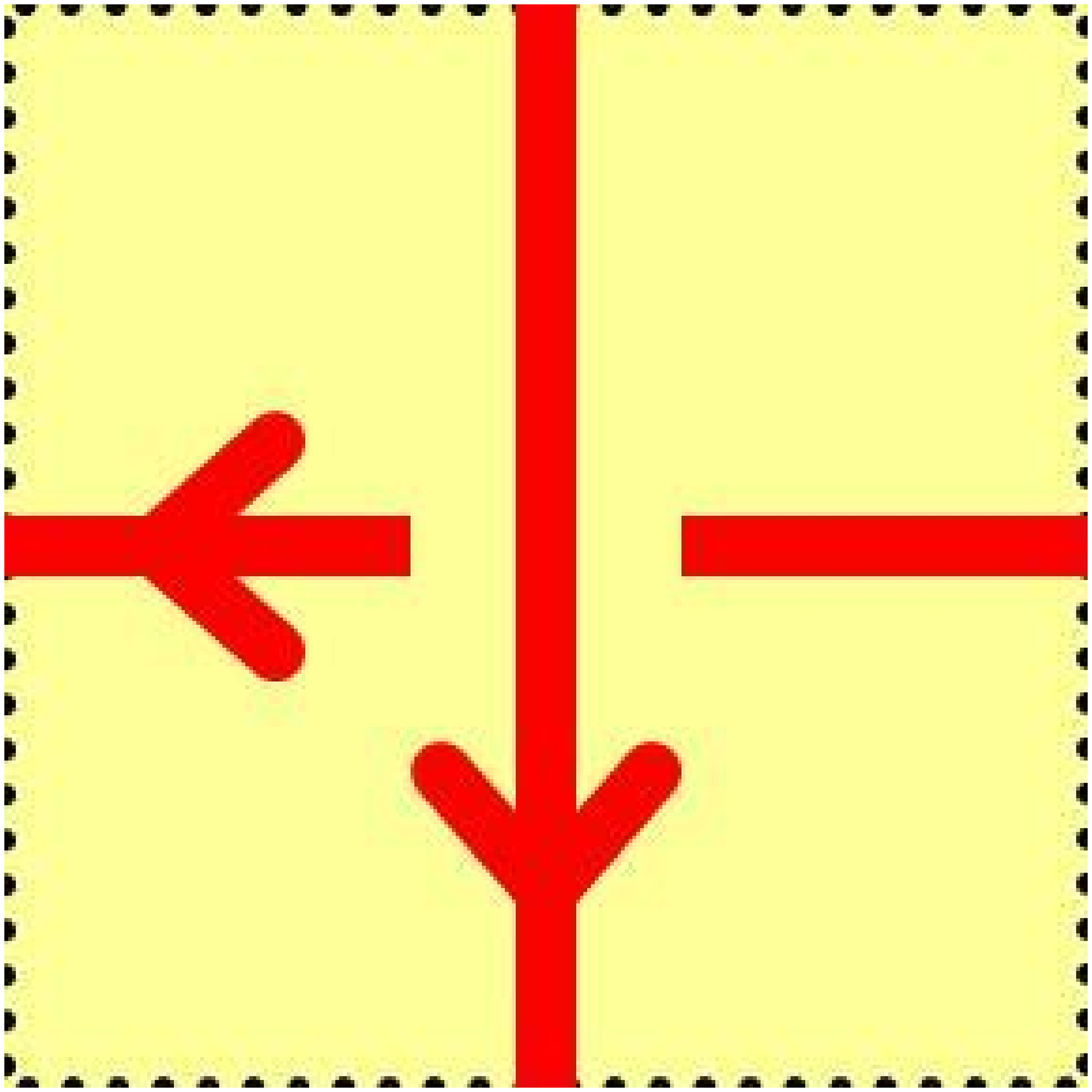}%
}%
{\includegraphics[
height=0.3269in,
width=0.3269in
]%
{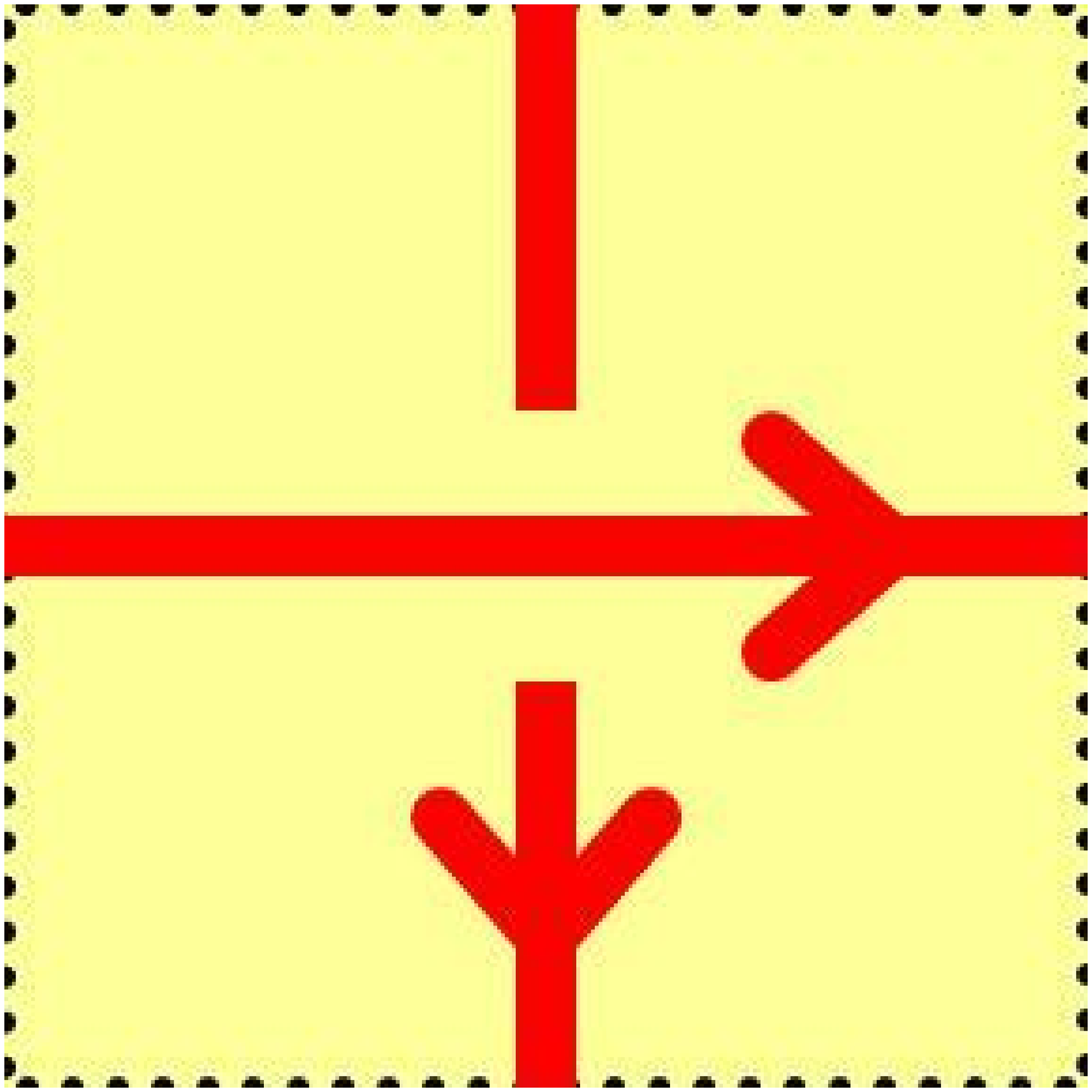}%
}%
{\includegraphics[
height=0.3269in,
width=0.3269in
]%
{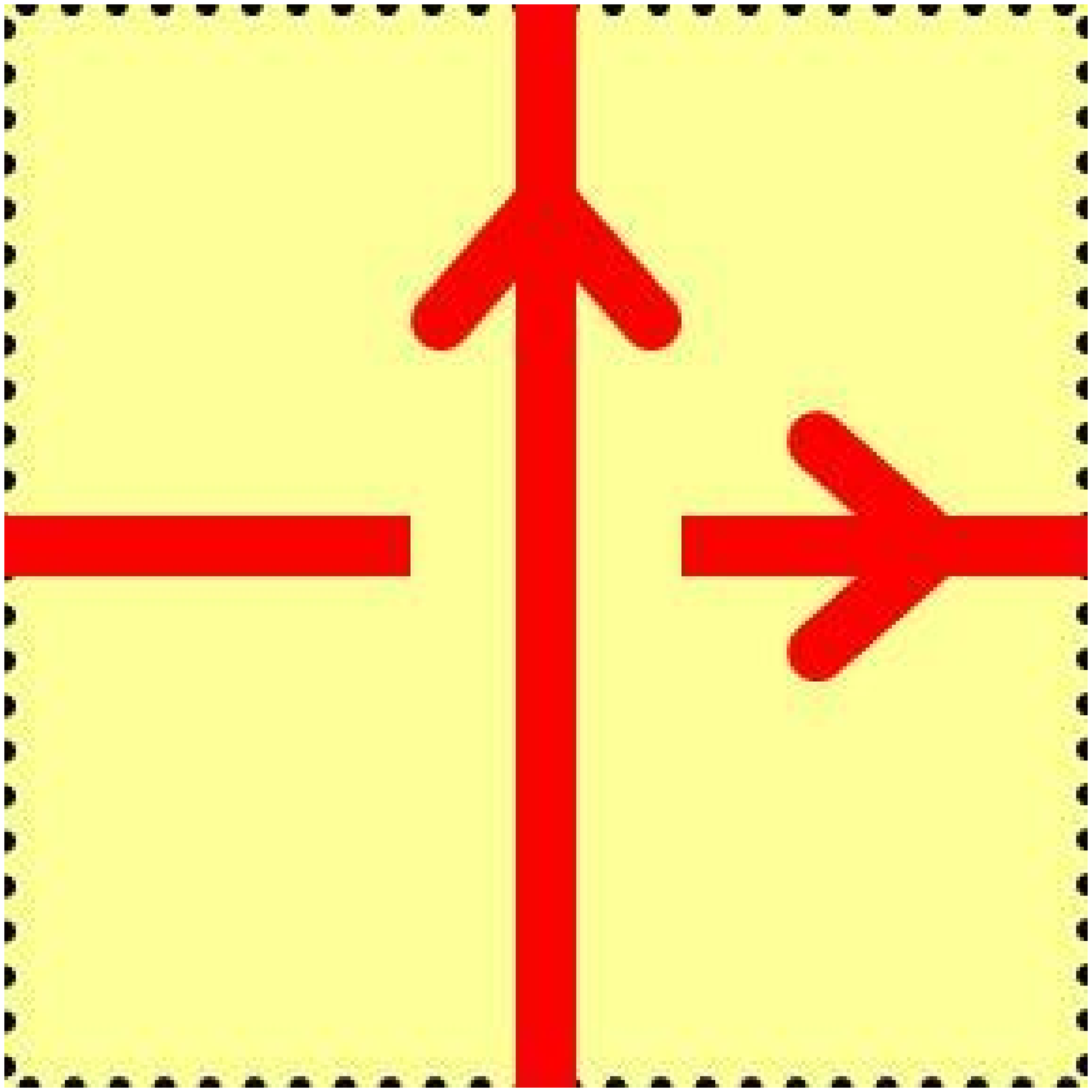}%
}%

\bigskip

\noindent called (\textbf{oriented}) \textbf{tiles}. \ We often will also
denote these tiles respectively by the symbols
\[
T_{0}^{(o)},T_{1}^{(o)},T_{2}^{(o)},\ldots,T_{28}^{(o)},
\]
Moreover, we will frequently omit the superscript `$(o)$' (standing for
`oriented') when it can be understood from context.

\bigskip

\begin{remark}
Please note that up to rotation there are exactly 9 oriented tiles. \ The
above oriented tiles are grouped according to rotational equivalence.
\end{remark}

\bigskip

\begin{definition}
Let $n$ be a positive integer. \ We define an \textbf{(oriented)} $\mathbf{n}%
$\textbf{-mosaic} as an $n\times n$ matrix $M=\left(  M_{ij}\right)  =\left(
T_{k\left(  i,j\right)  }\right)  $ of (oriented) tiles with rows and columns
indexed from $0$ to $n-1$. \ We let $\mathbb{M}^{(n)}$ also denote the
\textbf{set of oriented }$\mathbf{n}$\textbf{-mosaics}. \ 
\end{definition}

\bigskip

Two examples of oriented $4$-mosaics are shown are shown below:%
\[%
\begin{array}
[c]{cccc}%
{\includegraphics[
height=0.3269in,
width=0.3269in
]%
{ut00.ps}%
}%
&
{\includegraphics[
height=0.3269in,
width=0.3269in
]%
{ot01.ps}%
}%
&
{\includegraphics[
height=0.3269in,
width=0.3269in
]%
{ot08.ps}%
}%
&
{\includegraphics[
height=0.3269in,
width=0.3269in
]%
{ut00.ps}%
}%
\\%
{\includegraphics[
height=0.3269in,
width=0.3269in
]%
{ot12.ps}%
}%
&
{\includegraphics[
height=0.3269in,
width=0.3269in
]%
{ot22.ps}%
}%
&
{\includegraphics[
height=0.3269in,
width=0.3269in
]%
{ot08.ps}%
}%
&
{\includegraphics[
height=0.3269in,
width=0.3269in
]%
{ot09.ps}%
}%
\\%
{\includegraphics[
height=0.3269in,
width=0.3269in
]%
{ot27.ps}%
}%
&
{\includegraphics[
height=0.3269in,
width=0.3269in
]%
{ot11.ps}%
}%
&
{\includegraphics[
height=0.3269in,
width=0.3269in
]%
{ot15.ps}%
}%
&
{\includegraphics[
height=0.3269in,
width=0.3269in
]%
{ot06.ps}%
}%
\\%
{\includegraphics[
height=0.3269in,
width=0.3269in
]%
{ot07.ps}%
}%
&
{\includegraphics[
height=0.3269in,
width=0.3269in
]%
{ot04.ps}%
}%
&
{\includegraphics[
height=0.3269in,
width=0.3269in
]%
{ot12.ps}%
}%
&
{\includegraphics[
height=0.3269in,
width=0.3269in
]%
{ut00.ps}%
}%
\end{array}
\qquad\qquad\qquad%
\begin{array}
[c]{cccc}%
{\includegraphics[
height=0.3269in,
width=0.3269in
]%
{ut00.ps}%
}%
&
{\includegraphics[
height=0.3269in,
width=0.3269in
]%
{ot06.ps}%
}%
&
{\includegraphics[
height=0.3269in,
width=0.3269in
]%
{ot05.ps}%
}%
&
{\includegraphics[
height=0.3269in,
width=0.3269in
]%
{ut00.ps}%
}%
\\%
{\includegraphics[
height=0.3269in,
width=0.3269in
]%
{ot06.ps}%
}%
&
{\includegraphics[
height=0.3269in,
width=0.3269in
]%
{ot21.ps}%
}%
&
{\includegraphics[
height=0.3269in,
width=0.3269in
]%
{ot24.ps}%
}%
&
{\includegraphics[
height=0.3269in,
width=0.3269in
]%
{ot05.ps}%
}%
\\%
{\includegraphics[
height=0.3269in,
width=0.3269in
]%
{ot02.ps}%
}%
&
{\includegraphics[
height=0.3269in,
width=0.3269in
]%
{ot07.ps}%
}%
&
{\includegraphics[
height=0.3269in,
width=0.3269in
]%
{ot23.ps}%
}%
&
{\includegraphics[
height=0.3269in,
width=0.3269in
]%
{ot08.ps}%
}%
\\%
{\includegraphics[
height=0.3269in,
width=0.3269in
]%
{ot07.ps}%
}%
&
{\includegraphics[
height=0.3269in,
width=0.3269in
]%
{ot03.ps}%
}%
&
{\includegraphics[
height=0.3269in,
width=0.3269in
]%
{ot08.ps}%
}%
&
{\includegraphics[
height=0.3269in,
width=0.3269in
]%
{ut00.ps}%
}%
\end{array}
\]
\bigskip

A \textbf{connection point} of an oriented tile is defined as the midpoint of
a tile edge which is either the beginning or ending point of an oriented curve
drawn on the tile. \ We define the \textbf{sign} of a connection point as
\textbf{minus} $(-)$ or \textbf{plus} $(+)$ accordingly as it is either the
beginning point or the ending point of oriented tile curve.

\bigskip

We say that two tiles in an oriented mosaic are \textbf{contiguous} if they
lie immediately next to each other in either the same row or the same column.
\ An oriented tile within a oriented mosaic is said to be \textbf{suitably
connected} if each its connection points touches a connection point of
opposite sign of a contiguous tile.

\bigskip

We are now in a postiion to define what is meant by an oriented mosaic knot:

\begin{definition}
An \textbf{oriented knot }$\mathbf{n}$-\textbf{mosaic} is an oriented mosaic
in which all tile connection points are suitably connected. \ We also let
$\mathbb{K}^{(n)}$ denote the subset of $\mathbb{M}^{(n)}$ of all oriented
$\mathbf{n}$-mosaic knots.
\end{definition}

\bigskip

The remaining definitions are straight forward, and left to the reader.

\bigskip
\end{document}